\newtheorem{theorem}{Theorem}[section]
\newtheorem{lemma}[theorem]{Lemma}
\newtheorem{prop}[theorem]{Property}
\newtheorem{obs}[theorem]{Observation}
\newtheorem{corollary}[theorem]{Corollary}
\author{Paul Dorbec\affiliationmark{1}
  \and Antonio Gonz\'alez\affiliationmark{2}
  \and Claire Pennarun\affiliationmark{3}
  }
\title[Power domination in maximal planar graphs]{Power domination in maximal planar graphs}
\affiliation{
  Normandie Univ, UNICAEN, ENSICAEN, CNRS, GREYC, France\\
  Departamento de Didáctica de las Matemáticas, Universidad de Sevilla, Spain\\
  LaBRI, Univ. Bordeaux, France}
\keywords{power domination, propagation, maximal planar graph}
\begin{document}
\publicationdetails{21}{2019}{4}{18}{5127}
\maketitle
\begin{abstract}

   Power domination in graphs emerged from the problem of monitoring an electrical system by placing as few measurement devices in the system as possible.
    It corresponds to a variant of domination that includes the possibility of propagation.
    For measurement devices placed on a set $S$ of vertices of a graph $G$, the set of monitored vertices is initially the set $S$ together with all its neighbors.
    Then iteratively, whenever some monitored vertex $v$ has a single neighbor $u$ not yet monitored, $u$ gets monitored.
    A set $S$ is said to be a power dominating set of the graph $G$ if all vertices of $G$ eventually are monitored.
    The power domination number of a graph is the minimum size of a power dominating set.
    In this paper, we prove that any maximal planar graph of order $n \geq 6$ admits a power dominating set of size at most $\frac{n-2}{4}$.
\end{abstract}

\section{Introduction}

The notion of \emph{power domination} arose in the context of monitoring an electrical network~(\cite{baldwin_93, mili_90, phadke_86}), i.e., knowing the state of each component (e.g. the voltage magnitude at loads) by measuring some variables such as currents and voltages. The measurements are done by placing Phasor Measurement Units (PMUs) at selected locations.
PMUs monitor the state of the adjacent components, then with the use of electrical laws (such as Ohm's and Kirschoff's Laws), it is possible to determine the state of components further away in the network.
Since PMUs are costly, it is important to monitor a graph with as few PMU as possible. In this paper, we consider the problem of monitoring maximal planar graphs with few PMUs. Before getting further into technical details, we need the following graph definitions.
\smallskip

Let $G=(V(G),E(G))$ be a finite, simple, and undirected graph of order $n = |V (G)|$.
The \emph{open neighborhood} of a vertex $u\in V(G)$ is $N_G(u) = \{v \in V(G) \mid uv \in E(G)\}$, and its \emph{closed neighborhood} is $N_G[u] = N_G(u) \cup \{u\}$; the \emph{open} and \emph{closed neighborhood of} a subset of vertices $S$ is $N_G(S)=\bigcup_{v \in S} N_G(v)$
and $N_G[S]=S\cup N_G(S)$, respectively.
The subgraph of $G$ induced by $S$ is written $G[S]$ (the subscript $G$ is dropped from the notations when no confusion may arise).

The electrical network monitoring problem was transposed into graph-theore-tical terms by \cite{haynes_02}.
Originally, the definition of power domination ensured the monitoring of the edges as well as of the vertices, and contained many propagation rules.
Here, we consider an equivalent definition from~\cite{brueni_05} that only requires monitoring the vertices.
Given a graph $G$ and a set $S\subseteq V(G)$, we build a set $M_G(S)$ (or simply $M(S)$ when the graph $G$ is clear from the context) as follows:
at first, $M_G(S) = N[S]$, and then iteratively a vertex $u$ is added to $M_G(S)$ if $u$ has a neighbor $v$ in $M_G(S)$ such that $u$ is the only neighbor of $v$ not in $M_G(S)$ (we say that $v$ \emph{propagates} to $u$).
At the end of the process, we say that $M_G(S)$ is the set of vertices \emph{monitored} by $S$; the \emph{non-monitored} vertices are those of the set $V(G) \setminus M_G(S)$.
We say that $G$ is \emph{monitored} by $S$ when $M_G(S)=V(G)$ and, in that case, $S$ is said to be a \emph{power dominating set} of $G$. The minimum cardinality of such a set is the \emph{power domination number} of $G$, denoted by $\gamma_{\rm P}(G)$.

\medskip
The decision problem \textsc{Power Dominating Set} naturally associated to power domination (i.e., ``Given a graph $G$ and an integer $k$, does $G$ have a power dominating set of order at most $k$?'') was proven NP-complete, by a reduction from the 3-SAT problem~(\cite{haynes_02,liao_05}) (giving NP-completeness of the problem on bipartite graphs, chordal graphs and split graphs). A reduction from \textsc{Dominating Set} was also given~(\cite{guo_08, kneis_06}), that implies the NP-completeness when restricted to planar graphs or circle graphs.
However, polynomial algorithms were proposed to compute the power domination number of trees~(\cite{haynes_02,guo_08}), block graphs~(\cite{xu_06}), interval graphs~(\cite{liao_05}), and circular-arc graphs~(\cite{liao_05,liao_13}).

Concerning the parameter $\gamma_{\rm P}(G)$, tight upper bounds are also known for particular classes:
$\gamma_{\rm P}(G) \leq \frac{n}{3}$ if $G$ is connected (\cite{zhao_06}) or a tree (\cite{haynes_02}),
whereas cubic graphs satisfy $\gamma_{\rm P}(G) \leq \frac{n}{4}$~(\cite{dorbec_13}).
Furthermore, the exact value of $\gamma_{\rm P}(G)$ has been determined for regular grids and their generalizations:
square grid~(\cite{dorfling_06}) and other products of paths~(\cite{dorbec_08}), hexagonal grids~(\cite{ferrero_11}), as well as cylinders and tori~(\cite{barrera_11}).
The only known results for general planar graphs concern graphs with diameter two or three~(\cite{zhao_07}).

\medskip

A graph $G$ is a \emph{planar graph} if it admits a crossing-free embedding in
the plane.
When the addition to $G$ of any edge would result in a non-planar graph, $G$ is said to be a \emph{maximal planar graph}.
A planar graph $G$ together with a crossing-free embedding on the plane is called a \emph{plane graph}, or a \emph{triangulation}
when $G$ is a maximal planar graph.
For any subset $S\subseteq V(G)$, the graph $G[S]$
can be viewed as a plane graph with the embedding inherited from the embedding of $G$.
The only unbounded face is called the \emph{outer face} of $G$,
and the vertices of $G$ are called respectively \emph{exterior} or \emph{interior} depending on whether
they belong to the outer face or not.
We say that a subgraph of a triangulation $G$ is \emph{facial} if all of its faces but the outer face are also faces of $G$. In particular, we denote by $[uvw]$ a facial triangle formed by vertices $u$, $v$ and $w$ in $G$.
Note that power dominating sets are independent of the embedding of the graph as they only depend on vertex adjacencies. We only make use of the embedding of the graph in the proofs.

\medskip
The main result of this paper consists in the following theorem:

\begin{theorem} \label{th:main}
    If $G$ is a maximal planar graph of order $n\geq 6$, then $\gamma_{\rm P}(G) \leq \frac{n-2}{4}$.
\end{theorem}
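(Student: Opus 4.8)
The plan is to prove Theorem~\ref{th:main} by induction on $n$. Because power domination interacts badly with subgraphs — extra vertices can both enable and block propagation, so a power dominating set of $G$ need not restrict to one in an induced subgraph, nor conversely — the induction cannot be run on the class of maximal planar graphs alone: I would strengthen the statement to one about triangulations carrying a distinguished outer triangle (equivalently, triangulated configurations with a small fixed boundary), asking not only for a power dominating set of size about $\frac{n-2}{4}$ but for one that behaves well near the boundary, e.g. that contains, or at least dominates, a prescribed vertex of the outer triangle and that monitors the configuration without ever using the outer vertices as propagators. The point of this last requirement is that such a set keeps working when the configuration is glued to another one along the boundary, since the extra edges created there cannot block a propagation step. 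Getting the precise form of this strengthened statement right is, in my view, the heart of the matter.

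Granting such a statement, the inductive step divides into a few structural cases. If $G$ has a vertex $u$ of degree $3$, with $N(u)=\{x,y,z\}$, then $G-u$ is a triangulation of order $n-1$ (the three faces at $u$ merging into the triangle $[xyz]$); a power dominating set of $G-u$ that dominates $\{x,y,z\}$ still monitors $G$, because once all of $V(G-u)$ is monitored the vertex $x$ has $u$ as its only non-monitored neighbour and propagates to it, and since $G-u$ has fewer vertices the induction closes. So one may assume $\delta(G)\ge 4$, whence $G$ has a vertex of degree $4$ or $5$ by Euler's formula. If $G$ contains a separating triangle, choose one, $[xyz]$, whose interior is inclusion-minimal; the two triangulations it defines, $G_1$ (the innermost one, hence $4$-connected, with outer triangle $[xyz]$) and $G_2$ (on $n_2=n-n_1+3$ vertices), both have fewer than $n$ vertices. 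Applying the strengthened hypothesis to each and letting the two power dominating sets overlap in a single PMU placed on a common vertex of $[xyz]$ gives a power dominating set of $G$; the count comes out right since $\tfrac{n_1-2}{4}+\tfrac{n_2-2}{4}=\tfrac{n-2}{4}+\tfrac14$, so the one shared PMU is exactly the saving needed, and the ``no outer propagator'' clause is what guarantees that the union really monitors $G$.

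It remains to handle the case where $G$ is $4$-connected and $\delta(G)\ge 4$, so $G$ has a vertex $v$ of degree $4$ or $5$. Here I would study the configuration consisting of $v$, its link cycle $C$, and the part of $G$ just outside $C$: placing a PMU at $v$ monitors $N[v]$, and because $4$-connectivity rules out separating triangles through $v$ the chords lying inside $C$ are controlled, so one can argue that either propagation already reaches enough further vertices, or deleting $v$ and re-triangulating $C$ with suitable chords exposes a strictly smaller $4$-connected triangulation to recurse on, the few newly added edges being accounted for in the monitoring argument. The small or highly symmetric configurations that are irreducible under all these moves — the octahedron, and whatever the tight examples for $\frac{n-2}{4}$ turn out to be — are then verified directly and serve as base cases.

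The two steps I expect to be hardest are, first, pinning down the strengthened inductive statement so that every decomposition and gluing above is legitimate (the interaction of ``dominate a prescribed outer vertex'' with ``no outer propagator'' across several nested cuts is fiddly), and second, the $4$-connected case when $\delta(G)=5$, where no degree-$4$ vertex is available and the local reductions around a degree-$5$ vertex are the most delicate. Once the reductions are correctly set up, the numerical part is just linear bookkeeping on $\frac{n-2}{4}$ using $n_1+n_2=n+3$.
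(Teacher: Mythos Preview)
What you have written is a plan, not a proof, and the two steps you yourself flag as hardest are genuine gaps, not details. For the separating-triangle step, you need $S_1$ and $S_2$ to overlap in a vertex of $[xyz]$, so the strengthened hypothesis must let you \emph{prescribe} which outer vertex lies in the power dominating set, while simultaneously forbidding outer vertices from ever propagating. That combination is already suspect on the tight examples: for the triakis tetrahedron ($n=10$, $\gamma_P=2$) you would need, for every choice of outer face and every prescribed outer vertex $a$, a size-$2$ power dominating set containing $a$ and monitoring the graph using only interior propagations; you give no argument for this, and if it fails for a single face/vertex your induction does not close. The degree-$3$ reduction forces a specific outer triangle $N(u)$ on $G-u$, so the same issue arises there. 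In the $4$-connected $\delta\ge5$ case you offer no concrete reduction at all: ``deleting $v$ and re-triangulating $C$ with suitable chords'' adds edges that can block propagation, and you neither show that the set obtained by induction still power dominates $G$ nor identify the irreducible configurations beyond the octahedron.

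The paper takes a completely different route and sidesteps all of this. There is no induction on $n$ and no boundary hypothesis. Instead, the authors first classify the local obstructions to propagation---vertices they call \emph{b-vertices}, together with facial octahedra---and select a small set $S_1$ that monitors all of them, at a cost of one selected vertex per five monitored. They then greedily enlarge $S_1$ while any vertex can add at least four newly monitored vertices. The real work is a structural lemma (Lemma~\ref{lem:urien-wqt}) showing that once the greedy step halts, every unmonitored component sits inside one of seven explicit ``splitting structures'', each consisting of three unmonitored vertices separating two smaller induced triangulations $G_1,G_2$; one further vertex per structure completes the monitoring, and an invariant they call Property~$(*)$ is carried through the recursion to keep the total under $\frac{n-2}{4}$. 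So the recursion is on these very particular splitting structures rather than on arbitrary separating triangles, and no ``no outer propagator'' clause is ever needed.
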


The bound of Theorem~\ref{th:main} is tight for graphs on six vertices.
We also know of one graph on ten vertices for which this bound is tight, the \emph{triakis tetrahedron} drawn in Figure~\ref{fig:triakis_seul}.
\begin{figure}
\centering
\includegraphics[scale=0.7,page=2]{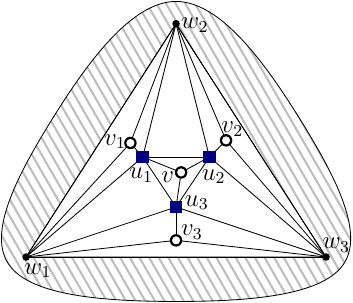}
\caption{The triakis tetrahedron, having ten vertices and power domination number two.}
\label{fig:triakis_seul}
\end{figure}
To propose a general family of maximal planar graphs that have large power domination number,
we use the configurations of Figure~\ref{fig:bad-guys}.
Observe that if one of these configurations $H$ is a facial subgraph of $G$,
then any power dominating set of $G$ contains one of the vertices of $H$.
Otherwise, even if all the exterior vertices are monitored,
they can not propagate to any of the interior vertices of $H$.
Thus $\gamma_{\rm P}(G)$ is at least the number of disjoint facial special configurations in $G$.
Taking many disjoint copies of the two first configurations (that have six vertices)
and then completing the graph into a triangulation
by arbitrarily adding edges between external vertices of the configurations (see Figure~\ref{fig:tri_octa}),
we obtain a family of graphs that have power domination number $\frac{n}{6}$.
Note that this construction is similar to the construction for classical domination given in \cite{matheson_96} reaching the bound $\gamma(G)=\frac{n}{4}$.
As a consequence, and thanks to Theorem~\ref{th:main}, we also get the following result:

\begin{theorem}
    For $n \geq 6$, every maximal planar graph with $n$ vertices has a power dominating set containing at most $\alpha(n)$ vertices, with $\frac{n}{6} \leq \alpha(n) \leq \frac{n-2}{4}$.
\end{theorem}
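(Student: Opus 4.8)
The plan is to read the statement as a two‑sided estimate on the extremal quantity $\alpha(n) := \max\{\gp(G) : G \text{ maximal planar}, |V(G)|=n\}$; the displayed inequality $\frac{n}{6}\leq\alpha(n)\leq\frac{n-2}{4}$ is then what has to be shown, while the clause ``every maximal planar graph with $n$ vertices has a power dominating set containing at most $\alpha(n)$ vertices'' holds by the definition of $\alpha$. The upper bound $\alpha(n)\leq\frac{n-2}{4}$ is exactly Theorem~\ref{th:main}, and $\frac{n}{6}\leq\frac{n-2}{4}$ already for $n\geq6$, so the interval is nonempty; hence the real work is the lower bound $\alpha(n)\geq\frac{n}{6}$, which I would obtain by turning the construction sketched just above the statement into a rigorous one: for every $k\geq1$, a maximal planar graph on $6k$ vertices with power domination number at least $k$.

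The first step is to pin down the property of the two six‑vertex configurations $H$ of Figure~\ref{fig:bad-guys} that makes them ``forcing'': in each of them, every exterior vertex has at least two interior neighbours. Using this I would prove, by induction on the steps of the monitoring process, that whenever $H$ is a facial subgraph of a triangulation $G$ and $S\subseteq V(G)$ with $S\cap V(H)=\emptyset$, no interior vertex of $H$ is ever monitored: such a vertex is not in $N[S]$ because, $H$ being facial, all its $G$‑neighbours lie in $H$; and no exterior vertex of $H$ can ever propagate to an interior one, since so long as no interior vertex of $H$ has been monitored it still has at least two unmonitored neighbours. It follows that every power dominating set of $G$ contains a vertex of $H$, and hence --- the configurations being vertex‑disjoint --- if $H_1,\dots,H_k$ are pairwise disjoint facial copies then $\gp(G)\geq k$.

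Next I would build such a $G$: embed $k$ disjoint copies $H_1,\dots,H_k$ of a six‑vertex configuration disjointly on the sphere, and triangulate the complementary region --- topologically a sphere with $k$ triangular holes --- using only its $3k$ boundary vertices, that is, adding edges only between exterior vertices of distinct copies (as in Figure~\ref{fig:tri_octa}). This yields a triangulation $G$ on $n=6k$ vertices in which each $H_i$ is still facial, so $\gp(G)\geq k=\frac{n}{6}$ by the previous step; together with Theorem~\ref{th:main} this establishes $\frac{n}{6}\leq\alpha(n)\leq\frac{n-2}{4}$ whenever $6\mid n$. For the other residues one keeps the same idea --- inserting a bounded number of degree‑three vertices in a face, or replacing one six‑vertex gadget by a slightly larger forcing configuration --- to get the analogous statement for all $n\geq6$, the point being that the six‑vertex configurations are the smallest that force a vertex into every power dominating set, so the constant $\frac16$ cannot be improved.

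The step I expect to be the crux is the inductive claim that propagation cannot enter the interior of a facial copy of $H$: the care needed is to verify that the ``at least two unmonitored neighbours'' condition really does persist at every step of the process, which is exactly why it must be demanded of each exterior vertex individually and not merely on average. Checking that the region between the gadgets can always be triangulated without inserting a chord inside some $H_i$ is routine once the embedding is fixed.
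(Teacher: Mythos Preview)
Your approach is essentially identical to the paper's: the upper bound is Theorem~\ref{th:main}, and the lower bound comes from the same construction of disjoint facial copies of the six-vertex configurations of Figure~\ref{fig:bad-guys}, each forcing a vertex into any power dominating set because no exterior vertex can propagate into the interior. The paper asserts this last fact without the inductive justification you supply, and, like your proposal, it only exhibits the construction explicitly when $6\mid n$.
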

Determining the best possible value of $\alpha(n)$ remains an open problem.

\begin{figure}[ht]
    \centering
    \includegraphics[scale=0.7]{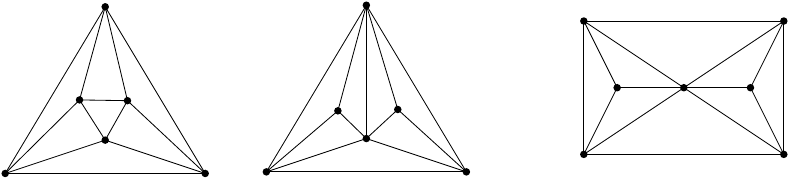}
    \caption{The good, the bad and the ugly configurations in a triangulation.}
    \label{fig:bad-guys}
\end{figure}

\begin{figure}[ht]
    \centering
    \includegraphics[scale=0.9]{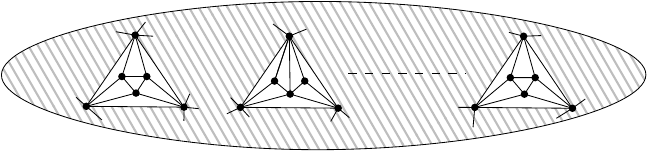}
    \caption{A class of maximal planar graphs for which $\gamma_{\rm P}(G)=\frac{n}{6}$. The hatched area is triangulated arbitrarily.}
    \label{fig:tri_octa}
\end{figure}

The proof of Theorem~\ref{th:main} is done in three distinct steps, each of them described in a separate algorithm in Section~\ref{sec:algo}.
 The first algorithm deals with the special configurations formed by overlapping configurations from Figure~\ref{fig:bad-guys}.
These special configurations are characterized in Section~\ref{sec:b-vertices}.
The end of the proof relies on a final Lemma that is proved in Section~\ref{sec:proof}.

\section{Identifying bad guys}\label{sec:b-vertices}

Our algorithm deals first with some special configurations, that are the possible intersections of the configurations from Figure~\ref{fig:bad-guys}. We here characterize these special configurations.
Note that a facial octahedron (third configuration of Figure~\ref{fig:bad-guys}) may only share vertices of its outer face with other configurations.
We thus focus on the other two configurations.

We call \emph{3-vertex} a vertex of $G$ with degree 3, and therefore whose neighborhood induces a $K_4$.
A \emph{b-vertex} is any vertex $u \in V(G)$ with exactly two 3-neighbors $v$ and $v'$,
and such that $N[u] = N[v] \cup N[v']$.
Note that b-vertices have degree at most six and
their neighborhood necessarily induces one of the subgraphs of Figure~\ref{fig:b-vertex}.
In all figures of this section, b-vertices are depicted with blue squares, and 3-vertices are drawn white.

\begin{figure}[h]
\centering
\includegraphics[scale=0.8]{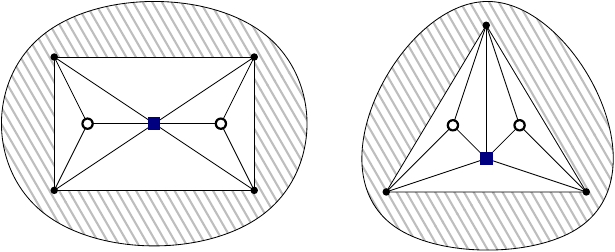}
\caption{The two possible neighborhoods of a b-vertex $v$.} 
\label{fig:b-vertex}
\end{figure}

\begin{obs} \label{obs:adj}
Any two b-vertices $u,u' \in V(G)$ are adjacent if and only if there exists a 3-vertex $v \in N(u) \cap N(u')$.
\end{obs}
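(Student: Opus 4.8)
The plan is to prove the two implications separately, as they are of rather different difficulty. The reverse implication is immediate: if some 3-vertex $v$ belongs to $N(u)\cap N(u')$, then because $\deg v = 3$ its neighborhood induces a $K_4$, i.e.\ $G[N(v)]$ is a triangle; since $u$ and $u'$ are two of the three vertices of $N(v)$, they are adjacent.

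For the forward implication, suppose $u\sim u'$ and let $v_1,v_2$ be the two 3-neighbors of $u$, so that $N[u]=N[v_1]\cup N[v_2]$. Then $u'\in N(u)\subseteq N[u]$ and $u'\neq u$, hence $u'\in N[v_1]$ or $u'\in N[v_2]$; after renaming, assume $u'\in N[v_1]=\{v_1\}\cup N(v_1)$. If $u'\in N(v_1)$, then $v_1$ is a 3-vertex adjacent to both $u$ and $u'$, and we are done; so the only remaining case is $u'=v_1$, i.e.\ $u'$ is itself a 3-vertex. In that case $N(u')$ has exactly three elements, one of them being $u$ (as $u\sim u'$); write $N(u')=\{u,p,q\}$. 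Since $\deg u'=3$, the graph $G[N(u')]$ is a triangle, so $u\sim p$ and $u\sim q$. Finally, $u'$ being a b-vertex has exactly two 3-neighbors, which necessarily lie in $N(u')=\{u,p,q\}$; being two distinct vertices they cannot both equal $u$, so at least one of them — say $p$ — differs from $u$, and then $p$ is a 3-vertex lying in $N(u)\cap N(u')$, as required.

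The step I expect to be the only real obstacle is the degenerate case $u'=v_1$: one must resist assuming that b-vertices have degree at least $4$ (which would make this case vacuous) and instead argue it directly, as above, using the triangle structure on $N(u')$ forced by $\deg u'=3$ together with the ``exactly two 3-neighbors'' clause in the definition of a b-vertex. Everything else is routine bookkeeping with closed neighborhoods; in particular, the argument uses only vertex adjacencies and the fact that a degree-$3$ vertex of a triangulation has a $K_4$ closed neighborhood, so it needs neither $n\geq 6$ nor $3$-connectivity.
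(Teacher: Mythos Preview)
Your proof is correct and follows the same argument as the paper: the reverse direction via the triangle on $N(v)$, the forward via $u'\in N[u]=N[v_1]\cup N[v_2]$. Your explicit treatment of the degenerate case $u'=v_1$ is a harmless addition; the paper omits it because, by the classification in Figure~\ref{fig:b-vertex}, a b-vertex in the triangulations under consideration has degree at least four, so that case does not arise.
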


\begin{proof}
  By definition of b-vertices, if $u'$ is adjacent to $u$, it is also adjacent to a 3-vertex $v$ adjacent to $u$, and so $u$ and $u'$ have a common neighbor of degree 3.
  Moreover, if $v$ has degree 3, all its of $v$ are pairwise adjacent, and thus two b-vertices $u$ and $u'$ that have $v$ as a common neighbor are adjacent.
\end{proof}

\begin{lemma}\label{lem:shared-3-neighbors}
  If $G$ contains two 3-vertices $v_1,v_2$ with two common b-neighbors, then $G$ is isomorphic to one of the graphs depicted in Figure~\ref{fig:b-vertex_common}.
\end{lemma}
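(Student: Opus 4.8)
The plan is to exploit the rigidity of the local structure around a 3-vertex. Let $v_1,v_2$ be two 3-vertices of $G$ with two common b-neighbors, say $u$ and $u'$. By Observation~\ref{obs:adj}, since $u$ and $u'$ share the 3-vertex $v_1$ (and also $v_2$) as a common neighbor, $u$ and $u'$ are adjacent. Since $v_1$ has degree $3$ and is adjacent to both $u$ and $u'$, its third neighbor $w_1$ is forced, and $[u u' v_1]$ together with $[u w_1 v_1]$, $[u' w_1 v_1]$ are facial triangles (the neighborhood of $v_1$ induces a $K_4$). The same holds for $v_2$ with some third neighbor $w_2$. So I would first argue that $\{u,u'\}$ together with $v_1,v_2$ already pin down most of the picture: $u,u'$ each have $v_1$ and $v_2$ among their 3-neighbors, and by the definition of a b-vertex ($N[u]=N[v_1]\cup N[v_2]$, and likewise for $u'$) the closed neighborhoods of $u$ and of $u'$ are both contained in $\{u,u',v_1,v_2,w_1,w_2\}$.

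Next I would pin down $w_1$ and $w_2$. Since $N[u]\subseteq\{u,u',v_1,v_2,w_1,w_2\}$ and $u$ must be adjacent to $w_1$ (third neighbor of $v_1$) and to $w_2$ (third neighbor of $v_2$), and similarly for $u'$, we get that $u$ and $u'$ are adjacent to everything in the six-element vertex set. The two cases of Figure~\ref{fig:b-vertex} correspond to whether $w_1=w_2$ or $w_1\ne w_2$. If $w_1=w_2=:w$, then $u,u',w$ are pairwise adjacent and each is adjacent to $v_1$ and $v_2$, giving a small explicit graph on five vertices; a short planarity/maximality check (the outer face must be a triangle, all other faces triangles, $3n-6$ edges) shows $G$ is exactly that graph. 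If $w_1\ne w_2$, I would show $w_1 w_2$ must (or must not, depending on the figure) be an edge by counting faces around the $4$-cycle $u v_1 u' v_2$ or by the fact that the only faces incident to $v_1$ are the three triangles of its $K_4$, forcing the cyclic order around $u$ and $u'$; this yields the second explicit graph in Figure~\ref{fig:b-vertex_common}. In each case the vertex count is small ($5$ or $6$), so once the adjacencies are forced, $G$ has no room for further vertices: any additional vertex would have to lie in some face, but all faces are already triangles whose vertices have their full neighborhoods determined.

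The main obstacle I anticipate is the bookkeeping needed to show that there are no vertices of $G$ beyond the six named ones — i.e. that $G$ is \emph{isomorphic} to one of the pictured graphs rather than merely containing them as a subgraph. The key leverage is that $u$ and $u'$ are b-vertices, so their closed neighborhoods are small and \emph{completely} determined; combined with the fact that $v_1,v_2$ are degree-$3$ vertices sitting inside those neighborhoods, one sees that the union $N[u]\cup N[u']$ is a closed region of the triangulation whose boundary on the outer side consists only of edges among $\{w_1,w_2\}$ (at most one edge). If that boundary is a single edge or a single vertex, the triangulation has no outer region left, so $V(G)=N[u]\cup N[u']$ and the isomorphism type is forced. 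I would organize the write-up as: (1) forced adjacencies around $v_1,v_2$; (2) the two cases $w_1=w_2$ vs.\ $w_1\ne w_2$; (3) in each case, a finishing argument that $G$ has no other vertices, hence is one of the two graphs of Figure~\ref{fig:b-vertex_common}.
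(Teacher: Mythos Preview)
Your proposal is correct and follows essentially the same approach as the paper: case-split on whether the third neighbors $w_1,w_2$ of $v_1,v_2$ coincide, then argue that the resulting small subgraph must be all of $G$. The paper phrases the finishing step slightly differently---it observes that every triangle in the picture is incident to a 3-vertex or a b-vertex and is therefore facial, leaving no room for further vertices---while you reach the same conclusion by bounding $N[u]\cup N[u']$ via the defining equation $N[u]=N[v_1]\cup N[v_2]$; these are two ways of saying the same thing.
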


\begin{proof}
  Either $v_1$ and $v_2$ have three common neighbors (inducing the first subgraph), or they have distinct third neighbors (inducing the second subgraph). In the first subgraph, all triangles are incident to a 3-vertex, so they are facial and there is no possibility for more vertices in the graph. In the second subgraph, the only faces not incident to a 3-vertex are incident to  a b-vertex, which can not have other neighbors. Again, all triangles must then be facial.
\end{proof}

\begin{figure}[ht]
  \centering
  \includegraphics[scale=0.8]{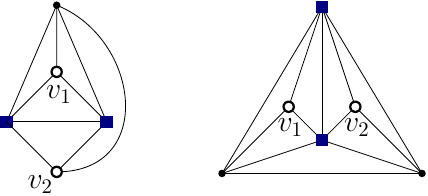}
  \caption{The two
  possibilities for $G$
  if two 3-vertices $v_1$ and $v_2$ have two common b-neighbors. In both cases, $\gamma_{\rm P}(G)=1$.}
  \label{fig:b-vertex_common}
\end{figure}

\begin{lemma}\label{lem:3-vertex_in_blue_triangle}
  If all the neighbors of a 3-vertex are b-vertices, then $G$ is isomorphic to a graph depicted in Figure~\ref{fig:b-vertex_common} or these vertices belong to a facial triakis tetrahedron as depicted in Figure~\ref{fig:triakis}.
\end{lemma}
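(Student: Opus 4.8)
The plan is to start from a 3-vertex $v$ whose three neighbors $u_1,u_2,u_3$ are all b-vertices. Since $v$ has degree $3$, the triangle $[u_1u_2u_3]$ is present and, together with $v$, forms a facial $K_4$. By Observation~\ref{obs:adj}, each pair $u_i,u_j$ is adjacent because they share the 3-neighbor $v$; so in fact all of $u_1,u_2,u_3$ are pairwise adjacent b-vertices. The first move is to use the definition of a b-vertex: each $u_i$ has exactly two 3-neighbors, one of which is $v$. Call the other 3-neighbor of $u_i$ by $w_i$. The key dichotomy is whether the $w_i$ are all distinct, or whether some coincide.

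If two of them coincide, say $w_1=w_2=:w$, then $w$ is a 3-vertex with $u_1$ and $u_2$ as common b-neighbors, so Lemma~\ref{lem:shared-3-neighbors} applies and $G$ is one of the two graphs of Figure~\ref{fig:b-vertex_common}; this is the first conclusion. (One should double-check that $v$ and $w$ are genuinely distinct 3-vertices with two common b-neighbors $u_1,u_2$ — they are, since $v\neq w_1$ by definition of $w_1$ — so the hypothesis of Lemma~\ref{lem:shared-3-neighbors} is met.) So we may assume $w_1,w_2,w_3$ are pairwise distinct. Now examine the local picture around each edge $u_iu_j$: it lies in two triangular faces, one being $[vu_iu_j]$. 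Using the two possible neighborhoods of a b-vertex from Figure~\ref{fig:b-vertex} and the fact that $N[u_i]=N[v]\cup N[w_i]$, I would argue that the third neighbor $w_i$ of $u_i$ must be the apex of the other face on the edge $u_ju_k$ (where $\{i,j,k\}=\{1,2,3\}$), i.e. $w_i$ is adjacent to both $u_j$ and $u_k$. Indeed $w_i$ has degree $3$ and its three neighbors induce a $K_4$, and the only available vertices to play that role, given that $u_i$'s neighborhood is exactly $N[v]\cup N[w_i]=\{v,u_j,u_k\}\cup N[w_i]$, are $u_j$ and $u_k$. This forces the configuration to close up into exactly the facial triakis tetrahedron of Figure~\ref{fig:triakis}: four "outer" vertices $u_1,u_2,u_3$ and one more (the fourth face of the tetrahedron $[u_1u_2u_3]$-complex), each of its four faces subdivided by a 3-vertex ($v$ and the $w_i$), and no room for further vertices because every triangle so created is incident to a 3-vertex and hence facial.

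The main obstacle will be the bookkeeping in this last step: carefully ruling out that some $w_i$ could be placed "outside" the triangle $[u_1u_2u_3]$ rather than in one of its faces, and confirming that once three of the four faces of the $K_4$ spanned by $\{u_1,u_2,u_3,x\}$ (for the appropriate fourth vertex $x$) are subdivided, the fourth is forced as well and the whole graph is exactly the triakis tetrahedron with nothing else attached. This is where the "all triangles must be facial" argument — already used in the proof of Lemma~\ref{lem:shared-3-neighbors} — has to be reprised: any triangle incident to a degree-$3$ vertex is facial, and here every triangle turns out to be of this type, so $G$ can have no additional vertices or edges. I expect the argument to be a short case analysis on the at most six possible neighborhoods (two per b-vertex from Figure~\ref{fig:b-vertex}), most of which collapse immediately by the distinctness of the $w_i$ and planarity.
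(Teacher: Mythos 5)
Your opening moves match the paper's: the three b-neighbors $u_1,u_2,u_3$ of $v$ form a triangle, each $u_i$ has a second 3-neighbor $w_i$, and when two of the $w_i$ coincide you correctly reduce to Lemma~\ref{lem:shared-3-neighbors}. The case where the $w_i$ are pairwise distinct, however, breaks at the key structural step. You claim that $w_i$ must be the apex of the face on the edge $u_ju_k$ opposite $v$, i.e.\ that $w_i$ is adjacent to $u_j$ and $u_k$ (and, being a neighbor of $u_i$, to $u_i$ as well). This cannot happen: it would give $u_j$ at least three 3-neighbors ($v$, $w_j$ and $w_i$), contradicting the requirement that a b-vertex have \emph{exactly two} 3-neighbors, and it would also make $\{v,w_1,w_2,w_3\}$ and $\{u_1,u_2,u_3\}$ the two sides of a $K_{3,4}$, contradicting planarity. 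The same objection defeats your target picture of an eight-vertex graph obtained by subdividing the four faces of a $K_4$ on $\{u_1,u_2,u_3,x\}$: there each $u_i$ lies on three subdivided faces, hence has three 3-neighbors and is not a b-vertex. Note that the graph the paper calls the triakis tetrahedron (Figure~\ref{fig:triakis}, and Figure~\ref{fig:triakis_seul} with its caption) has \emph{ten} vertices, not eight.

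The correct continuation runs differently. The two neighbors of $w_1$ other than $u_1$ are two \emph{new} vertices, both adjacent to $u_1$ (since $N[w_1]$ induces a $K_4$), and together with $v,u_2,u_3,w_1$ they exhaust $N(u_1)$ because $N[u_1]=N[v]\cup N[w_1]$ forces $\deg(u_1)=6$. Planarity then puts one of these new vertices on the face over $u_1u_2$ and the other over $u_1u_3$; the b-vertex condition on $u_2$ (resp.\ $u_3$) forces that new vertex to be adjacent to $w_2$ (resp.\ $w_3$); and a third new vertex common to $w_2$ and $w_3$ closes the configuration. One ends up with six b-vertices forming an octahedron and four 3-vertices placed in alternating faces, which is the ten-vertex graph of Figure~\ref{fig:triakis}. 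So your dichotomy is the right one and the coinciding case is fine, but the structural claim driving the main case is false and leads to a configuration that is neither planar-consistent nor the one asserted in the statement.
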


\begin{proof}
  Let $v$ be a 3-vertex adjacent to three b-vertices $u_1$, $u_2$ and $u_3$, which necessarily form a triangle.
  By definition of a b-vertex, each $u_i$ has another 3-neighbor $v_i$.
  If the vertices $v_i$ are not all distinct, then there exist two b-vertices sharing two adjacent 3-vertices,
  and Lemma~\ref{lem:shared-3-neighbors} concludes.
  So assume the $v_i$ are distinct.
  Let $w_1$ and $w_2$ be the neighbors of $v_1$ distinct from $u_1$ (which are both adjacent to $u_1$).
  Since $u_1$ may not have any other neighbor, we infer without loss of generality that $w_2$ is adjacent to $u_2$ (and $w_1$ to $u_3$), and therefore that $w_2$ is also adjacent to $v_2$ (and $w_1$ to $v_3$).
  Similarly, $v_2$ and $v_3$ must have some vertex $w_3$ as a common neighbor, also adjacent to $u_2$ and $u_3$.
  Now, since the neighborhoods of 3-vertices and b-vertices are fully determined, we get a facial triakis tetrahedron, as depicted in Figure~\ref{fig:triakis}.
\end{proof}

\begin{figure}[ht]
  \centering
  \includegraphics[scale=0.8]{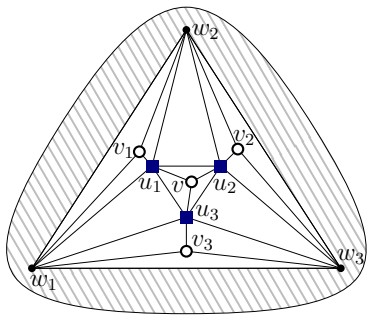}
  \caption{A facial triakis tetrahedron.}
  \label{fig:triakis}
\end{figure}

Observe in particular that if a b-vertex has three adjacent b-vertices, then by Observation~\ref{obs:adj}, we are in the case of Lemma~\ref{lem:3-vertex_in_blue_triangle} (and the graph is a triakis tetrahedron).

\begin{lemma}\label{lem:facial_blue_triangle}
  If $G$ contains three b-vertices forming a three cycle,
  then either Lemma~\ref{lem:3-vertex_in_blue_triangle} applies,
  or $G$ contains the  first configuration depicted in Figure~\ref{fig:b-vertex_C3},
  or it is isomorphic to one of the last two graphs depicted in Figure~\ref{fig:b-vertex_C3}.
\end{lemma}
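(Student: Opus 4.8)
The plan is to proceed by the same local, case-analytic style as the preceding lemmas, exhausting the possible configurations around a triangle of b-vertices. Let $u_1, u_2, u_3$ be b-vertices forming a 3-cycle. By Observation~\ref{obs:adj}, for each edge $u_iu_j$ there is a common 3-neighbor; call it $t_{ij}$. If two of these 3-vertices coincide, or if any of the $u_i$ has three b-neighbors, or if some 3-vertex has all b-neighbors, then Lemma~\ref{lem:shared-3-neighbors} or Lemma~\ref{lem:3-vertex_in_blue_triangle} applies (the remark after Lemma~\ref{lem:3-vertex_in_blue_triangle} handles the ``three adjacent b-vertices'' case), and we are done. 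So I would assume from now on that $t_{12}, t_{13}, t_{23}$ are three distinct 3-vertices, no $u_i$ has a third b-neighbor, and no 3-vertex is surrounded only by b-vertices.

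Next I would pin down the neighborhoods. Each b-vertex $u_i$ has exactly two 3-neighbors and $N[u_i]$ is the union of their closed neighborhoods, so $u_i$ has degree at most $6$ and looks like one of the two pictures in Figure~\ref{fig:b-vertex}. The triangle $u_1u_2u_3$ is either a facial triangle of $G$ or it is separating. First handle the facial case: the $t_{ij}$ sit on the ``outside'' of the triangle, and I would argue that the six vertices $u_1,u_2,u_3,t_{12},t_{13},t_{23}$, together with the forced completions of the $t_{ij}$-neighborhoods and the $u_i$-neighborhoods, close up into exactly the first configuration of Figure~\ref{fig:b-vertex_C3} (when the forced extra vertices can still be glued to further vertices of $G$) or into one of the two closed graphs of that figure (when every triangle produced is incident to a 3-vertex or to a saturated b-vertex, forcing all triangles to be facial and the graph to be finite, exactly as in the proofs of Lemmas~\ref{lem:shared-3-neighbors} and~\ref{lem:3-vertex_in_blue_triangle}). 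Then handle the separating case: the interior of the triangle $u_1u_2u_3$ must be triangulated, but since each $u_i$ already has its neighborhood saturated by its two 3-neighbors and by the other two b-vertices, there is essentially no room inside, and I would show this either reduces to a previous lemma or again forces one of the listed graphs.

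The key mechanical step is the ``no new neighbors'' propagation: whenever a configuration forces a triangle all of whose vertices have saturated neighborhoods (a 3-vertex, or a degree-$\le 6$ b-vertex whose neighborhood is already fully described), that triangle must be facial, and repeatedly applying this either terminates the graph (giving one of the finitely many exceptional graphs) or leaves a single ``free'' boundary where $G$ may be extended (giving the first configuration of Figure~\ref{fig:b-vertex_C3} as a facial subgraph). I would organize the argument by how many of the $t_{ij}$ have a third b-neighbor or how the third neighbors of the $t_{ij}$ interact, mirroring the $w_1,w_2,w_3$ bookkeeping in the proof of Lemma~\ref{lem:3-vertex_in_blue_triangle}.

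The main obstacle I expect is the combinatorial bookkeeping: a triangle of b-vertices has more internal structure than a single 3-vertex with three b-neighbors, so there are more sub-cases for how the three private 3-neighbors $t_{ij}$ and their private neighbors are shared or not, and one must be careful (i) not to double-count isomorphic configurations, (ii) to correctly separate the case where the graph closes up completely from the case where an external face of the configuration can still be glued to the rest of $G$, and (iii) to correctly reduce degenerate overlaps to Lemmas~\ref{lem:shared-3-neighbors} and~\ref{lem:3-vertex_in_blue_triangle}. Keeping the embedding fixed and tracking which faces are already forced to be facial should keep the case analysis finite and manageable.
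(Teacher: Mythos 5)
Your overall strategy is the paper's: reduce the degenerate overlaps to Lemmas~\ref{lem:shared-3-neighbors} and~\ref{lem:3-vertex_in_blue_triangle} (in particular, a common 3-neighbor of all three $u_i$ is exactly the case of Lemma~\ref{lem:3-vertex_in_blue_triangle}), then use Observation~\ref{obs:adj} to get three distinct pairwise 3-neighbors $v_1,v_2,v_3$ with third neighbors $z_1,z_2,z_3$, note that $N(u_1)$ is then completely determined as $\{u_2,u_3,v_1,v_3,z_1,z_3\}$, and read off the three outcomes according to how the $z_i$ coincide. So far this matches the paper.

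The gap is in your treatment of the separating case, which is precisely the one step of the proof that needs an actual argument rather than bookkeeping. You write that if $(u_1u_2u_3)$ is separating then ``there is essentially no room inside'' and that you ``would show'' this reduces to a previous lemma or forces one of the listed graphs; but the correct conclusion is that this case is outright impossible, and ``no room inside'' is not quite the reason --- the cycle $(u_1u_2u_3)$ could a priori have some of the $v_i,z_i$ on each side while every $u_i$ still has its six determined neighbors. The paper's argument is concrete: if $(u_1u_2u_3)$ separated, say, $z_1$ from $z_3$, then the rotation of the six neighbors around $u_1$ forces $z_3$ to be adjacent to $u_2$ and $z_1$ to $u_3$; since $u_2$ is a b-vertex with $N[u_2]=N[v_1]\cup N[v_2]$ and $z_3\notin N[v_1]$, the 3-vertex $v_2$ would have to absorb both $z_1$ and $z_3$ as neighbors, which is impossible. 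This is what shows $[u_1u_2u_3]$ is facial with all the $z_i$ on one side, and also what forces distinct $z_i$ to be pairwise adjacent (since $u_1$ has no further neighbors), yielding exactly the first configuration of Figure~\ref{fig:b-vertex_C3} or, when $z_i$ coincide, one of the last two graphs. Your facial-case ``closing up'' discussion is consistent with this, but without the separation argument the case analysis is not closed.
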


\begin{proof}
  Let $u_1,u_2,u_3$ be three b-vertices forming a cycle. If they have a common 3-neighbor, then Lemma~\ref{lem:3-vertex_in_blue_triangle} applies, so assume they do not.
  By Observation~\ref{obs:adj}, every two of these vertices have a 3-vertex as a common neighbor, and they are distinct by hypothesis.
  Let $v_1,v_2,v_3$ be the 3-vertices adjacent respectively to $u_1$ and $u_2$,
  $u_2$ and $u_3$, and $u_1$ and $u_3$, and let $z_1,z_2,z_3$ be the (not necessarily distinct) third neighbors of respectively $v_1$, $v_2$ and $v_3$.
  Suppose two $z_i$ are distinct, say $z_1$ and $z_3$, and observe that the neighbors of $u_1$ are exactly $\{u_2,u_3,v_1,v_3,z_1,z_3\}$.
  Therefore, if $(u_1u_2u_3)$ separates $z_1$ from $z_3$, then $z_3$ is adjacent to $u_2$ and $z_1$ is adjacent to $u_3$.
  Now, since $u_2$ is a b-vertex, then $v_2$ is adjacent to $z_3$ and $z_1$, a contradiction.
  So $[u_1u_2u_3]$ does not separate any two $z_i$ and is facial.
  Moreover, if say $z_1$ and $z_3$ are distinct, then they must be adjacent since $u_1$ has no other neighbor.
  So depending on whether the $z_i$ are distinct or not,
  $G$ contains the first configuration depicted in Figure~\ref{fig:b-vertex_C3},
  or is isomorphic to one of the last two graphs depicted in Figure~\ref{fig:b-vertex_C3} (note that all faces incident to a 3-vertex or a b-vertex in these drawings are facial).
\end{proof}

\begin{figure}[ht]
  \centering
  \includegraphics[scale=0.8]{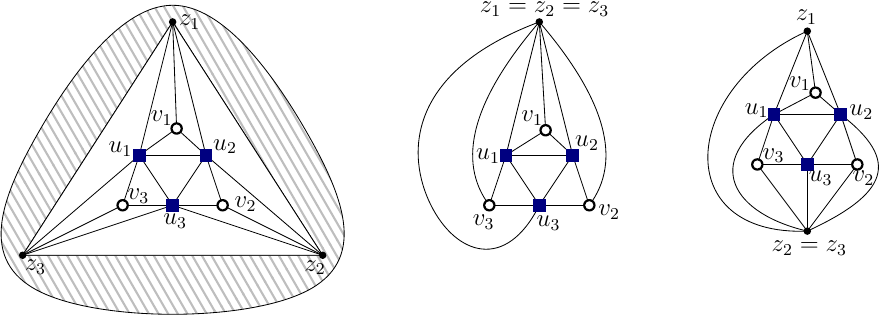}
  \caption{The possible configurations of $G$ if there is a face composed of b-vertices.
  The last two graphs, that satisfy $\gamma_{\rm P}(G)=1$, are contracts of the first configuration.}
  \label{fig:b-vertex_C3}
\end{figure}

\begin{prop} \label{prop:b-vertex}
  Let $(u_1,u_2,u_3)$ be a path on three b-vertices.
  Let $v_1$ be the 3-vertex adjacent to $u_1$ and $u_2$ and let $v_2$ be the 3-vertex adjacent to $u_2$ and $u_3$.
  If $u_1$ is not adjacent to $u_3$, then
  there exist distinct vertices $x$ and $x'$ such that $\{u_1,u_2,u_3,v_1\} \subseteq N(x)$, $\{u_1,u_2,u_3,v_2\} \subseteq N(x')$, and $[x u_2 u_3]$ and $[x' u_1 u_2]$ are facial (see Figure~\ref{fig:b-vertex_config}).
\end{prop}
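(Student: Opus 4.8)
The plan is to use the b-vertex property of $u_2$ to determine $N(u_2)$ completely, and then to read off $x$ and $x'$ from the two faces incident to the edges $u_2u_3$ and $u_1u_2$. First I would set up the local picture. Since the closed neighbourhood of a $3$-vertex induces a $K_4$, the vertex $v_1$ has a third neighbour $z_1$ (besides $u_1$ and $u_2$), adjacent to both $u_1$ and $u_2$, and the three faces incident to $v_1$ are $[v_1u_1u_2]$, $[v_1u_2z_1]$ and $[v_1z_1u_1]$; symmetrically $v_2$ has a third neighbour $z_2$, adjacent to $u_2$ and $u_3$, with incident faces $[v_2u_2u_3]$, $[v_2u_3z_2]$ and $[v_2z_2u_2]$. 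If $v_1=v_2$ then this vertex would be a $3$-vertex adjacent to the two b-vertices $u_1$ and $u_3$, forcing the edge $u_1u_3$ by Observation~\ref{obs:adj} — contrary to hypothesis. Hence $v_1\ne v_2$, so $v_1$ and $v_2$ are exactly the two $3$-neighbours of the b-vertex $u_2$, and $N[u_2]=N[v_1]\cup N[v_2]$; in particular $N(u_2)\subseteq\{u_1,v_1,z_1,u_3,v_2,z_2\}$. (Such $v_1,v_2$ exist by Observation~\ref{obs:adj}, and only their existence is used below; the graphs of Figure~\ref{fig:b-vertex_common}, in which such a $3$-vertex need not be unique, contain no path of three b-vertices with non-adjacent ends.)

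Next I would record the non-adjacencies enforced by $u_1\not\sim u_3$, each via the same mechanism (a $3$-vertex adjacent to two b-vertices joins them, Observation~\ref{obs:adj}): $u_1\not\sim v_2$, $u_3\not\sim v_1$, $z_1\ne u_3$ and $z_2\ne u_1$; I would also note $\deg(u_2)\ge 4$, since otherwise $N[u_2]$ would induce a $K_4$ containing $u_1$ and $u_3$. The core of the argument is then to examine the two faces containing the edge $u_2u_3$. One of them is $[u_2u_3v_2]$, a face at the degree-$3$ vertex $v_2$; write the other as $[u_2u_3x]$. Then $x$ is a neighbour of both $u_2$ and $u_3$ with $x\ne v_2$, so $x$ lies in $\{u_1,v_1,z_1,z_2\}$; and since $u_1$ and $v_1$ are not adjacent to $u_3$, we get $x\in\{z_1,z_2\}$.

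The step I expect to be the real obstacle is ruling out $x=z_2$. If $x=z_2$, then $[u_2u_3z_2]$, $[u_2v_2z_2]$ (again a face at $v_2$) and $[u_2u_3v_2]$ would all be faces of $G$ through $u_2$; but these three faces make the three distinct vertices $u_3$, $v_2$, $z_2$ pairwise consecutive in the cyclic rotation around $u_2$, which is impossible once $\deg(u_2)\ge 4$. Therefore $x=z_1$; note this also yields $z_1\ne z_2$ and $z_1\sim u_3$. Applying the mirror argument to the edge $u_1u_2$ — one of whose faces is $[u_1u_2v_1]$, the other being some $[u_1u_2x']$ — gives in the same way $x'=z_2$ and $z_2\sim u_1$.

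It then remains to assemble the conclusion with $x:=z_1$ and $x':=z_2$. These are distinct; the triangles $[xu_2u_3]=[z_1u_2u_3]$ and $[x'u_1u_2]=[z_2u_1u_2]$ are facial by the construction above; and $z_1$ is adjacent to $u_1$ and $u_2$ (both in the $K_4$ on $N[v_1]$), to $v_1$ (its third neighbour), and to $u_3$ (shown above), so $\{u_1,u_2,u_3,v_1\}\subseteq N(z_1)$. Symmetrically $\{u_1,u_2,u_3,v_2\}\subseteq N(z_2)$, which completes the proof.
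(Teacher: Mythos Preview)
Your proof is correct and follows essentially the same approach as the paper's: both identify $x$ and $x'$ with the third neighbours of the $3$-vertices $v_1$ and $v_2$ (your $z_1,z_2$), and use the b-vertex property $N[u_2]=N[v_1]\cup N[v_2]$ to pin down the facial structure around $u_2$. The only difference is presentational: the paper appeals directly to the two possible configurations of a b-vertex's neighbourhood in Figure~\ref{fig:b-vertex} (ruling out the second because it would force $u_1\sim u_3$), whereas you unpack the same dichotomy by an explicit rotation argument at $u_2$.
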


\begin{proof}
  Since $v_1$ and $v_2$ are 3-vertices, then there exist two vertices $x,x'$ such that $\{u_1,u_2,v_1\}\in N(x)$ and $\{u_2,u_3,v_2\}\in N(x')$. Since $u_2$ is a b-vertex, we have that $x\neq x'$ (otherwise $u_1$ and $u_3$ would be adjacent as the second configuration of Figure~\ref{fig:b-vertex} shows). Thus, $u_2$ is a b-vertex corresponding to the first configuration of Figure~\ref{fig:b-vertex}, and so $x$ is adjacent to $u_3$, $x'$ is adjacent to $u_1$, and $[xu_2u_3]$ and $[x'u_1u_2]$ are facial.
\end{proof}

\begin{figure}[ht]
  \centering
  \includegraphics[scale=0.8]{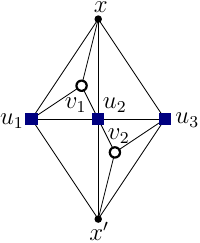}
  \caption{There are two distinct vertices both adjacent to $\{u_1,u_2,u_3\}$. All triangles are facial.}
  \label{fig:b-vertex_config}
\end{figure}

Observe that the above property together with Lemmas~\ref{lem:3-vertex_in_blue_triangle} and~\ref{lem:facial_blue_triangle} covers all possibilities of three connected b-vertices. We now consider the cases when b-vertices form paths and cycles of length at least four.

\begin{corollary}
  Suppose a set of $k \geq 3$ b-vertices form a path $(u_1,\ldots, u_k)$ (where $u_1$ and $u_k$ may be adjacent when $k>3$). Let $v_1, \ldots, v_{k-1}$ be 3-vertices with $v_i$ being adjacent to $u_i$ and $u_{i+1}$, and let $v_0$ be the 3-vertex adjacent to $u_1$ but not to $u_2$. Then there exists a vertex $x$ adjacent to all $u_i$, $1\le i \le k$ and to $v_0$ and $v_2$.
  (see Figure~\ref{fig:b-vertex_long_path}).
\end{corollary}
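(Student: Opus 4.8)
The plan is to apply Proposition~\ref{prop:b-vertex} to every triple of three consecutive vertices $u_i$ along the path and then identify the vertices it produces. For $1\le j\le k-1$ let $z_j$ be the third neighbor of the 3-vertex $v_j$, so that $N(v_j)=\{u_j,u_{j+1},z_j\}$. For each $i$ with $2\le i\le k-1$, the triple $(u_{i-1},u_i,u_{i+1})$ is a path on three b-vertices with $u_{i-1}\not\sim u_{i+1}$ (the only chord the hypothesis allows among the $u_\ell$'s is $u_1u_k$, and it could equal $u_{i-1}u_{i+1}$ only for $i=2$ and $k=3$, a case in which $u_1\not\sim u_k$), so Proposition~\ref{prop:b-vertex} applies and produces two distinct vertices. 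One is adjacent to $u_{i-1},u_i,u_{i+1}$ and to $v_{i-1}$; since $N(v_{i-1})=\{u_{i-1},u_i,z_{i-1}\}$ this vertex must be $z_{i-1}$, and moreover the triangle $[z_{i-1}u_iu_{i+1}]$ is facial. Symmetrically, the other produced vertex is $z_i$, the triangle $[z_iu_{i-1}u_i]$ is facial, and $\{u_{i-1},u_i,u_{i+1}\}\subseteq N(z_i)$.

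The first key step is to prove that $z_{i-1}=z_{i+1}$ for every $i$ with $2\le i\le k-2$. The edge $u_iu_{i+1}$ lies on exactly two faces of $G$. One of them is $[v_iu_iu_{i+1}]$, since $v_i$ is a 3-vertex adjacent to both $u_i$ and $u_{i+1}$; and applying the previous paragraph to the triples $(u_{i-1},u_i,u_{i+1})$ and $(u_i,u_{i+1},u_{i+2})$ shows that $[z_{i-1}u_iu_{i+1}]$ and $[z_{i+1}u_iu_{i+1}]$ are facial as well. Now $z_{i-1}\sim u_{i-1}$, whereas $v_i\not\sim u_{i-1}$ --- otherwise $z_i=u_{i-1}$, and then $v_{i-1}$ and $v_i$ would be two 3-vertices with common b-neighbors $u_{i-1}$ and $u_i$, which puts us in the situation of Lemma~\ref{lem:shared-3-neighbors} --- so $z_{i-1}\ne v_i$, and likewise $z_{i+1}\ne v_i$. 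Hence the three facial triangles above are in fact only two, forcing $z_{i-1}=z_{i+1}$. Consequently all even-indexed $z_j$ coincide, and I set $x:=z_2=z_4=z_6=\cdots$. A second, parallel step shows $\{u_{i-1},u_i,u_{i+1},u_{i+2}\}\subseteq N(z_i)$ for $2\le i\le k-2$, by combining $\{u_{i-1},u_i,u_{i+1}\}\subseteq N(z_i)$, coming from the triple $(u_{i-1},u_i,u_{i+1})$, with $\{u_i,u_{i+1},u_{i+2}\}\subseteq N(z_i)$, coming from $(u_i,u_{i+1},u_{i+2})$ (note that $z_i$ is one of the two vertices produced for each of these triples). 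Invoking this for $i=2$ to reach $u_1$, the obvious adjacencies $z_j\sim u_j,u_{j+1}$ for the intermediate indices, and either $z_{k-1}=x$ (when $k$ is odd) or the case $i=k-2$ (when $k$ is even) to reach $u_k$, we conclude that $x$ is adjacent to every $u_i$, $1\le i\le k$. Finally, $x\sim v_2$ holds by definition, $x=z_2$ being the third neighbor of $v_2$.

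It remains to prove $x\sim v_0$, which I expect to be the main obstacle. By Observation~\ref{obs:adj}, the two 3-neighbors of the b-vertex $u_1$ are $v_1$ and $v_0$, so the neighborhood of $u_1$ is one of the two configurations of Figure~\ref{fig:b-vertex}. From the second paragraph we have $x\sim u_1$, and the triangle $[xu_1u_2]$ is facial, being one of the facial triangles furnished by Proposition~\ref{prop:b-vertex} applied to $(u_1,u_2,u_3)$; it is distinct from the facial triangle $[v_1u_1u_2]$, since $x\sim u_3$ while $v_1\not\sim u_3$ (once more by Lemma~\ref{lem:shared-3-neighbors}). Hence, in the cyclic order of the neighbors of $u_1$ around $u_1$, the two vertices immediately preceding and following $u_2$ are exactly $x$ and $v_1$. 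On the other hand, $v_1$ being a 3-vertex, the two vertices immediately preceding and following $v_1$ in this cyclic order are $u_2$ and $z_1$; and $u_2\ne z_1$ because $z_1\sim u_2$. In either configuration of Figure~\ref{fig:b-vertex}, these constraints determine the position of $u_2$ in the cyclic order, hence the whole cyclic order around $u_1$, and reading it off one sees that $v_0$ is immediately preceded and followed by $x$ and one further neighbor of $u_1$. Therefore $x\in N(v_0)$, which completes the proof, with $x=z_2$ the desired vertex and the overall configuration as depicted in Figure~\ref{fig:b-vertex_long_path}. Besides this cyclic-order analysis at $u_1$, the only point requiring care is the index bookkeeping for the two extreme indices $i=1$ and $i=k$ in establishing $x\sim u_i$.
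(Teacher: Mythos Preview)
Your proof is correct, but it takes a noticeably longer route than the paper's. The paper applies Proposition~\ref{prop:b-vertex} \emph{once}, to $(u_1,u_2,u_3)$, obtaining the two vertices $x$ and $x'$ (your $z_1$ and $z_2$). It then iterates directly along the path using the b-vertex definition at each $u_i$: since $x\sim u_3$ and $u_3$ is a b-vertex, $x$ must be adjacent to one of its 3-neighbors, hence to $v_3$ and so to $u_4$; then $u_3$ has no further neighbors, forcing $x'\sim u_4$ as well; and so on. Your approach instead applies the proposition to \emph{every} triple and then glues the outputs together via face-counting on the edges $u_iu_{i+1}$, which works but requires the extra identifications $z_{i-1}=z_{i+1}$ and the attendant case exclusions through Lemma~\ref{lem:shared-3-neighbors}.

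The biggest simplification you are missing is in the last step. Your cyclic-order analysis around $u_1$ is valid, but the paper dispatches $x\sim v_0$ in one line from the very definition of a b-vertex: since $x=z_2$ is adjacent to $u_1$ but not to $v_1$ (otherwise $x\in N(v_1)=\{u_1,u_2,z_1\}$ would force $x=z_1$, contradicting $z_1\ne z_2$), and since $N[u_1]=N[v_0]\cup N[v_1]$, one gets $x\in N[v_0]$; as $x$ has degree at least four, $x\ne v_0$, hence $x\sim v_0$. This bypasses the case split over the two configurations of Figure~\ref{fig:b-vertex} entirely. What your approach does buy is an explicit identification of the universal vertex with $z_2$ and a very concrete picture of the facial structure, which is pleasant but not needed for the corollary as stated.
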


\begin{figure}[ht]
  \centering
  \includegraphics[scale=0.8]{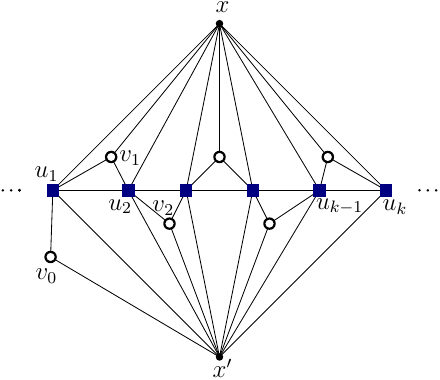}
  \caption{There are two vertices universal to the path $(u_1,u_2,u_3, \ldots, u_k)$. Vertex $x'$ is also adjacent to $v_0$ and $v_2$.}
  \label{fig:b-vertex_long_path}
\end{figure}

\begin{proof}
  Applying Proposition~\ref{prop:b-vertex} to vertices $u_1,u_2,u_3$, there exist distinct vertices $x,x'$ such that $\{u_1,u_2,$ $u_3,v_1\} \subseteq N(x)$, $\{u_1,u_2,u_3,v_2\} \subseteq N(x')$, and $[x u_2 u_3]$ and $[x' u_1 u_2]$ are facial. Since $x$ is adjacent to the b-vertex $u_3$, $x$ must be adjacent to $v_3$ and thus to $u_4$. Then $u_4$ is also adjacent to $x'$ as $u_3$ has no other neighbor.
  Iterating this argument, we infer that $x$ and $x'$ are adjacent to all $u_i$.
  Now, since $x'$ is adjacent to $u_1$ but not to $v_1$, by definition of a b-vertex it is adjacent to $v_0$, and the corollary follows.
\end{proof}

\begin{lemma}
  If $G$ contains a maximal component of b-vertices isomorphic to $P_2$, then $G$ contains a facial subgraph isomorphic to one of the graphs of Figure~\ref{fig:b-vertex_P2_subgraph}.
\end{lemma}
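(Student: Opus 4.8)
The plan is to locate the two b-vertices of the component, reconstruct their closed neighborhoods completely from Figure~\ref{fig:b-vertex}, and then show that the union of the facial triangles forced around them and around their 3-neighbors is (isomorphic to) one of the graphs of Figure~\ref{fig:b-vertex_P2_subgraph}. To set up, write the component as $\{u_1,u_2\}$ with $u_1u_2\in E(G)$. By Observation~\ref{obs:adj} there is a 3-vertex $v\in N(u_1)\cap N(u_2)$; since $\deg(v)=3$ and $N[v]$ induces a $K_4$, the third neighbor $q$ of $v$ is adjacent to both $u_1$ and $u_2$, and $[vu_1u_2]$, $[vu_2q]$, $[vqu_1]$ are faces of $G$. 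Being a b-vertex, $u_1$ has a second 3-neighbor $v_1\neq v$, and likewise $u_2$ has $v_2\neq v$. If $v_1=v_2$, then $u_1$ and $u_2$ have two common 3-neighbors and Lemma~\ref{lem:shared-3-neighbors} applies: $G$ is one of the two graphs of Figure~\ref{fig:b-vertex_common}, and the one whose b-vertices form a $P_2$ is a graph of Figure~\ref{fig:b-vertex_P2_subgraph}; so from now on assume $v_1\neq v_2$. In particular $v_1\notin N(u_2)$ (otherwise $v_1$ would be a 3-neighbor of $u_2$, hence equal to $v$ or $v_2$), and symmetrically $v_2\notin N(u_1)$.

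Next I would pin down the local picture. The neighborhood of $u_1$ is one of the two drawings of Figure~\ref{fig:b-vertex}, with 3-vertices $v$ and $v_1$; since $v_1\notin N(u_2)$, one reads off $N(v_1)=\{u_1,q,s_1\}$ in the degree-5 case and $N(v_1)=\{u_1,r_1,s_1\}$ with $r_1\in N(q)$ in the degree-6 case, and symmetrically around $u_2$. The edge $u_1u_2$ lies on exactly two faces of $G$; one is $[vu_1u_2]$, and reading the cyclic order of the neighbors around $u_1$, and around $u_2$, shows the other face is both $[u_1u_2s_1]$ and $[u_1u_2s_2]$, so $s_1=s_2=:s$. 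The crucial point is that $s\in N(v_1)$ while $v_1$ is a 3-vertex adjacent to $u_1$: hence, \emph{were $s$ a b-vertex}, Observation~\ref{obs:adj} would place $s$ in the component $\{u_1,u_2\}$ — impossible, since $s\neq u_1$ and $s=u_2$ is excluded by $v_1\notin N(u_2)$. So $s$ is not a b-vertex, and this is the lever used in every remaining case.

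It then remains to distinguish cases according to the degrees of $u_1$ and $u_2$ in Figure~\ref{fig:b-vertex}. In each case I would list the faces of $G$ forced by the facts that $u_1,u_2$ are b-vertices and $v,v_1,v_2$ are 3-vertices, and inspect the link of $s$. If $u_1$ and $u_2$ both have degree 5, the forced faces already close the cyclic order around $s$, so $\deg_G(s)=5$ and $N[s]=N[v_1]\cup N[v_2]$, making $s$ a b-vertex: contradiction, so this case does not occur. If at least one of $u_1,u_2$ has degree 6, the forced faces around $s$ leave a single gap and $s$ has exactly six forced neighbors; if $\deg_G(s)$ were $6$ we would again get $N[s]=N[v_1]\cup N[v_2]$ and $s$ a b-vertex, so in fact $\deg_G(s)\ge 7$ and the gap is genuinely nonempty. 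Let $H$ be the union of the forced faces: it is a facial subgraph of $G$ since all its bounded faces are faces of $G$, and a short count with Euler's formula identifies it — it is the $8$-vertex graph with triangular outer boundary $[q\,s\,r_2]$ when exactly one of $u_1,u_2$ has degree 6, and the $9$-vertex graph with quadrilateral outer boundary $[q\,r_1\,s\,r_2]$ when both do (along the way one checks $r_1\neq r_2$, once more because $r_1=r_2$ would force $s$ to be a b-vertex). Each of these is one of the graphs of Figure~\ref{fig:b-vertex_P2_subgraph}, which finishes the argument.

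The main obstacle I anticipate is the bookkeeping: correctly reading off the cyclic orders at $u_1$ and $u_2$, tracking the forced faces and the ``gap'' at $q$ and $s$, and checking that all named vertices are distinct, in each case. Conceptually everything is driven by the single phenomenon that the common ``outer corner'' $s$ of the two 3-vertices $v_1,v_2$ is pushed to become a third b-vertex; the hypothesis that the component is exactly $P_2$ is precisely what forbids this and, in the degree-6 cases, what forces the extra vertices that make $H$ a proper subgraph of $G$.
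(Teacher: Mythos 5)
Your argument is correct and takes essentially the same route as the paper's: your vertex $s$ (the second common neighbour of $u_1,u_2$ on the edge $u_1u_2$) is exactly the paper's vertex $t$, the maximality of the $P_2$ component is used in both proofs precisely to forbid $s$ from becoming a third b-vertex, and your case split on the degrees of $u_1,u_2$ is equivalent to the paper's split on whether the third neighbours of the two outer 3-vertices coincide. The one inaccurate detail is your parenthetical for the case $v_1=v_2$: the relevant six-vertex graph of Figure~\ref{fig:b-vertex_common} is too small to contain either (8- or 9-vertex) configuration of Figure~\ref{fig:b-vertex_P2_subgraph}, so it is not covered by the lemma as stated; the paper simply assumes the two 3-neighbours distinct via Lemma~\ref{lem:shared-3-neighbors} and defers these exceptional small graphs to the separate branch of Lemma~\ref{lem:special_configs}, so your treatment is no less rigorous than the paper's, but the claim itself should be dropped.
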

\begin{proof}
  Let $u_1, u_2$ be b-vertices, and let $v_1$ be the 3-vertex adjacent to $u_1$ and $u_2$, and $z$ the third neighbor of $v_1$.
  Let $v_0$ and $v_2$ be the second 3-neighbors of respectively $u_1$ and $u_2$, which can be assumed distinct by Lemma~\ref{lem:shared-3-neighbors}.
  Since $v_0$ is a 3-vertex and is not adjacent to $u_2$, $v_0$ has a neighbor $t$ which is adjacent to $u_1$ and $u_2$ (we can see $u_1$ as the central vertex of any configuration of Figure~\ref{fig:b-vertex}).
  By definition of b-vertices, $v_2$ must also be adjacent to $t$.
  Let $z_1$ and $z_2$ be the third neighbors of respectively $v_0$ and $v_2$.
  If $z_1 = z_2$, then $N[t]\subseteq N[v_0] \cup N[v_2]$, and the vertex $t$ is in fact a b-vertex, contradicting our hypothesis.
  Thus $z_1 \neq z_2$. Depending on whether $z$ and $z_1$ are distinct or not, we get one of the configurations of Figure~\ref{fig:b-vertex_P2_subgraph} (in both cases, the outer face of the drawing may not be facial).
\end{proof}

\begin{figure}[ht]
  \centering
  \includegraphics[scale=0.8]{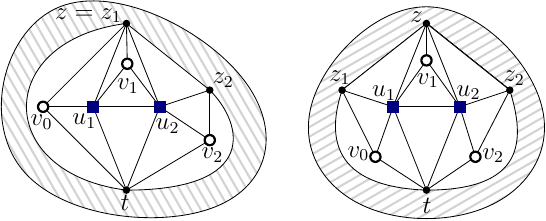}
  \caption{
    The possible configurations of $G$ if there is a $P_2$ component of b-vertices. The outer faces are not necessarily facial. All other triangles of the drawing are facial.}
  \label{fig:b-vertex_P2_subgraph}
\end{figure}

Finally, if there is an isolated b-vertex in $G$, then it belongs to one of the subgraphs depicted in Figure~\ref{fig:b-vertex}. This concludes the proof of the following lemma, that gives a characterization of the possible intersections of the configurations from Figure~\ref{fig:bad-guys}.

The \emph{special configurations} of $G$ are then all the configurations depicted in Figure~\ref{fig:final_configs}.

\begin{lemma}\label{lem:special_configs}
    If $G$ contains a special configuration as facial subgraph, then either $G$ is a small graph (characterized in Lemmas~\ref{lem:shared-3-neighbors}, \ref{lem:3-vertex_in_blue_triangle}, and \ref{lem:facial_blue_triangle}) and $\gamma_P(G) \leq \frac{n-2}{4}$, or each maximal component of b-vertices of $G$ belongs to one of the induced configurations depicted in Figure~\ref{fig:final_configs}, 1 to 7, or $G$ contains a facial octahedron (configuration 8 in Figure~\ref{fig:final_configs}).
\end{lemma}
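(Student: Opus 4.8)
The plan is to assemble into a single statement the case analysis built up through this section, organized by the shape of a maximal component of b-vertices; no genuinely new argument should be needed, only careful bookkeeping of the earlier results. I would begin by dispatching the octahedron: a facial octahedron has its three interior vertices of degree $4$, none of them with a 3-neighbor, so it contains no b-vertex in its interior, and as already observed it can meet other configurations only along its outer face. Hence whenever the special configuration at hand is a facial octahedron we are immediately in the last alternative (configuration~8), with nothing further to check.

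Assume then that the special configuration involves b-vertices, and fix a maximal connected component $C$ of b-vertices of $G$. The first step is to reduce to the case where $C$ induces a path or a cycle. For this I would use Observation~\ref{obs:adj} together with the remark following Lemma~\ref{lem:3-vertex_in_blue_triangle}: if some vertex of $C$ had three b-neighbors, it would be adjacent to a 3-vertex all of whose neighbors are b-vertices, and Lemma~\ref{lem:3-vertex_in_blue_triangle} would force $G$ to be the triakis tetrahedron ($n=10$, $\gp(G)=2=\frac{n-2}{4}$) or one of the two graphs of Figure~\ref{fig:b-vertex_common}; similarly, two 3-vertices with two common b-neighbors would land us, by Lemma~\ref{lem:shared-3-neighbors}, among those small graphs. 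Having set aside these finitely many small graphs (for each of which $\gp(G)\le\frac{n-2}{4}$ is checked by hand, recalling $n\ge 6$), every b-vertex has at most two b-neighbors, so $G[C]$ has maximum degree $2$ and is a path or a cycle.

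The core of the proof is then a short enumeration of the possible shapes of $C$, each matched to an earlier result that already pins down all the surrounding vertices and which triangles are facial, the resulting local pictures being collected as configurations~1 to~7 of Figure~\ref{fig:final_configs}. An isolated b-vertex has closed neighborhood one of the two subgraphs of Figure~\ref{fig:b-vertex}; a component isomorphic to $P_2$ is handled by the lemma on $P_2$-components (Figure~\ref{fig:b-vertex_P2_subgraph}); a path on $k\ge 3$ b-vertices is handled by Proposition~\ref{prop:b-vertex} together with the Corollary stated just after it, which supply the vertices universal to the path and the prescribed facial triangles (Figures~\ref{fig:b-vertex_config} and~\ref{fig:b-vertex_long_path}); a triangle of b-vertices is handled by Lemma~\ref{lem:facial_blue_triangle}, giving either the previously treated small graphs of Figure~\ref{fig:b-vertex_C3} or its first configuration; and a cycle on $k\ge 4$ b-vertices is again covered by the same Corollary, applied with $u_1u_k\in E(G)$. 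Since these cases are exhaustive, every maximal b-vertex component of $G$ is one of configurations~1 to~7, which together with the octahedron case yields the statement.

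The step I expect to require the most care is exhaustiveness rather than the production of new arguments: one must check that removing the handful of small graphs really does leave only paths and cycles as b-vertex components (so that the cases above are complete), and that the configurations of Figure~\ref{fig:final_configs} are genuinely induced and facial, that is, that the degree and ``no other neighbor'' constraints around 3-vertices and b-vertices forbid any further vertex from attaching to the depicted pictures. That last point, however, is precisely the reasoning already carried out inside Proposition~\ref{prop:b-vertex}, its Corollary, and the earlier lemmas, so here it suffices to invoke those results in the right order.
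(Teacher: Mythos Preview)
Your proposal is correct and matches the paper's approach exactly: the paper does not give a separate proof for this lemma but instead states it as the summary of the case analysis developed throughout Section~\ref{sec:b-vertices}, concluding with the sentence ``This concludes the proof of the following lemma'' after treating isolated b-vertices. Your roadmap---dispatching the octahedron, using Observation~\ref{obs:adj} and Lemmas~\ref{lem:shared-3-neighbors}--\ref{lem:facial_blue_triangle} to reduce to paths and cycles of b-vertices, then invoking Proposition~\ref{prop:b-vertex}, its Corollary, and the $P_2$ lemma case by case---is precisely how the section is organized.
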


\begin{figure}[h]
  \centering
  \includegraphics[scale=0.75]{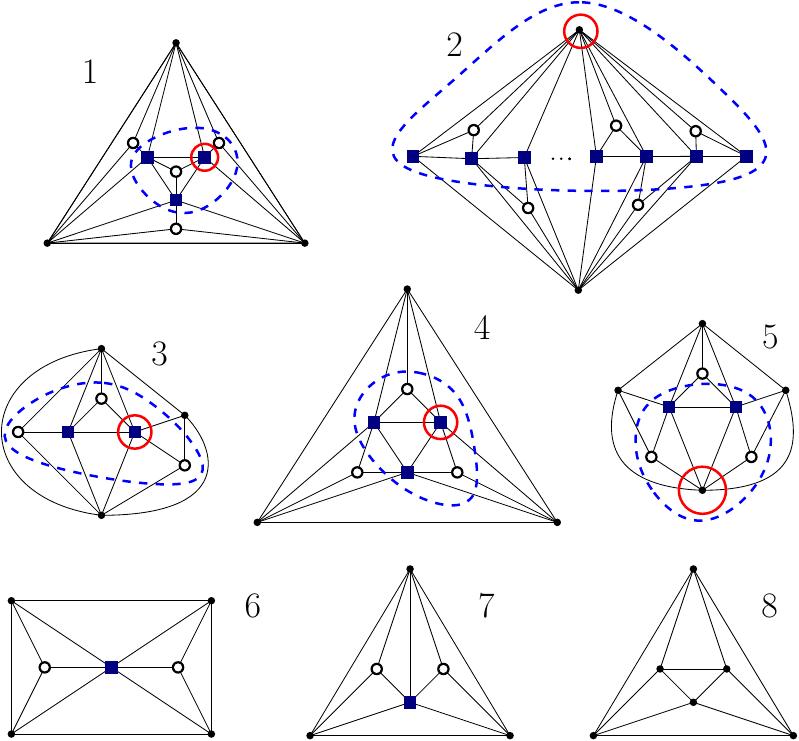}
  \caption{
    The different configurations containing a b-vertex, and the octahedron.
  So-called ``special'' vertices of configurations 1 to 5 are circled in red. For these configurations, vertices circled with a blue-dashed curve form a set relative to the special vertex of the configuration, and they are called the \emph{circled} vertices of the configuration.  
}
  \label{fig:final_configs}
\end{figure}

\begin{obs}\label{obs:disjoint}
  If a vertex belongs to two facial subgraphs isomorphic to configurations from Figure~\ref{fig:final_configs}, then it is a vertex from the outer face for both of them.
\end{obs}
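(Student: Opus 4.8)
The plan is to show that every \emph{interior} vertex of a special configuration (i.e.\ every vertex not on the outer face of its drawing) belongs to a \emph{unique} special configuration; Observation~\ref{obs:disjoint} then follows at once, since a vertex lying in two configurations cannot be interior to either and is therefore exterior in both. Reading off Figure~\ref{fig:final_configs}, the interior vertices of configurations $1$ to $7$ are exactly the b-vertices and the $3$-vertices --- the ``core'' of the configuration --- whereas the interior vertices of configuration $8$ are the three vertices of the inner triangle of the octahedron; every other vertex (the ``center'' vertices adjacent to a whole path or cycle of b-vertices, the outer vertices playing the role of the $z_i$'s, and the outer triangle of the octahedron) is drawn on the outer face. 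I would record first the elementary fact underlying everything: since each configuration is a facial subgraph, the whole rotation of faces around an interior vertex $v$ consists of faces of the configuration, so $N_G[v]$ --- and the cyclic order of $N_G(v)$ --- is entirely prescribed by it.

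I would then dispatch the $3$-vertices and b-vertices. A $3$-vertex has degree $3<4$, so it lies on no octahedron; and a b-vertex cannot lie on an octahedron either, by a short count: a b-vertex $u$ with $3$-neighbors $v,v'$ satisfies $N[u]=N[v]\cup N[v']$ with $|N[v]|=|N[v']|=4$, so $\deg u\le 6$, and $v,v'$ --- having degree $3$ --- are not octahedron vertices, which forces $\deg u=6$, pins $u$'s four non-$3$-vertex neighbors to be exactly $u$'s four octahedron-neighbors, and makes both ``link edges'' of $u$ in the octahedron need to lie on its unique outer face, a contradiction. So a $3$-vertex or a b-vertex can only occur in configurations $1$ to $7$, and there, by inspection of Figure~\ref{fig:final_configs} and by maximality of the b-component (a non-core vertex is not a b-vertex, else it would lie in the same b-component and be part of the core), it always occurs as an interior vertex. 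It remains to identify which configuration: by Observation~\ref{obs:adj} two b-vertices with a common $3$-neighbor are adjacent, hence all b-vertices ``seen'' by a given $3$-vertex or b-vertex lie in one maximal component of b-vertices, and Lemmas~\ref{lem:shared-3-neighbors}, \ref{lem:3-vertex_in_blue_triangle}, \ref{lem:facial_blue_triangle} together with Proposition~\ref{prop:b-vertex} and its corollary show this component, with its $3$-neighbors and the forced surrounding vertices, determines exactly one of configurations $1$ to $7$. Thus every b-vertex and every $3$-vertex belongs to exactly one special configuration.

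Next I would treat an interior vertex $v$ of an octahedron. Then $\deg_G(v)=4$ and the link of $v$ is a $4$-cycle on its four neighbors $a,b,c,d$, with the four faces $[vab],[vbc],[vcd],[vda]$ being faces of $G$; hence this $4$-cycle is a facial separating cycle, and planarity forces the antipode of $v$ to be the unique vertex adjacent to all of $a,b,c,d$ on the far side --- so $v$ lies on a unique octahedron. Moreover $v$ is neither a b-vertex nor a $3$-vertex (degree $4$), so it is not a core vertex of a configuration $1$--$7$; since $v$ is interior to the octahedron it has no neighbor outside it, and an octahedron vertex is never a $3$-vertex (degree $\ge 4$) nor, by the count above, a b-vertex, so $v$ is adjacent to no b-vertex and no $3$-vertex and therefore cannot be a center vertex or a $z_i$-type outer vertex of a configuration $1$--$7$ either. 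Hence such a $v$ also belongs to exactly one special configuration.

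Combining the three cases, every interior vertex of a special configuration belongs to that configuration alone, which is exactly Observation~\ref{obs:disjoint}. I expect the main obstacle to be the part phrased as ``by inspection of Figure~\ref{fig:final_configs}'': one must verify for each of the eight configurations precisely which vertices are interior --- in particular that the center vertices are always exterior, which rests on $u_1$ and $u_3$ being non-adjacent in Proposition~\ref{prop:b-vertex} (so the rotation around a center cannot close up inside the configuration) --- and carry out cleanly the degree and $N[\cdot]$ bookkeeping that excludes b-vertices from octahedra and that pins down each maximal b-component's configuration.
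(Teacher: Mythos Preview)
Your proposal is correct and follows essentially the same route as the paper's own proof: both case on whether the putative shared vertex is a b-vertex, a $3$-vertex, or an interior octahedron vertex of degree~$4$, invoke the maximality of the b-vertex component to pin down configurations~1--7, and use a degree/adjacency count to isolate octahedron interiors. You spell out more than the paper does---in particular you justify why a b-vertex cannot sit on a facial octahedron, which the paper merely asserts---but the structure of the argument is the same.
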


\begin{proof}
  Let $v$ be a vertex that belongs to two configurations of Figure~\ref{fig:final_configs}.
  If $v$ is a b-vertex, then none of the two configurations is an octahedron.
  Then by maximality of the components of b-vertices in each configuration, the two configurations must rely on the same set of b-vertices,
  so they are the same configuration.
  Now suppose $v$ is a 3-vertex.
  In both configurations it must be an internal vertex, and have an adjacent b-vertex.
  So the configurations also share a b-vertex and the same argument concludes.
  Finally, if $v$ is a vertex of degree 4, it is an internal vertex of an octahedron.
  Since two octahedra cannot intersect on internal vertices and no internal vertex of an octahedron may be adjacent to a 3-vertex,
  $v$ does not belong to any other configuration.
  The observation follows.
\end{proof}

\section{Constructing the power dominating set}\label{sec:algo}

We now describe the process that defines incrementally a power dominating set $S$ of $G$ satisfying the announced bound.
In Section~\ref{sec:octa}, Algorithm~\ref{alg:configs} produces a set $S_1$ monitoring special configurations from Figure~\ref{fig:bad-guys} with a small number of vertices.
Then, Algorithm~\ref{alg:pds2} of Section~\ref{subsec:expansion} builds a set $S_2$ by expanding the set $S_1$ iteratively, while keeping certain properties. If the graph $G$ is not fully monitored after that, we show in Section~\ref{subsec:wrapping} that $G$ has a characterized structure, which guarantees that our last Algorithm~\ref{alg:last} maintains the wanted bound while adding some well chosen vertices to $S_2$ to build the required set $S$.

\smallskip
During the three algorithms, we ensure the following property on the set of selected vertices, that is necessary for the proof of Lemma~\ref{lem:margin}:
\smallskip

\textbf{Property $(*)$.} \emph{
  We say that a subset $S$ of vertices of a plane graph $G$ has \emph{Property $(*)$} in $G$ whenever, for each induced triangulation $G'\subseteq G$ of order at least 4, if $G'$ is monitored by $S$ then one of the following holds:     \begin{itemize}
    \item[(a)] one vertex of the outer face of $G'$ has its closed neighborhood in $G$ monitored by $S$,
    \item[(b)] or, we have $|S \cap V(G')| \leq \frac{|V(G')|-2}{4}$.
  \end{itemize}}

\subsection{Monitoring special configurations} \label{sec:octa}

The first step of our algorithm is described in Algorithm~\ref{alg:configs}, which takes care of monitoring vertices creating special configurations.
In the following, we say a configuration is \emph{monitored} by $S_1$ when all its interior b-vertices are in $M(S_1)$, or for an octahedron, if all its vertices are in $M(S_1)$.

\begin{algorithm}[h]
  \DontPrintSemicolon
    \SetAlgoLined

    \KwIn{A triangulation $G$ of order $n \geq 6$.}

    \KwOut{A set $S_1\subseteq V(G)$ monitoring all b-vertices and all vertices of facial octahedra.
    }

    $S_1:=\emptyset$\;
                \If{$G$ has a vertex $u$ of degree at least $n-2$}
                {Label $N[u]$ with $u$\;
                 Return $\{u\}$}

                \If{$G$ is a triakis tetrahedron (as in Figure~\ref{fig:triakis})}
                {$u,v$: two b-vertices of $G$ at distance 2\;
                Label $u$, its 3-neighbors and two of its adjacent b-vertices with $u$ \;
                Label all other vertices of $G$ with $v$\;
                Return $\{u,v\}$
                }

    \While {$\exists$ a non-monitored configuration $H$ from Figure~\ref{fig:final_configs}(1,2,3,4,5)}
    { $u$: the special vertex of $H$ \;
      $S_1 \leftarrow S_1\cup \{u\}$\;
        Label $u$ and the circled vertices of $H$ with $u$ \;
    }

    \While {$\exists$ non-monitored configurations $H,H'$ from Figure~\ref{fig:final_configs}(6,7,8) with a common vertex $u$
    }{
      $S_1 \leftarrow S_1\cup \{u\}$\;
        Label $u$ and the interior vertices of $H$ and $H'$ adjacent to $u$ with $u$\;
      }

    \While {$\exists$ a non-monitored configuration $H$ from Figure~\ref{fig:final_configs}(6,7,8)}{
      $u$: any exterior vertex of $H$\;
        $S_1 \leftarrow S_1\cup \{u\}$\;
        Label all vertices of $H$ with $u$\;
      }
    Return $S_1$\;
    \caption{Monitoring special configurations}
    \label{alg:configs}
\end{algorithm}

Note that the output of Algorithm~\ref{alg:configs} is the empty set whenever $G$ contains neither b-vertices nor facial octahedra.
We prove the following lemma:

\begin{lemma} \label{lem:configs}
  Let $S_1$ be the set obtained by application of Algorithm~\ref{alg:configs} to $G$. The following statements hold:
    \begin{enumerate}
      \item[(i)] All b-vertices and all facial octahedra are monitored by $S_1$.
      \item[(ii)] If $S_1$ is not empty, $|S_1| \leq \frac{|M(S_1)|-2 }{4} $.
      \item[(iii)] $S_1$ has Property $(*)$ in $G$.
    \end{enumerate}
\end{lemma}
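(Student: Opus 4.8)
The plan is to analyze Algorithm~\ref{alg:configs} case by case, tracking which vertices are added to $S_1$ and which get labeled, since the labels provide a charging scheme: each vertex of $M(S_1)$ will be labeled with some element of $S_1$, and we will argue that each $u \in S_1$ receives at least $4$ labels (counting $u$ itself), with an overall deficit of at least $2$ coming from the first configuration handled. First I would dispose of the two early-return cases: if $G$ has a vertex $u$ of degree $\ge n-2$, then $N[u]$ has size $\ge n-1$, propagation from $u$ easily finishes, $\{u\}$ is a power dominating set, and $1 \le \frac{n-2}{4}$ holds for $n \ge 6$; Property~$(*)$ holds because clause (a) is satisfied ($N[u]$ is monitored, with $u$ an appropriate vertex — one checks $u$ lies on the outer face of any induced $G'$ it belongs to, or rather that $N_G[u]$ being monitored gives (a) directly). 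The triakis tetrahedron case is a finite check: $n=10$, $|S_1|=2$, and $2 = \frac{10-2}{4}$, so (ii) holds with equality, (i) is immediate from the explicit labeling, and (iii) again follows since a vertex with all of $N_G[\cdot]$ monitored can be exhibited on the outer face of any relevant $G'$.

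Next I would handle the three \texttt{while} loops. For the first loop (configurations 1--5 of Figure~\ref{fig:final_configs}), each iteration adds one special vertex $u$ and labels $u$ together with its circled vertices; I would check from the figure that in every one of these five configurations the special vertex plus its circled vertices number at least $4$ when counting the very first one specially — the point being that configurations from Figure~\ref{fig:bad-guys} have $6$ vertices but the charging only needs $4$ per selected vertex plus a global slack of $2$. The key structural input, guaranteed by Lemmas~\ref{lem:shared-3-neighbors}--\ref{lem:facial_blue_triangle}, Proposition~\ref{prop:b-vertex} and its Corollary, and Lemma~\ref{lem:special_configs}, is that once we are past the small-graph cases, every maximal component of b-vertices sits inside exactly one such configuration, and by Observation~\ref{obs:disjoint} distinct configurations meet only on outer-face vertices; hence the sets of circled/interior vertices charged to distinct selected vertices are disjoint, except possibly the selected vertices themselves, which is handled by the ``common vertex'' loop (second \texttt{while}). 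For the second and third loops (octahedra and configurations 6,7), selecting an exterior vertex $u$ of $H$ labels all of $H$ (at least $6$ vertices for an octahedron, and the right count for 6,7), and the common-vertex loop ensures that when two such configurations overlap we pay only once; I would verify the labels assigned in that loop number at least $4$ per selected vertex.

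For part (ii), I would assemble the count: $M(S_1) \supseteq$ (the disjoint union of the labeled sets) $\cup S_1$, each selected vertex carries $\ge 4$ labels including itself, and the first selected vertex (or the early-return/triakis cases) supplies the extra $-2$; summing gives $|M(S_1)| \ge 4|S_1| + 2$, i.e. (ii). Part (i) is just the loop termination conditions: the loops run until no non-monitored configuration remains, and I must check that labeling the circled/interior vertices indeed suffices to monitor the interior b-vertices (for 1--5, propagation from the labeled vertices reaches all interior b-vertices; for octahedra, all six vertices are labeled outright) — this uses the propagation structure visible in the figures, e.g. a $3$-vertex with one monitored neighbor propagates along the $K_4$ on its closed neighborhood. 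Part (iii) is the most delicate: given an induced triangulation $G' \subseteq G$ of order $\ge 4$ monitored by $S_1$, I need either a vertex of its outer face with $N_G[\cdot]$ monitored, or the density bound. The natural argument is that if no outer-face vertex of $G'$ has its $G$-closed-neighborhood fully monitored, then $G'$ must be ``deep inside'' the labeled region, forcing $V(G') $ to be covered by labeled sets in a balanced way so that $|S_1 \cap V(G')| \le \frac{|V(G')|-2}{4}$ by the same per-configuration count restricted to $G'$; the boundary bookkeeping here — making sure a partially-charged configuration straddling $\partial G'$ does not break the ratio — is where I expect the real work to lie.

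The main obstacle, then, is part (iii): Property~$(*)$ has to survive restriction to an arbitrary induced triangulated subgraph, and the clean global inequality $|M(S_1)| \ge 4|S_1|+2$ does not localize automatically. I would look for an argument showing that a monitored induced triangulation $G'$ either already exposes an outer-face vertex whose entire $G$-neighborhood lies in $M(S_1)$ (which should be the generic situation, since monitoring a triangulation tends to monitor its interior thoroughly), or is small/special enough that the configuration-by-configuration count applies verbatim inside $G'$. If that dichotomy is not quite clean, the fallback is to strengthen the labeling invariant maintained by the algorithm — e.g. ensuring every selected vertex has its closed neighborhood monitored — so that clause (a) is essentially automatic; I would check whether the labels the algorithm assigns already guarantee this.
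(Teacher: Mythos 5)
Your overall strategy---a labelling/charging scheme in which each selected vertex is responsible for a disjoint set of labelled vertices, with disjointness coming from Observation~\ref{obs:disjoint} and the common-vertex loop---is exactly the paper's. Two points of comparison are worth recording. First, your accounting (``$4$ labels per selected vertex plus a global slack of $2$ supplied by the first configuration'') does not quite match what the algorithm guarantees: a configuration such as the one handled in the second \emph{while} loop yields only five labelled vertices, not six, so the first selected vertex need not supply the extra $2$ by itself. The paper instead shows \emph{every} selected vertex carries at least five labels, whence $|M(S_1)|\ge 5|S_1|\ge 4|S_1|+2$ as soon as $|S_1|\ge 2$, and treats $|S_1|=1$ separately using the fact that the chosen vertex has degree at least five. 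Your totals come out the same, but the per-configuration attribution as you state it is not literally correct.

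The genuine gap is in part (iii), which you leave as a conditional sketch. The ``fallback'' you mention at the end is in fact the paper's argument, and it is immediate: since $M(S_1)\supseteq N[S_1]$ by definition, every $v\in S_1$ automatically has $N_G[v]\subseteq M(S_1)$, so no strengthening of the invariant is needed. Consequently the ``boundary bookkeeping'' for a configuration straddling the outer face of $G'$, which you identify as where the real work lies, never arises: if some vertex labelled $v$ lies outside $G'$ while $v\in S_1\cap V(G')$, then $v$ lies on the outer face of the induced triangulation $G'$ and clause (a) of Property~$(*)$ holds outright; otherwise every label class of every selected vertex of $G'$ lies wholly inside $G'$, and the five-per-vertex count localizes to give clause (b) when $|S_1\cap V(G')|\ge 2$. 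The one subcase you do not address at all is $|S_1\cap V(G')|=1$, where five labelled vertices only give $|V(G')|\ge 5$ and the bound $\frac{|V(G')|-2}{4}\ge 1$ requires $|V(G')|\ge 6$; the paper closes this by observing that a single label class either has at least six vertices or does not itself form an induced triangulation, so $G'$ must contain additional vertices. You should make these two steps explicit rather than leaving them as a dichotomy you ``would look for.''
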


\begin{proof}
  (i) Every b-vertex in the graph belongs to one of the configurations of Figure~\ref{fig:final_configs}. The selected vertices in each configuration monitor all the b-vertices of the configuration, and thus the algorithm monitors all such vertices. Taking any vertex of a facial octahedron monitors the whole octahedron, thus all facial octahedra are monitored as well.

  (ii) If the graph is dealt with by the first \emph{if}s, the statement is straightforward. Otherwise, we first ensure that for each vertex $u \in S_1$, there are indeed at least five vertices labeled with $u$. For configurations 1, 3, 4 and 5, this is clear by definition of the circled vertices. For configuration 2, the vertex taken plus the (at least) three b-vertices of the path plus at least one 3-vertex make (at least) five labeled vertices. For every vertex $u$ added in the second \emph{while} loop, there are at least two vertices labeled with $u$ in each of the two configurations, which together with $u$ itself makes five vertices.
  For vertices added in the last \emph{while} loop, at least six vertices are labeled with $u$ each time.

  Now, we show that each vertex receives at most one label during Algorithm~\ref{alg:configs}. By Observation~\ref{obs:disjoint}, only vertices on the outer face of some configuration may be labeled several times, and so in only two cases: they may receive their own label when they are themselves added to $S_1$, or they may receive a label during the last \emph{while} loop if they are in a non-monitored configuration 6, 7 or 8 disjoint from all remaining non-monitored configurations. Since these last configurations are monitored by any vertex of their outer face, all vertices are labeled at most once.

  If $S_1$ contains two or more vertices at the end of the algorithm, the statement is proved. If $S_1$ is reduced to a singleton, since the chosen vertex is of degree at least five, the statement holds.

  (iii) Let $G'$ be an induced triangulation of $G$ monitored after Algorithm~\ref{alg:configs}. If $|S_1 \cup V(G')| = 0$, then Property $(*)$.(b) holds. Assume then $|S_1 \cup V(G')| > 0$.
  If there is a vertex $v \in S_1$ such that some vertices labeled with $v$ are not in $G'$, then $v$ is a vertex of the outer face of $G'$ and Property $(*)$.(a) holds.
  Otherwise, for every vertex $v \in S_1 \cap V(G')$, all vertices with label $v$ (which are at least five as said above) are in $G'$. If $|S_1 \cap V(G')|\ge 2$, then this is sufficient to deduce that Property $(*)$.(b) holds. Otherwise, we observe that the set of vertices bearing a same label $u$ either does not form an induced triangulation or is of size at least six, so $G'$ contains at least six vertices and the statement also holds.
\end{proof}

In the following, $S_1$ denotes the output of Algorithm~\ref{alg:configs} applied to the graph $G$. Note that we can now forget the labels put on vertices during Algorithm~\ref{alg:configs}.

\subsection{Expansion of $S_1$}\label{subsec:expansion}

The next step consists in selecting greedily any vertex that increases the set of monitored vertices by at least four.
We first make a small observation.

In the following, the graphs of the form $P_2+P_k$ (i.e., formed by two vertices both adjacent to all vertices of a path $P_k$) for some $k\ge 1$ are called \emph{tower graphs}.
We remark that the only maximal planar graphs of order $n\leq 6$ are the complete graphs $K_3$ and $K_4$, the graphs $P_2+P_3$ and $P_2+P_4$, the octahedron, and the flip-octahedron (see Figure~\ref{fig:small_tri}).

\begin{figure}[ht]
  \centering
        \includegraphics[width=0.13\textwidth]{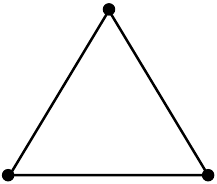} ~
        \includegraphics[width=0.13\textwidth]{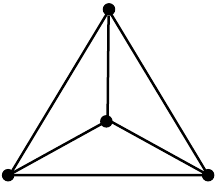} ~
        \includegraphics[width=0.13\textwidth]{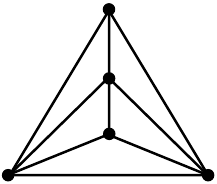} ~
        \includegraphics[width=0.13\textwidth]{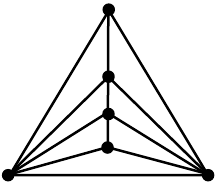} ~
        \includegraphics[width=0.3\textwidth]{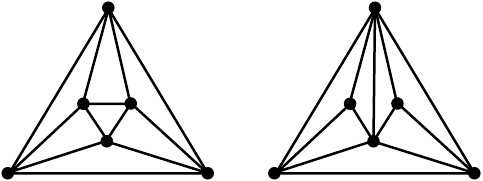}
    \caption{The maximal planar graphs of
    order $n\leq 6$.}\label{fig:small_tri}
\end{figure}

\begin{obs} \label{rem:internal}
  Let $G$ be a triangulation.
    Unless $G$ is an octahedron or a tower graph $P_2 + P_k$ (for some $k\ge1$), one interior vertex of $G$ has degree at least 5.
\end{obs}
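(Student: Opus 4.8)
I want to show that any triangulation $G$ that is neither an octahedron nor a tower graph $P_2+P_k$ has an interior vertex of degree at least $5$. The natural tool is Euler's formula / the degree-sum estimate: in a triangulation on $n$ vertices there are $3n-6$ edges, so $\sum_{v}\deg(v) = 6n-12$, meaning the average degree is just under $6$. The three exterior vertices can have large degree and will absorb the "excess below $6$", so I first need to quantify how much degree the outer triangle can carry, and then argue that if every interior vertex had degree $\le 4$ the total would be too small, except in a few degenerate cases.

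**Key steps.** First I fix the embedding so the outer face is a triangle $xyz$ (every maximal planar graph has such an embedding). Write $I = V(G)\setminus\{x,y,z\}$ for the set of interior vertices, $|I| = n-3$. Suppose for contradiction that every vertex in $I$ has degree $3$ or $4$. Then $\sum_{v\in I}\deg(v) \le 4(n-3)$, so $\deg(x)+\deg(y)+\deg(z) \ge 6n-12 - 4(n-3) = 2n$. Since each of $x,y,z$ has degree at most $n-1$, this forces $n$ to be small — concretely $2n \le 3(n-1)$ is always true, so this crude bound alone is not enough and I need to be sharper. The improvement: I use that $G[I]$, with the inherited embedding, is itself a plane graph, so it has at most $3(n-3)-6 = 3n-15$ edges (for $n-3\ge 3$), hence $\sum_{v\in I}\deg_{G[I]}(v)\le 2(3n-15)$; every edge from $I$ to $\{x,y,z\}$ is counted once in $\deg(x)+\deg(y)+\deg(z)$, and there are exactly $(3n-6) - (\text{edges inside }I) - 3 \ge (3n-6)-(3n-15)-3 = 6$ such edges at minimum, but I actually want an upper bound on edges inside $I$ combined with the degree-$\le 4$ hypothesis. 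Cleanly: if every interior vertex has degree $\le 4$, then each interior vertex sees at most $4$ edges total, so counting edge-endpoints at interior vertices gives $2\cdot(\text{edges inside }I) + (\text{edges from }I\text{ to outer triangle}) \le 4(n-3)$. Combined with the total edge count $3n-6 = (\text{edges inside }I) + (\text{edges from }I\text{ to outer}) + 3$, eliminating the cross-edges yields $(\text{edges inside }I) \ge 2(n-3) - (3n-9) = -n+3$, which is vacuous; going the other direction, cross-edges $\ge 2(3n-9) - 4(n-3) = 2n-6$. So the outer triangle receives at least $2n-6$ edges from the $n-3$ interior vertices; since it has only $3$ vertices each of degree $\le n-1$ and the outer triangle's three mutual edges use up $2$ of each vertex's degree, the cross-edge capacity is at most $3(n-1)-6 = 3n-9$, and $2n-6 \le 3n-9$ gives $n\ge 3$, still vacuous. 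The genuinely useful refinement is to also use that interior vertices of degree $3$ force a $K_4$ and interior vertices of degree $4$ sit inside a "diamond", so in fact, when all interior vertices have degree $\le 4$, one shows directly by a local structural argument (peeling off a degree-$3$ interior vertex, or analyzing a degree-$4$ interior vertex whose deletion-and-retriangulation keeps the property) that $G$ must be built up from $K_4$ by repeatedly stacking a vertex in a triangular face or adding an apex over a path — which is exactly the class $\{$octahedron, tower graph$\}$ after excluding stacking (stacking creates a degree-$3$ vertex but can itself be iterated indefinitely — so I must re-examine: stacked triangulations do have all-interior-degree possibly small, hmm, but a repeatedly stacked triangle has the newly stacked vertex of degree $3$ and earlier ones of growing degree, so some interior vertex has degree $\ge 5$ as soon as $n$ is large enough).

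**Cleanest route.** Rather than the edge-counting fight above, I would argue by induction on $n$. Base cases $n\le 6$ are handled by enumerating the six small triangulations listed in the paper (Figure~\ref{fig:small_tri}): $K_3,K_4,P_2+P_3,P_2+P_4$, octahedron, flip-octahedron — and checking the claim directly (e.g. the flip-octahedron has an interior vertex of degree $5$). For $n\ge 7$: pick a vertex $v$ of minimum degree; it has degree $3,4$ or $5$. If $\deg(v)=5$ and $v$ is interior, done. If $\deg(v)\le 4$, or $\deg(v)=5$ but $v$ is exterior, delete $v$ and retriangulate its face to get a triangulation $G'$ on $n-1$ vertices; apply induction to get an interior vertex $w$ of $G'$ with $\deg_{G'}(w)\ge 5$, and check that $w$ is still interior in $G$ with $\deg_G(w)\ge\deg_{G'}(w)\ge 5$ — the subtlety being that retriangulation after deleting a degree-$4$ or degree-$5$ vertex can *lower* the degree of a neighbor by at most $\deg(v)-3$, so I need $\deg_{G'}(w)\ge 5$ to come from a $w$ untouched by the surgery, which I can guarantee by choosing $v$ so that not all high-degree vertices are its neighbors — this works once $n$ is large, and the exceptional cases where it fails are precisely when $G$ is small or has the special tower/octahedron structure.

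**Main obstacle.** The delicate point is the inductive surgery: deleting a low-degree vertex and retriangulating can destroy the degree-$\ge 5$ interior vertex supplied by induction, or turn an interior vertex exterior. I expect the bulk of the work is a careful case analysis on $\deg(v)\in\{3,4,5\}$ and the position of $v$ relative to the outer triangle, showing that the exceptional configurations where no safe reduction exists are exactly the octahedron and the tower graphs $P_2+P_k$ (where, indeed, every interior vertex genuinely has degree $4$). This case analysis, while elementary, is where all the real content sits; everything else is Euler's formula and bookkeeping.
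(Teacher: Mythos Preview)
Your proposal is not a proof but a plan, and you say so yourself: ``This case analysis, while elementary, is where all the real content sits.'' That case analysis is never carried out. The inductive surgery you sketch has two genuine pitfalls you flag but do not resolve: (i) after deleting a low-degree vertex and retriangulating, the smaller triangulation $G'$ may itself be a tower graph or an octahedron, so induction gives you nothing; and (ii) even when $G'$ has an interior degree-$\ge 5$ vertex $w$, the retriangulation can lower $\deg(w)$ or make $w$ exterior. Your suggested fix (``choose $v$ so that not all high-degree vertices are its neighbours'') is not justified and is circular in spirit, since you do not yet know there is more than one such vertex. The edge-counting attempts you correctly abandon as vacuous.

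The paper avoids induction entirely. Assuming all interior vertices have degree $\le 4$ and $n\ge 7$, it picks the exterior vertex $u$ of maximum degree, lists its interior neighbours as a path $u_1,\ldots,u_k$ between the other two exterior vertices $v,w$, and lets $\ell$ be the largest index with $u_1,\ldots,u_\ell$ all adjacent to $v$. If $\ell=k$ the graph is a tower; otherwise the edge $vu_\ell$ needs a second facial triangle $[vu_\ell t]$ with $t\notin\{u,u_{\ell-1}\}$, which already gives $u_\ell$ five neighbours when $\ell>1$. The residual case $\ell=1$ is then finished by a short local chase forcing $k=2$ and an octahedron. This direct structural argument is short, needs no surgery, and never leaves the graph $G$; it is the missing ``real content'' your outline defers.
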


\begin{proof}
  Suppose $G$ is not an octahedron or a tower graph. If $G$ is a flip-octahedron (last configuration of Figure~\ref{fig:small_tri}), then one of its interior vertices has degree five. Otherwise, by the preceding observation, $G$ contains at least seven vertices.
    Suppose by way of contradiction that all interior vertices of $G$ have degree at most 4.
    Denote by $u$ the exterior vertex of $G$ with maximum degree, $v,w$ the other two exterior vertices of $G$,
    and $u_1, \ldots, u_k$ the interior neighbors of $u$ ($k\ge2$ or $G$ is $K_4$), so that $(v u_1 \ldots u_k w)$ form a cycle.
    Without loss of generality, we assume that $v$ is adjacent to no less vertices among $u_1,\ldots, u_k$ than $w$ is.

    Let $\ell$ be the maximum integer such that for all $i\le\ell$, $u_i$ is adjacent to $v$. Since $G$ is not a tower graph, $\ell<k$.
    Observe that since $v$ is not adjacent to $u_{\ell+1}$,
    $u_\ell$ and $v$ have a common neighbor $t$ (that is neither $u_{\ell-1}$ nor $u$) to make another face on the edge $vu_\ell$.
    If $\ell >1$, then $v$, $t$, $u$, $u_{\ell-1}$ and $u_{\ell+1}$ make five neighbors to $u_\ell$, a contradiction.

    So $\ell=1$ and since $u_2$ is not adjacent to $v$, $t\neq u_2$. (Note that $t\neq w$ or $v$ would have only one neighbor among $u_1,\ldots,u_k$ while $w$ has at least two, contradicting our assumption.)
    Thus $u_1$ has at least four neighbors: $u$, $v$, $u_2$ and $t$. By our initial assumption, $[u_1 u_2 t]$ is a facial triangle.
    Now if $k\ge 3$, $u_2$ also already has four neighbors so $[u_2u_3t]$ form a facial triangle.
    But then $t$ and $u_3$ are already of degree four, so it is not possible to form another facial triangle containing the edge $tu_3$, a contradiction.
    So $k=2$, and $[u_2 w t]$ is facial.
    But then we get an induced octahedron where the only non facial triangle is $[v t w]$,
    in which adding a vertex would raise the degree of $t$ to more than $4$, a contradiction. This concludes the proof.
\end{proof}

Let us now proceed with the second part of the algorithm defining a power dominating set.
Assume that after Algorithm~\ref{alg:configs}, $M(S_1) \neq V(G)$.
We now apply Algorithm~\ref{alg:pds2} that builds a set of vertices $S_2 \subset V(G)$ by iteratively expanding $S_1$ in such a way that each addition of a vertex increases by at least four the number of monitored vertices. Moreover, at each round, the vertex added to $S_2$ has maximal degree in $G$ among all candidate vertices.

\begin{algorithm}
  \DontPrintSemicolon
    \SetAlgoLined

    \KwIn{A triangulation $G$ of order $n\geq 6$}

    \KwOut{A set $S_2 \subseteq V(G)$ with $|S_2| \leq \frac{|M(S_2)|-2}{4}$}

    $S_2 := \mbox{ Algorithm~\ref{alg:configs}}(G)$\;
    $M := M(S_2)$\;
    \While { $\exists$ $u$ in $V(G)\setminus S_2$ such that $|M(S_2\cup \{u\})|\geq |M|+4$}{
      Select such a vertex $u$ of maximum degree in $G$.\\
        $S_2 \leftarrow S_2\cup \{u\}$\;
    $M\leftarrow M(S_2)$\;}
  Return $S_2$\;
    \caption{Greedy selection of vertices to expand $S_1$}
    \label{alg:pds2}
\end{algorithm}
\medskip

We now prove the following lemma:

\begin{lemma} \label{lem:step3}
  Let $S_2$ be the output of Algorithm~\ref{alg:pds2} applied to $G$. The following statements hold:
    \begin{enumerate}
      \item[(i)] $|S_2| \leq \frac{|M(S_2)|-2}{4}$.
      \item[(ii)] $S_2$ has Property $(*)$ in $G$.
    \end{enumerate}
\end{lemma}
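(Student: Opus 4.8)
\textbf{Proof plan for Lemma~\ref{lem:step3}.}

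The plan is to prove (i) and (ii) together by induction on the number of iterations of the \emph{while} loop in Algorithm~\ref{alg:pds2}, using Lemma~\ref{lem:configs} as the base case. For the base case ($S_2 = S_1$), part (i) is exactly Lemma~\ref{lem:configs}(ii) when $S_1 \neq \emptyset$; when $S_1 = \emptyset$ we must have $M(S_1)$ small or empty, and since the loop would then immediately need a vertex gaining four monitored vertices, either the loop runs (handled by the inductive step) or $G$ is already monitored with $S_2 = \emptyset$, in which case $n \leq $ a small constant and one checks $0 \leq \frac{n-2}{4}$ directly. Part (ii) of the base case is Lemma~\ref{lem:configs}(iii).

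For the inductive step of (i): each iteration adds exactly one vertex to $S_2$ and, by the loop guard, increases $|M|$ by at least $4$. So if before the step $|S_2'| \leq \frac{|M(S_2')| - 2}{4}$, then after it $|S_2' \cup\{u\}| = |S_2'| + 1 \leq \frac{|M(S_2')|-2}{4} + 1 = \frac{|M(S_2')| + 2}{4} \leq \frac{|M(S_2 \cup \{u\})| - 2}{4}$, which is the claim. Thus (i) is essentially a telescoping argument; the only subtlety is the initialization, which is why Observation~\ref{rem:internal} and the enumeration of small triangulations in Figure~\ref{fig:small_tri} are invoked earlier — to pin down the exceptional graphs where $S_1 = \emptyset$ and the greedy loop cannot start.

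For (ii) — that $S_2$ has Property $(*)$ — I would also argue inductively, but here is where I expect the main obstacle. Suppose $G' \subseteq G$ is an induced triangulation of order $\geq 4$ monitored by $S_2 = S_2' \cup \{u\}$, and assume $S_2'$ has Property $(*)$. If $u \notin V(G')$ or $G'$ is already monitored by $S_2'$, we can try to reduce to the inductive hypothesis, but $G'$ might only become monitored \emph{because} of $u$, and $u$ might lie inside $G'$. The key idea should be: since $u$ was chosen to have maximum degree among all vertices gaining at least four monitored vertices, and by Observation~\ref{rem:internal} $G'$ (unless an octahedron or tower graph, which must be checked separately) has an interior vertex of degree $\geq 5$ — such a high-degree vertex was a legitimate candidate, so $u$ has degree $\geq 5$ as well. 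One then wants to show that either $u$ is on the outer face of $G'$ with $N[u]$ monitored (case (a)), or the counting $|S_2 \cap V(G')| \leq \frac{|V(G')| - 2}{4}$ holds. The delicate point is handling the case where $u$ is interior to $G'$: then $N[u] \subseteq M(S_2)$ and $|N[u]| \geq 6$, and one needs to combine this "free" contribution with the inductive bound on $S_2' \cap V(G')$ via a careful accounting (essentially re-running the telescoping argument restricted to $G'$, charging the $\geq 4$ newly monitored vertices at each relevant step). Making this restriction-to-$G'$ bookkeeping airtight — in particular ensuring that the vertices newly monitored in $G$ that we charge actually lie in $G'$, and dealing with the octahedron/tower-graph exceptions of Observation~\ref{rem:internal} — is where the real work lies; the rest is routine.
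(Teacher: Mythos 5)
Your part (i) is essentially the paper's argument: the telescoping induction ($+1$ vertex versus $+4$ monitored vertices per round) is exactly right, and the base case is indeed where Observation~\ref{rem:internal} enters. One correction, though: if $S_1=\emptyset$ then $M(S_1)=\emptyset$ and the loop necessarily runs (every vertex of a triangulation has degree at least $3$), so there is no exceptional case in which ``the greedy loop cannot start''. The actual role of Observation~\ref{rem:internal} is to guarantee that the first selected vertex, being chosen of maximum degree, has degree at least $5$, hence $|M^{(1)}|\ge 6$ and $1\le\frac{|M^{(1)}|-2}{4}$; the exceptions of that observation (octahedra and tower graphs) are precisely the graphs caught by the first \emph{if} of Algorithm~\ref{alg:configs}, so they never reach the loop with $S_1=\emptyset$.

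For part (ii) you have located the difficulty but not resolved it, and the two points you defer are exactly the content of the paper's proof. First, when $G'$ is a tower graph the maximum-degree selection rule is used for a purpose different from the one you assign it: for every interior vertex $v$ of a tower graph there is an exterior vertex $v'$ with $N[v]\subseteq N[v']$ and $d(v')>d(v)$, so the greedy step never selects an interior vertex of a tower graph; since $G'$ is nonetheless monitored, some exterior vertex of $G'$ is in $S_2$ or has propagated, hence has its closed neighborhood monitored, and case (a) of Property~$(*)$ holds --- no counting is needed there. (Octahedra and flip-octahedra are disposed of by Algorithm~\ref{alg:configs}.) Second, for $|V(G')|\ge 6$ the ``restriction to $G'$'' bookkeeping that you flag as the real work is closed precisely by exploiting the dichotomy of Property~$(*)$: assume case (a) fails for $G'$; then every exterior vertex of $G'$ has a non-monitored neighbor, so it never propagates and is never a candidate for $S_2$, whence every vertex of $S_2\cap V(G')$ is interior to $G'$ and everything it monitors (by domination or propagation) lies in $G'$. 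Thus each round that contributes a vertex to $S_2\cap V(G')$ increases the monitored subset of $V(G')$ by at least four, and the telescoping of (i) applies verbatim inside $G'$, with the base case supplied by Lemma~\ref{lem:configs}(iii) (at least six labeled vertices per label) or by the degree-at-least-$5$ argument. Your alternative suggestion --- deducing that the added vertex $u$ itself has degree at least $5$ because some interior vertex of $G'$ of degree at least $5$ ``was a legitimate candidate'' --- does not hold at a general round, since that vertex may by then gain fewer than four monitored vertices. Without the two observations above the plan does not close.
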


\begin{proof}

  (i)
    Let $\ell$ denote the number of rounds of Algorithm~\ref{alg:pds2} (i.e., the number of vertices added during the ``while'' loop).
    For $0 \le i\le \ell$, we denote by $S_2^{(i)}$ the set of selected vertices after the $i$-th round of Algorithm~\ref{alg:pds2} (where $S_2^{(0)}$ denotes the result of Algorithm~\ref{alg:configs} applied to $G$),
    and by $M^{(i)}$ the set $M(S_2^{(i)})$ of vertices monitored by $S_2^{(i)}$.
    The algorithm ensures that for all $0 \leq i \leq \ell-1 $, we have that if $S_2^{(i)}$ is not the empty set, then $|S_2^{(i+1)}| = |S_2^{(i)}|+1$ and $|M^{(i+1)}| \geq |M^{(i)}|+4$.
    So, provided we can establish a base case (either for $i=0$ or $i=1$), Statement (i) holds by induction on $\ell$.
    If $S_2^{(0)}$ is not the empty set,
    then $1\leq |S_2^{(0)}| \leq   \frac{|M^{(0)}|}{6}$, and thus  $|S_2^{(0)}| \leq \frac{|M^{(0)}|-2}{4}$.
    Otherwise, by Observation~\ref{rem:internal}, the first vertex added to $S_2$ is of degree at least 5 so $|M^{(1)}| \geq 6$.
    Thus this time $\frac{|M^{(1)}|-2}{4} \geq 1=|S_2^{(1)}|$, and the desired result follows by induction.

    (ii)  Let $G'\subseteq G$ be an induced triangulation monitored
    by $S_2$ after Algorithm~\ref{alg:pds2}.
    First assume $G'$ is isomorphic to a tower graph. Note that no vertex selected during Algorithm~\ref{alg:configs} is an interior vertex of a tower graph.
    Observe that for each interior vertex $v$ of a tower graph, there exists an exterior vertex $v'$ such that $N[v] \subseteq N[v']$ and $d(v') > d(v)$.
    Then at any given round $i$, Algorithm~\ref{alg:pds2} would rather select $v'$ instead of any interior vertex $v$, and thus no interior vertex of $G'$ is in $S_2$.
    Since $G'$ is monitored, then at least one of the exterior vertices of $G'$ (say $u$)
    is in $S_2$ or has propagated to an interior vertex of $G'$,
    so $N[u] \subseteq M$ and Statement (a) of Property~$(*)$ holds for $S_2$ in $G$.
    If $G'$ is isomorphic to the octahedron or to the flip-octahedron, then one of the exterior vertices of $G'$ is in $S_1$ thanks to Algorithm~\ref{alg:configs}, and Property~$(*)$.(a) also holds.

    Assume now that $|V(G')|\ge 6$, and suppose that Property~$(*)$.(a) does not hold for $G'$.
    Then some vertices of $G'$ belong to $S_2$, and vertices of $V(G')\cap S_2$ only monitor vertices of $G'$ (no propagation may occur from a vertex of the outer face of $G'$).
    Then the same proof as for (i) above restricted to $G'$ shows that Property $(*)$.(b) holds.
    This proves that Property~$(*)$ holds for $S_2$ in $G$.
\end{proof}

In the following, $S_2$ denotes the output of Algorithm~\ref{alg:pds2} on the graph $G$.

\subsection{Monitoring the remaining components}\label{subsec:wrapping}

After Algorithm~\ref{alg:pds2}, some vertices of the graph may still remain non-monitored. Algorithm~\ref{alg:last} thus completes the set $S_2$ into a power dominating set of $G$, while keeping the wanted bound. In order to succeed, we need to have a better understanding of the structure of the graph around these non-monitored vertices. More precisely, we show that the graph can be described in terms of {\em splitting structures}, (see Figure~\ref{fig:five_cases}): they are structures composed of a set $C = \{u_1,u_2,u_3\}$ of three non-monitored vertices and of two \emph{associated triangulations} $G_1$ and $G_2$ whose exterior vertices are monitored.

We make use of the following lemma, that is proved in Section~\ref{sec:proof}. We denote by $\overline{M}_G(S)$ the set of vertices not monitored by $S$ in $G$ (i.e., $V(G) \setminus M_G(S)$).

\begin{lemma}\label{lem:urien-wqt}
  Let $G$ be a triangulation, $S$ a subset of vertices of $G$ monitoring all b-vertices and facial octahedra. Let $G'$ an induced triangulation of $G$. If $\overline{M}_G(S)\cap V(G')\neq \emptyset$, and for any $v\in V(G)$, $|M_G(S\cup\{v\})|\le |M_G(S)|+ 3$ (i.e., Algorithm~\ref{alg:pds2} stopped), then $G'$ corresponds to one of the configurations depicted in Figure~\ref{fig:five_cases}.
\end{lemma}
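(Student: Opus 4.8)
The plan is to analyze the structure of an induced triangulation $G'$ containing a non-monitored vertex, using the two hypotheses: that $S$ monitors all b-vertices and facial octahedra, and that no vertex can be added to increase $|M_G(S)|$ by four or more. First I would observe that since $G'$ is a triangulation of order at least 3, and it contains at least one non-monitored vertex, there must be a "frontier" inside $G'$ between $M_G(S)$ and $\oM_G(S)$; because propagation has stopped, every monitored vertex of $G'$ that is adjacent to a non-monitored one must have at least two non-monitored neighbors (in $G$, hence possibly in $G'$ too, but one has to be careful since propagation uses the whole graph $G$, not $G'$). I would first argue that it suffices to understand the non-monitored region: let $K$ be a connected component of $\oM_G(S)$ meeting $V(G')$. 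The boundary of $K$ in $G$ is a cycle (or union of cycles) of monitored vertices, each of which has $\geq 2$ neighbors in $K$.

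Next I would exploit the "no gain of 4" hypothesis quantitatively. Pick any non-monitored vertex $v$ of $G'$ and consider $M_G(S\cup\{v\})$. Adding $v$ immediately monitors $N[v]\cap\oM_G(S)$, and then propagation can cascade. The hypothesis forces $|N[v]\cap \oM_G(S)| + (\text{subsequent propagation}) \le 3$. In particular every non-monitored vertex has at most two non-monitored neighbors, so $G[\oM_G(S)]$ has maximum degree $\le 2$, i.e. it is a disjoint union of paths and cycles. Moreover, adding a vertex $v$ on the monitored boundary adjacent to part of $K$ must also fail to gain 4, which tightly restricts how the monitored boundary wraps around $K$. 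Combining these local constraints with the planarity and the triangulation condition of $G'$ (every internal face is a triangle), I would do a case analysis on the size and shape of the non-monitored component $K\cap V(G')$: if $|K\cap V(G')|\ge 4$ one derives either a vertex of large gain or a forbidden configuration (a 3-vertex all of whose neighbors are outside, or a b-vertex, or a facial octahedron — all excluded by hypothesis or by the results of Section~\ref{sec:b-vertices}); this should pin $|K\cap V(G')|$ down to exactly $3$, giving the set $C=\{u_1,u_2,u_3\}$, and then the surrounding triangulated structure splits into the two "associated triangulations" $G_1, G_2$ glued along $C$, matching the list in Figure~\ref{fig:five_cases}.

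The key steps in order: (1) show $G[\oM_G(S)]$ has maximum degree $\le 2$ from the stopping condition; (2) show that a non-monitored path/cycle component of length $\ge 4$, or with the wrong attachment to its monitored boundary, yields a vertex whose addition gains $\ge 4$ — here one uses that propagation along a path of non-monitored vertices, once the endpoint is seeded, runs all the way; (3) rule out the remaining small exceptional shapes using that $S$ already monitors all b-vertices, all facial octahedra, and (via Lemma~\ref{lem:3-vertex_in_blue_triangle} and Observation~\ref{rem:internal}) that interior vertices of degree $3$ or $4$ in such a leftover region would force one of the already-handled special configurations or a tower graph; (4) conclude $|C|=3$ and read off that the two sides of $C$ in $G'$ are triangulations with monitored outer boundary, enumerate the gluings to obtain Figure~\ref{fig:five_cases}.

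The main obstacle I expect is step (2) together with the bookkeeping in step (3): the gain from adding a vertex $v$ is not just local — propagation can travel far through $G$ (not merely through $G'$), so one must be careful that "Algorithm~\ref{alg:pds2} stopped" genuinely constrains the configuration inside $G'$ and is not spoiled by cascades re-entering $M_G(S)$ elsewhere. Controlling this requires showing that a non-monitored path of length $\ge 4$ (or an appropriately attached cycle) always admits a seed vertex from which propagation monitors at least four new vertices \emph{before} it could possibly be blocked, which in turn uses the triangulation structure to guarantee each interior non-monitored vertex on the path has exactly the two path-neighbors as its non-monitored neighbors and hence propagates. Getting the constants exactly right (gain $=3$ allowed, $\ge 4$ forbidden) across all attachment patterns is where the real work lies, and it is presumably why this lemma is deferred to its own section.
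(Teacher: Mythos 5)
Your opening moves match the paper's: from the stopping condition you correctly derive that every monitored vertex with a non-monitored neighbour has at least two of them, that $G[\oM]$ has maximum degree at most $2$, and that each connected component of $G[\oM]$ is a path or cycle of order at most $3$ (since seeding any vertex of such a component monitors all of it). But from there the proposal contains a structural error and skips the bulk of the work. You claim to pin the component $K$ down to \emph{exactly} three vertices and then read off the splitting structure; in fact the component may have order $1$, $2$ or $3$, and the three non-monitored vertices $C=\{u_1,u_2,u_3\}$ of a splitting structure are in general \emph{not} a single component of $G[\oM]$: in the $P_2$ case the third non-monitored vertex sits at distance $2$ from the edge, and in configurations (f) and (g) the three vertices are pairwise non-adjacent. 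Establishing that $G'$ contains exactly three non-monitored vertices and locating them relative to one another is precisely where the difficulty lies, and your sketch offers no mechanism for it.

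Two concrete ingredients of the paper's proof are absent from your plan. First, the recurring technical tool is a planarity lemma (Lemma~\ref{lem:chainsaw2}): for a vertex $v$ of degree at least $5$ and any two neighbours $u_1,u_2$, some neighbour $w$ of $v$ is adjacent to exactly one of them with the triangle $[vu_iw]$ facial --- proved by contracting paths in $N(v)$ to exhibit a $K_5$. This is what repeatedly produces a vertex with only one non-monitored neighbour (hence a contradiction with the stopping condition) and drives essentially every facial-triangle identification in Lemmas~\ref{lem:K3}--\ref{lem:P1}. Second, the independent-set case (Lemma~\ref{lem:P1}) is not a ``small exceptional shape'' to be dismissed via b-vertices and octahedra: one must first show that some monitored vertex has only \emph{two} non-monitored neighbours, which the paper does by building an auxiliary planar graph on $\oM$ (joining two non-monitored vertices when they share a monitored neighbour) and deriving minimum degree $6$, contradicting Euler's formula. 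Nothing in your outline anticipates either argument, so as it stands the proposal is a plausible opening plus an inaccurate endgame rather than a proof.
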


\begin{figure}[tbp]
  \begin{center}
    \begin{tabular}{ccc}
      \includegraphics[scale=0.6,page=2]{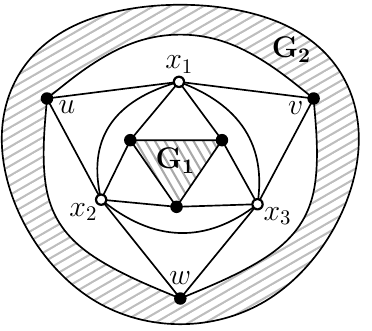} &
            \includegraphics[scale=0.4]{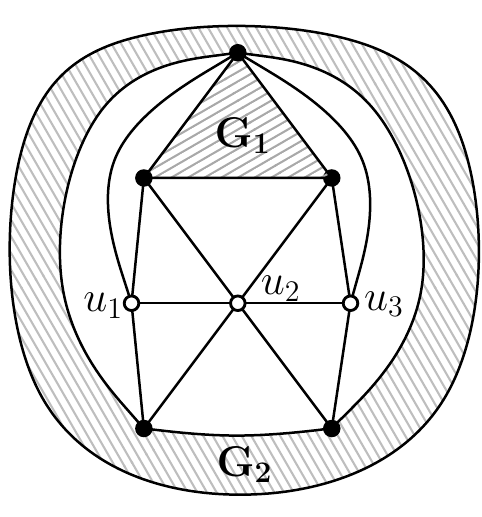} &
            \includegraphics[scale=0.4]{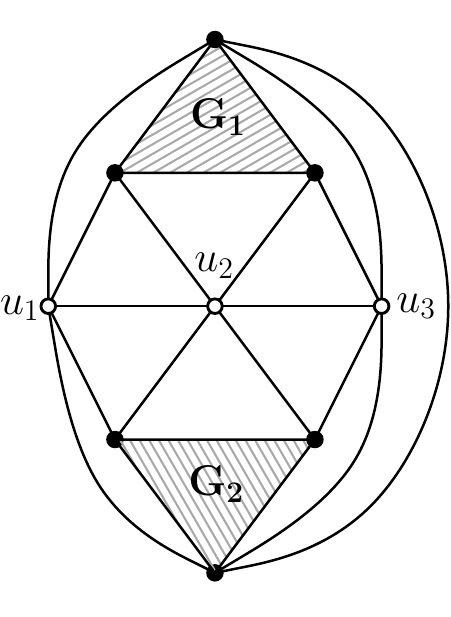} \\
            (a) & (b) & (c) \\
    \end{tabular}

        \begin{tabular}{ccc}
          \includegraphics[scale=0.5]{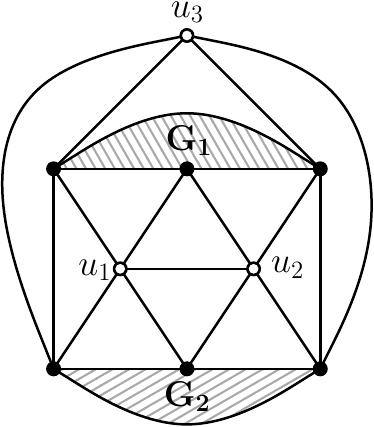} &
            \includegraphics[scale=0.5]{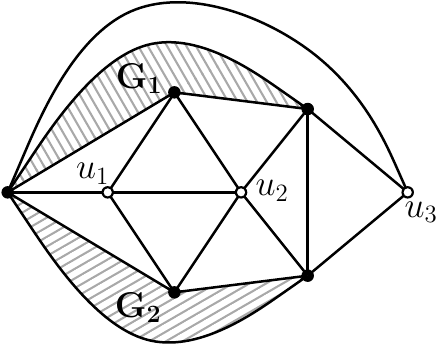} \\
            (d) & (e) \\
        \end{tabular}

        \vspace{5pt}
        \begin{tabular}{ccc}
          \includegraphics[scale=0.6]{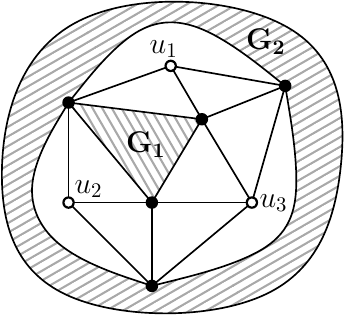} &
            \includegraphics[scale=0.55]{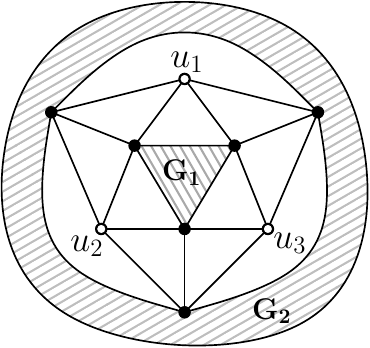} \\
            (f) & (g)\\
        \end{tabular}
  \end{center}
    \caption{The seven different splitting structures and their associated triangulations $G_1$ and $G_2$. White vertices are non monitored.
    All triangles are facial except for $G_1$ and $G_2$.}\label{fig:five_cases}
\end{figure}

Observe that the triangulations associated to a splitting structure may contain non-monitored vertices, in which case we can again apply the above lemma and deduce that they are in turn isomorphic to a splitting structure.

If $M(S_2) = V(G)$, then by Lemma~\ref{lem:step3}, $S_2$ is a power dominating set of $G$ with at most $\frac{n-2}{4}$ vertices.
Otherwise, Algorithm~\ref{alg:last}
recursively goes down to splitting structures whose associated triangulations are completely monitored, in which case it adds a vertex to $S$ to monitor the remaining vertices.

\begin{algorithm}
  \DontPrintSemicolon
    \SetAlgoLined

    \KwIn{A triangulation $G$ of order $n \geq 6$ and an induced triangulation $G'\subseteq G$}

    \KwOut{A set $S \subseteq V(G')$ monitoring $G'$ and such that $|S|\leq \frac{|V(G')|-2}{4}$}
    $S \leftarrow V(G') \cap $Algorithm~\ref{alg:pds2}$(G)$\;
    {
      \If {$\exists$ $u \not\in M_G(S)$}{
        $G_1,G_2 \leftarrow$ triangulations associated to the splitting structure of $G'$ containing $u$\;
            $S'$ $\leftarrow$ Algorithm \ref{alg:last}($G$, $G_1$) \;
            $S''$ $\leftarrow$ Algorithm \ref{alg:last}($G$, $G_2$) \;
        $S \leftarrow S' \cup S'' \cup \{u\}$\;}
    }
    Return $S$\;
    \caption{Monitoring the last vertices}
    \label{alg:last}
\end{algorithm}

We now prove that after the addition of vertices during Algorithm~\ref{alg:last}, the wanted bound still holds.

\begin{lemma} \label{lem:margin}
  Let $G'$ be an induced triangulation of $G$, and $S$ a subset of vertices of $G$ monitoring all b-vertices and facial octahedra. Let $C$ be a splitting structure in $G'$ with $G_1$ and $G_2$ its associated triangulations. Let $u$ be a vertex of $C$, and let $S'$ denote the set $S \cap V(G_1)$ and $S''$ the set $S \cap V(G_2)$. If $G_1$ and $G_2$ are monitored by $S$ and $S'$ and $S''$ have Property $(*)$ respectively in $G_1$ and $G_2$, then $S' \cup S'' \cup \{u\}$ has Property $(*)$ in $G'$, and $G'$ is monitored.
\end{lemma}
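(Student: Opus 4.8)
The plan is to verify the two conclusions of the lemma separately: first that $G'$ is monitored by $S'\cup S''\cup\{u\}$, and then that this set has Property $(*)$ in $G'$. For the monitoring part, I would start from the structure of a splitting structure as in Figure~\ref{fig:five_cases}: the vertex set $V(G')$ is covered by $V(G_1)\cup V(G_2)\cup C$ where $C=\{u_1,u_2,u_3\}$ is the triple of non-monitored vertices, and every exterior vertex of $G_1$ and of $G_2$ is already monitored by $S$. Since $G_1$ is monitored by $S$ and $S'=S\cap V(G_1)$, and likewise for $G_2$, the first step is to argue that $S'$ monitors all of $G_1$ inside $G'$ and $S''$ monitors all of $G_2$ inside $G'$ — this needs a small argument that propagation inside $G_i$ is not hampered by the ambient graph $G'$ (the exterior vertices of $G_i$ are monitored, so they form a ``closed'' region; one should check, case by case over the seven structures, that the only vertices of $G'$ glued onto $G_i$ outside its outer face are precisely the vertices of $C$, which are the ones we are about to handle). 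Then, with all of $V(G_1)\cup V(G_2)$ monitored and $u\in C$ put into $S$, the closed neighborhood $N[u]$ becomes monitored, and a short case analysis on each of the seven pictures shows that the remaining one or two vertices of $C$ now each have a neighbor that can propagate to them (each $u_i$ lies in a facial triangle with two already-monitored vertices, or $u$ propagates directly). This finishes the monitoring claim.

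For Property $(*)$, let $H\subseteq G'$ be an induced triangulation of order at least $4$ that is monitored by $T:=S'\cup S''\cup\{u\}$; I must show (a) some exterior vertex of $H$ has its closed neighborhood monitored, or (b) $|T\cap V(H)|\le \frac{|V(H)|-2}{4}$. The natural case split is according to how $H$ meets the three pieces $G_1$, $G_2$, $C$. If $V(H)\subseteq V(G_1)$, then $H$ is an induced triangulation of $G_1$ monitored by $S'$ (using that monitoring of $H$ inside $G'$ coincides with monitoring inside $G_1$, again because $G_1$'s outer face is monitored), so Property $(*)$ for $S'$ in $G_1$ gives (a) or (b) for $H$ with respect to $S'$, and since $T\cap V(H)=S'\cap V(H)$ in this case the same conclusion transfers verbatim; symmetrically if $V(H)\subseteq V(G_2)$. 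The remaining, genuinely new case is when $H$ touches $C$, i.e.\ $V(H)$ is not contained in either $G_i$. Here the idea is to use $u$ as the ``witness'' vertex for alternative (a): by the monitoring argument above, after adding $u$ the whole of $N[u]$ in $G$ is monitored, and one checks from Figure~\ref{fig:five_cases} that $u$ (being a vertex of $C$, hence not interior to $G'$ — it is on the cut separating $G_1$ from $G_2$) is an exterior vertex of any induced triangulation $H$ of $G'$ that contains it; if $u\notin V(H)$ but $H$ still meets $C$ via $u_2$ or $u_3$, one argues that $u_2$ (resp.\ $u_3$) itself has all of its closed neighborhood monitored — indeed $C$ has only three vertices and their neighborhoods in the splitting structures are small and fully determined — and that such a vertex is exterior in $H$.

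The main obstacle I expect is the ``touches $C$'' case of Property $(*)$: one has to be careful that the vertex chosen as witness for (a) really is on the outer face of the particular induced triangulation $H$ and that its $G$-closed-neighborhood (not merely its $G'$-closed-neighborhood) is monitored — the definition of Property $(*)$ quantifies over neighbors in the ambient $G$, so I would lean on the hypothesis that $S$ already monitors all b-vertices and facial octahedra and on the fact that the vertices of $C$ and their neighbors are explicitly drawn in each of the seven structures. A secondary subtlety, needed repeatedly, is the ``locality of propagation'' fact: if $G_i$ is an induced triangulation all of whose outer-face vertices are monitored, then $M_{G'}(S)\cap V(G_i)=M_{G_i}(S\cap V(G_i))$, because a vertex outside $G_i$ can never be the last unmonitored neighbor of a vertex strictly inside $G_i$. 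I would state and prove this as a one-line sub-claim and then invoke it freely. Everything else is a finite verification over the seven configurations of Figure~\ref{fig:five_cases}, and the counting in alternative (b) is inherited directly from the Property $(*)$ hypotheses on $G_1$ and $G_2$ together with the trivial observation that adding the single vertex $u$ and the (at least three) vertices of $N[u]\setminus M(S)$ keeps the ratio $\tfrac{|S|}{|V|}$ below $\tfrac14$ when the base sets already satisfy it — exactly the arithmetic used in Lemmas~\ref{lem:configs} and~\ref{lem:step3}.
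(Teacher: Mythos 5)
Your monitoring argument and your ``locality of propagation'' sub-claim match the paper's proof in spirit (the paper derives locality from the fact that after Algorithm~\ref{alg:pds2} every monitored vertex has at most three non-monitored neighbours, so no vertex can be adjacent to two distinct splitting structures; the case analysis over the seven structures is then the same as yours). The gap is in your treatment of Property $(*)$ for the one induced triangulation that really matters, namely $H=G'$ itself. You propose to witness alternative (a) with $u$, asserting that $u\in C$ is ``not interior to $G'$ --- it is on the cut separating $G_1$ from $G_2$''. This is false: in every splitting structure the vertices of $C$ are non-monitored and sandwiched between $G_1$ and $G_2$ strictly inside $G'$, while the outer face of $G'$ consists of monitored vertices (it must, since in the recursion $G'$ is an associated triangulation of a parent structure). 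So $u$ cannot serve as an exterior witness, and alternative (a) can genuinely fail for $G'$: when $G'$ is nested inside a larger splitting structure, \emph{every} exterior vertex of $G'$ has a non-monitored neighbour outside $G'$. The only available conclusion for $H=G'$ is alternative (b), and that is exactly what Corollary~\ref{cor:end} needs in order to sum the bounds over the recursion.

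To get (b) you cannot simply say the counting is ``inherited from the Property $(*)$ hypotheses'': those hypotheses only give the disjunction (a)-or-(b) for $G_1$ and $G_2$. The missing step, which the paper makes explicitly, is that every exterior vertex of $G_1$ and of $G_2$ is adjacent to a vertex of $C$, hence has a non-monitored neighbour, so alternative (a) is impossible for them and they must satisfy (b): $|S'|\le\frac{|V(G_1)|-2}{4}$ and $|S''|\le\frac{|V(G_2)|-2}{4}$. Only then does the arithmetic close, and it closes because of the $-2$ slack together with $|V(G')|\ge|V(G_1)|+|V(G_2)|+2$, giving $|S'\cup S''\cup\{u\}|\le\frac{|V(G_1)|+|V(G_2)|}{4}\le\frac{|V(G')|-2}{4}$. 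Your heuristic ``one new vertex for at least three newly monitored vertices keeps the ratio below $\tfrac14$'' is not a valid substitute ($\tfrac13>\tfrac14$); the argument lives entirely on the additive slack. For induced triangulations $H\subsetneq G'$ your alternative-(a) idea is sound, since once $G'$ is fully monitored any interior vertex of $G'$ has its whole closed $G$-neighbourhood monitored, and such a vertex can be found on the outer face of $H$.
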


\begin{proof}
  First recall that after application of Algorithm~\ref{alg:pds2}, any vertex in $M_G(S)$ has at most three non-monitored neighbors. Therefore, in the induced triangulation $G'$, a vertex adjacent to a vertex in $C$ may not be adjacent to vertices from another configuration $C'$ in $G$, or it would have two non-monitored neighbors in $C$ and two in $C'$, a contradiction. Thus if a vertex can propagate in $G'$, then it can also propagate in $G$.

    We know that $S'$ and $S''$ have Property $(*)$ in repectively $G_1$ and $G_2$, and so $G_1$ and $G_2$ both satisfy either Property $(*)$.(a) or (b).
    Since all exterior vertices of $G_1$ and $G_2$ have non-monitored neighbors, then in fact, $G_1$ and $G_2$ satisfy Property $(*)$.(b). Thus $|S'| \leq \frac{|V(G_1)|-2}{4}$ and $|S''| \leq \frac{|V(G_2)|-2}{4}$.
    We remark that $|V(G')| \geq |V(G_1)| + |V(G_2)| + 2$ in every splitting structure. After adding a vertex $u \in C$, we have:
    \[|S' \cup S'' \cup \{u\}| \leq \frac{|V(G_1)|-2}{4} + \frac{|V(G_2)|-2}{4} + 1 = \frac{|V(G_1| + |V(G_2)|}{4} \leq \frac{|V(G')|-2}{4} \,.\]
    Moreover, the exterior vertices of induced triangulations of $G'$ all have only monitored neighbors (the exterior vertices of $G'$ excepted) and thus $S' \cup S'' \cup \{u\}$ has Property $(*)$ in $G'$.

        To prove that the addition of one vertex of $C$ is sufficient to monitor $G'$, we consider different cases depending on the splitting structure.
    \begin{itemize}
      \item For splitting structures (a), (b) and (c), adding $u_2$, then $u_1$ and $u_3$ are monitored by adjacency.
      \item For splitting structures (d) and (e), adding $u_1$, then $u_2$ is monitored by adjacency, and then any vertex of the outer face of $G_1$ or $G_2$ propagates to $u_3$.
      \item For splitting structures (f) and (g), adding $u_1$, then two exterior vertices of $G_1$ propagate independently to the other two vertices of $C$.
    \end{itemize}
    Thus $G'$ is monitored, which concludes the proof.
\end{proof}

We can now use Lemma~\ref{lem:margin} to prove by direct induction on the splitting structures that at the end of Algorithm~\ref{alg:last}, Property $(*)$ holds for $S$ in $G$. Moreover, the proof of Lemma~\ref{lem:margin} shows that for the set $S$, $G$ satisfies Property $(*)$.(b). Thus the output $S$ of Algorithm~\ref{alg:last} satisfies the wanted bound and the graph is completely monitored.
We thus get the following corollary that concludes the proof of Theorem~\ref{th:main}.

\begin{corollary} \label{cor:end}
  At the end of Algorithm~\ref{alg:last}, $M(S) = V(G)$ and $|S| \leq \frac{|V(G)|-2}{4}$.
\end{corollary}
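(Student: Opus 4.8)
The plan is to establish, by induction on $|V(G')|$ over all induced triangulations $G'$ on which Algorithm~\ref{alg:last} is called during its recursion, the following claim: the set $S$ returned by Algorithm~\ref{alg:last}$(G,G')$ monitors $G'$, satisfies $|S|\le\frac{|V(G')|-2}{4}$, and has Property~$(*)$ in $G'$. Specializing the claim to the outermost call, where $G'=G$, immediately yields $M(S)=V(G)$ and $|S|\le\frac{|V(G)|-2}{4}$, hence also Theorem~\ref{th:main}.

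For the base case I would handle the branch where the test ``$\exists u\notin M_G(S)$'' fails, so that $S=V(G')\cap S_2$ already monitors $G'$. If $G'=G$ this is precisely the case $M(S_2)=V(G)$, and Lemma~\ref{lem:step3} gives both the cardinality bound and Property~$(*)$. If $G'$ is a smaller triangulation reached in the recursion, I would apply Property~$(*)$ of $S_2$ in $G$ (Lemma~\ref{lem:step3}(ii)) to the monitored triangulation $G'$: alternative (b) is exactly the desired bound, and alternative (a) is impossible here because every exterior vertex of such a $G'$ is adjacent to a non-monitored vertex of the enclosing splitting structure, so no exterior vertex of $G'$ has a monitored closed neighborhood; hence (b) holds. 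This disposes of the leaves of the recursion.

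For the inductive step, suppose some $u\in V(G')$ is not monitored. Since $S_2\supseteq S_1$ monitors all b-vertices and facial octahedra (Lemma~\ref{lem:configs}(i)) and Algorithm~\ref{alg:pds2} has terminated, so that no vertex increases the monitored set by four or more, Lemma~\ref{lem:urien-wqt} shows that $G'$ is one of the splitting structures of Figure~\ref{fig:five_cases}, with a three-element non-monitored set $C\ni u$ and associated triangulations $G_1,G_2$. As $|V(G')|\ge|V(G_1)|+|V(G_2)|+2$, both $G_i$ are strictly smaller, so the induction hypothesis gives sets $S'$ and $S''$ that monitor $G_1$ and $G_2$, have Property~$(*)$ in $G_1$ and $G_2$, and satisfy $|S'|\le\frac{|V(G_1)|-2}{4}$, $|S''|\le\frac{|V(G_2)|-2}{4}$. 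With these sets playing the role of $S\cap V(G_1)$ and $S\cap V(G_2)$, Lemma~\ref{lem:margin} applies and shows that $S'\cup S''\cup\{u\}$ monitors $G'$ and has Property~$(*)$ in $G'$; the displayed computation in its proof yields $|S'\cup S''\cup\{u\}|\le\frac{|V(G_1)|-2}{4}+\frac{|V(G_2)|-2}{4}+1\le\frac{|V(G')|-2}{4}$. Since $S'\cup S''\cup\{u\}$ is exactly what Algorithm~\ref{alg:last} returns on $(G,G')$, the induction is complete.

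The point needing the most care — and the reason this wrap-up is not purely mechanical — is to match the recursively-built sets with the single global set $S$ in the hypotheses of Lemma~\ref{lem:margin}, and relatedly to reconcile ``monitored inside $G'$'' with ``monitored in $G$''. For the first point one checks that along disjoint branches of the recursion the added vertices lie in disjoint vertex sets (the separator $C$ is disjoint from $V(G_1)\cup V(G_2)$, and a non-monitored vertex cannot be a shared exterior vertex of $G_1$ and $G_2$), so the output of Algorithm~\ref{alg:last}$(G,G_i)$ really is $S\cap V(G_i)$ for the final $S$. For the second, the first paragraph of the proof of Lemma~\ref{lem:margin} already shows that after Algorithm~\ref{alg:pds2} every monitored vertex has at most three non-monitored neighbors, so a vertex adjacent to $C$ cannot also be adjacent to the non-monitored vertices of another splitting structure, whence any propagation occurring within $G'$ also occurs in $G$. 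Granting these, the induction goes through and, applied to the initial call, proves the corollary.
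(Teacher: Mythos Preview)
Your proposal is correct and follows essentially the same approach as the paper: the paper's proof of the corollary is a one-sentence remark that Lemma~\ref{lem:margin} yields the result ``by direct induction on the splitting structures,'' and you carry out exactly that induction, supplying the base case via Lemma~\ref{lem:step3} and Property~$(*)$ and the inductive step via Lemma~\ref{lem:urien-wqt} and Lemma~\ref{lem:margin}. Your final paragraph on reconciling the recursively built sets with the single global $S$ in the hypotheses of Lemma~\ref{lem:margin}, and on propagation in $G'$ versus $G$, makes explicit a point the paper leaves implicit, but the underlying argument is the same.
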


In the following section, we finally prove Lemma~\ref{lem:urien-wqt}.

\section{Defining splitting structures} \label{sec:proof}

This section is dedicated to the proof of Lemma~\ref{lem:urien-wqt}.
In the following, we work under the assumption of the lemma,
i.e., we assume that the set $S$ monitors all octahedra and b-vertices,
and that the addition of any vertex $v$ to $S$ would extend the set of vertices monitored by $S$ by at most three.
Any vertex contradicting the second part of the assumption is called a \emph{contradicting vertex}.
For simplicity, when $G$ and $S$ are clear from context, we denote $M = M_G(S)$ and $\overline{M} = V(G) \setminus M_G(S)$.

As a direct consequence of the definition of power domination, we get the following observation:

\begin{obs}
  Let $S$ be a set of vertices of $G$ such that for every vertex $v \in V(G)$, $|M_G(S \cup \{v\})| \leq  |M_G(S)| + 3$. The following properties hold:
  \begin{enumerate}
    \item[(i)]\label{obs:degreeMnew} Each vertex of $M$ has either zero, two or three non-monitored neighbors.
    \item[(ii)]\label{obs:degreeoM} Each vertex of $\overline{M}$ has at most 2 neighbors in $\overline{M}$.
    \item[(iii)]\label{lem:propagation}
      For every vertex $u \in M \setminus S$, there exists $v \in M \cap N(u)$ such that $N[v] \subset M$ (that propagated to $u$).
  \end{enumerate}
\end{obs}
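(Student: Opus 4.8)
The plan is to derive all three items directly from the hypothesis by analysing the effect of adding a single vertex to $S$, using the monotonicity of the monitoring operator: for any $v \in V(G)$ one has $M_G(S \cup \{v\}) \supseteq M_G(S) \cup N[v]$, since the propagation process started from the larger initial set $N[S]\cup N[v]$ can only monitor at least as many vertices in the end.

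For \textbf{(i)}, I would argue in two halves. First, a vertex $v \in M$ cannot have exactly one non-monitored neighbour $u$: otherwise $v$ would propagate to $u$, putting $u$ in $M$, contradicting that $M=M_G(S)$ is the final set. Hence every vertex of $M$ has either zero or at least two non-monitored neighbours. Second, if $v \in S$ then $N[v] \subseteq N[S] \subseteq M$, so $v$ has no non-monitored neighbour at all; and if $v \in M \setminus S$ had at least four non-monitored neighbours, then $M_G(S\cup\{v\}) \supseteq M \cup N[v]$ would contain at least $|M|+4$ vertices, contradicting the hypothesis applied to $v$. So $v$ has at most three non-monitored neighbours, and combining the two halves gives exactly zero, two, or three.

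For \textbf{(ii)}, I would take $u \in \oM$ and suppose it had three neighbours $w_1, w_2, w_3$ in $\oM$. Adding $u$ to $S$ yields $M_G(S \cup \{u\}) \supseteq M \cup \{u, w_1, w_2, w_3\}$; these four vertices are pairwise distinct (the graph is simple) and none of them lies in $M$, so $|M_G(S\cup\{u\})| \geq |M| + 4$, again contradicting the hypothesis. Hence $u$ has at most two neighbours in $\oM$. For \textbf{(iii)}, I would distinguish whether $u \in M \setminus S$ lies in $N(S)$. If it does, pick any $v \in S \cap N(u)$; then $v \in M$ and $N[v] \subseteq N[S] \subseteq M$, so $v$ is the desired vertex. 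Otherwise $u$ entered $M$ during the propagation process, so there is a round at which some $v \in M \cap N(u)$ had $u$ as its unique neighbour not yet monitored; since $M$ only grows, at the end $N[v] \subseteq M$, and this $v$ works.

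There is no real obstacle here — the statement is essentially immediate from the definitions — but the point to keep straight throughout is the special role of vertices of $S$: they are monitored yet the ``add a vertex and count'' trick does not apply to them, and one must note that in item (iii) the phrasing ``$v$ propagated to $u$'' is meant to include the degenerate case where $u$ is monitored simply by being adjacent to an element of $S$.
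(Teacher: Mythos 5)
Your proof is correct and matches the paper's intent exactly: the paper states this observation without proof as ``a direct consequence of the definition of power domination,'' and your argument — monotonicity of the monitoring operator giving $M_G(S\cup\{v\})\supseteq M_G(S)\cup N[v]$, the impossibility of exactly one non-monitored neighbour, and the case split in (iii) between domination by $S$ and genuine propagation — is precisely the routine verification being elided. Your closing remark about the degenerate reading of ``propagated to $u$'' for vertices adjacent to $S$ is also the right way to interpret the parenthetical in item (iii).
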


\medskip
We now make the following statement.

\begin{lemma}
  \label{lem:chainsaw2} If $v$ is of degree at least five, then for every two neighbors $u_1$ and $u_2$ of $v$,  there exists a neighbor $w$ of $v$ adjacent to $u_1$ or $u_2$, but not both, and the corresponding triangle $[vu_iw]$ is facial.
\end{lemma}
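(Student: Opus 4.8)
The plan is to work in the plane embedding and exploit the fact that $v$ has degree at least five, so its neighborhood cycle $C = (w_1, \dots, w_d)$ (with $d \geq 5$) is long enough to contain more than just the two neighbors $u_1, u_2$ of $v$. Since $G$ is a triangulation and $v$ is an interior-or-exterior vertex, the neighbors of $v$ form a cycle $w_1 w_2 \cdots w_d$ (or a path if $v$ is exterior, but the argument adapts), and each consecutive pair $w_i w_{i+1}$ is an edge forming a facial triangle $[v w_i w_{i+1}]$ with $v$. First I would set up this wheel/fan structure around $v$ and locate $u_1$ and $u_2$ on it.

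Next I would argue as follows. The two vertices $u_1$ and $u_2$ split the cycle $C$ into two arcs. Since $d \geq 5$, at least one of these arcs contains an internal vertex $w$ (i.e., a neighbor of $v$ that is distinct from both $u_1$ and $u_2$); indeed, if both arcs were trivial then $d \leq 2$, and if $u_1, u_2$ are adjacent on $C$ on one side, the other arc still has $d - 2 \geq 3$ vertices. Pick $w$ to be the neighbor of $u_1$ or of $u_2$ along such an arc — say the vertex immediately following $u_1$ on the arc not containing $u_2$. Then by the wheel structure, $[v u_1 w]$ (or $[v u_i w]$ for the appropriate $i$) is a facial triangle, so $w$ is adjacent to $u_i$. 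It remains to check that $w$ is \emph{not} adjacent to the other vertex among $u_1, u_2$: if it were, then since $w, u_1, u_2$ are all neighbors of $v$ and $w$ would be adjacent to both, we could derive that either the drawing is forced into a configuration contradicting planarity, or $u_1 u_2$ is itself an edge and the triangle $[v u_1 u_2]$ together with $w$ creates a separating triangle through $v$; I would pick $w$ as close as possible to $u_i$ along the arc precisely to rule this out, choosing $w$ to be the immediate successor of $u_i$ so that the only candidate common neighbor issue is with $u_i$ itself.

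The main obstacle I anticipate is the careful case analysis around whether $u_1$ and $u_2$ are themselves adjacent and whether $v$ lies on the outer face (so its neighborhood is a path rather than a cycle), together with ensuring the ``adjacent to exactly one of $u_1, u_2$'' conclusion rather than ``at least one'': the existence of some neighbor adjacent to $u_1$ or $u_2$ is immediate from the wheel, but forcing the exclusivity requires choosing $w$ as the right endpoint of the right arc and invoking planarity (a vertex adjacent to $v, u_1, u_2$ all at once, plus the facial triangles, over-constrains the local embedding). I would handle the exterior-vertex case by noting that the neighborhood path still has $\geq 5$ vertices and the same arc-counting applies, and the adjacency-of-$u_1,u_2$ subcase by observing that then the facial triangle $[v u_1 u_2]$ means one side of the fan is ``closed off,'' pushing all remaining $d - 2 \geq 3$ neighbors to the other side where a suitable $w$ is found.
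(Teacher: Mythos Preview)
Your setup is correct and matches the paper's: the neighbors of $v$ form a cycle (since $G$ is a triangulation), and $u_1,u_2$ split this cycle into two arcs, the longer of which has at least two internal vertices since $d\ge 5$. The gap is in your argument for exclusivity. You pick a single candidate $w$ (the immediate successor of, say, $u_1$ on the long arc) and then try to argue that this $w$ cannot also be adjacent to $u_2$. But nothing in planarity forbids a single such chord: it is perfectly possible for this particular $w$ to be adjacent to $u_2$ via a non-facial edge, and your appeal to ``separating triangles'' or an ``over-constrained embedding'' does not produce a contradiction in that situation.

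The paper's proof uses \emph{both} endpoints of the long arc simultaneously. Writing that arc as $(w'_1,\ldots,w'_k)$ with $k\ge 2$ (so $[vu_1w'_1]$ and $[vu_2w'_k]$ are facial), one assumes for contradiction that \emph{each} of $w'_1$ and $w'_k$ is adjacent to \emph{both} $u_1$ and $u_2$. Contracting the subpath $w'_1\cdots w'_{k-1}$ to a single vertex and the other arc $u_1 w_1\cdots w_\ell$ to $u_1$ then yields a $K_5$ minor on $\{v,u_1,u_2,w'_1,w'_k\}$, contradicting planarity. Hence at least one of $w'_1,w'_k$ is adjacent to exactly one of $u_1,u_2$, and the corresponding facial triangle is already in hand. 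Your plan is on the right track but misses this ``two candidates plus $K_5$-minor'' step, which is the actual content of the lemma.
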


\begin{proof}
  We partition the set of neighbors of $v$ into two paths from $u_1$ to $u_2$: a path $(w'_1,\ldots, w'_k)$ of length at least three (i.e., $k\ge 2$) and another path $(w_1, \ldots, w_\ell)$, possibly empty. We have $w'_1\neq w'_k$. By way of contradiction, assume both $w'_1$ and $w'_k$ are adjacent to both $u_1$ and $u_2$. Contracting the path $(w'_1,\ldots, w'_{k-1})$ into $w'_1$ and the path $(u_1, w_1,\ldots, w_\ell)$ into $u_1$, we get that $u_1, u_2, v, w'_1, w'_k$ induce a $K_5$ in the resulting graph, contradicting planarity of $G$. Thus $w'_1$  is not adjacent to $u_2$ (and $[w'_1u_1v]$ is facial) or $w'_k$ is not adjacent to $u_1$ (and $[w'_ku_2v]$ is facial).
\end{proof}

We remark that Lemma~\ref{lem:chainsaw2} also holds when $v$ is in $M$ with at least two neighbors in $\overline{M}$. Indeed, by Observation~\ref{lem:propagation}, $v$ has a neighbor $v'$ that propagated to it. Then $v'$ only has monitored neighbors, and two of them are also adjacent to $v$. Thus $v$ has degree at least five, which is the hypothesis of Lemma~\ref{lem:chainsaw2}.

\begin{lemma} \label{lem:conn_comp}
  Components of $G[\overline{M}]$ are of order at most three.
\end{lemma}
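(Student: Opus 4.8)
The plan is to show that a connected component $H$ of $G[\oM]$ with $|V(H)| \geq 4$ leads to a contradicting vertex, i.e., a vertex $v$ whose addition to $S$ would monitor at least four new vertices, contradicting the hypothesis that Algorithm~\ref{alg:pds2} has stopped. The basic leverage is Observation~\ref{obs:degreeoM}: inside $\oM$, every vertex has at most two $\oM$-neighbors, so $G[\oM]$ has maximum degree $2$, meaning each connected component is a path or a cycle. If some component is a path or cycle on at least four vertices, I want to find a single vertex of $G$ that dominates (or dominates-then-propagates to) four of its vertices.

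\textbf{First} I would record that $H$ is a path $(x_1,x_2,\ldots,x_k)$ or a cycle $(x_1,\ldots,x_k)$ with $k \geq 4$. Consider the edge $x_2x_3$ (an interior edge of the path/cycle). Since $G$ is a triangulation, the edge $x_2x_3$ lies on two facial triangles $[x_2x_3a]$ and $[x_2x_3b]$; here $a,b \notin \oM$ in general need not hold, but along a path the neighbors of $x_2$ other than $x_1,x_3$ and of $x_3$ other than $x_2,x_4$ are monitored (by the degree-2 bound in $\oM$). The key idea is to walk around one endpoint, say around $x_2$, to find a vertex $t$ adjacent to a long stretch of consecutive $x_i$'s. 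Concretely, since $x_1,x_2,x_3$ form a path in $\oM$, and $x_2$ has only $x_1,x_3$ as $\oM$-neighbors, the link of $x_2$ is a cycle in which $x_1$ and $x_3$ are separated into two arcs both consisting of monitored vertices; take $t$ a monitored neighbor of $x_2$ on a facial triangle with $x_2x_3$, hence $t$ is adjacent to both $x_2$ and $x_3$. By the analogous reasoning used repeatedly in Section~\ref{sec:b-vertices} (``since $x_i$ has no other $\oM$-neighbor''), I would try to push $t$'s adjacency forward: $t$ adjacent to $x_2,x_3$ forces, on the face of $tx_3$ on the side of $x_4$, a vertex that must be $x_4$ or a monitored vertex; iterating, I expect to obtain a vertex $t$ adjacent to $x_2,x_3,x_4$ and then potentially $x_5$, which would already give $|M(S\cup\{t\})| \geq |M(S)| + 4$ once $k \geq 4$ makes $\{x_2,x_3,x_4,x_5\}$ or $\{x_1,x_2,x_3,x_4\}$ all new. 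More carefully, for a cycle component I would note it is an induced cycle in a triangulation and argue that such a separating triangle-free $\oM$-cycle of length $\geq 4$ must have a common dominator or be bounded by a small triangulation on each side, and then pick the vertex dominating the larger side.

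\textbf{The cleanest route}, and the one I would actually pursue, is: take any vertex $x_i$ of $H$ with an $\oM$-neighbor on each side (exists since $k \geq 4 \geq 3$), apply Lemma~\ref{lem:chainsaw2}-style reasoning to a high-degree monitored vertex in the link, or directly use the fact that $x_i$ together with its two $\oM$-neighbors and the two endpoints of the $\oM$-path have, by triangulation, a common neighbor pattern; then produce one vertex adjacent to (or propagating to) four vertices of $\oM$. Summing up: the contradicting vertex is obtained and Algorithm~\ref{alg:pds2} could not have stopped, so $k \leq 3$.

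\textbf{The main obstacle} I anticipate is the cycle case and, more generally, handling the possibility that the ``push forward'' of the dominator $t$ stalls after only three consecutive $x_i$'s because of a monitored vertex blocking the link — in that situation I would need to argue that propagation from $t$ (or from an adjacent monitored vertex that then has a unique $\oM$-neighbor) covers the fourth vertex, using Observation~\ref{obs:degreeMnew} to control how many $\oM$-neighbors the blocking vertices have. Carefully combining domination with one step of propagation to reach the threshold of four is the delicate part; the path case with $k$ large is comparatively routine once the forward-pushing lemma is set up.
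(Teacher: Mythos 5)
Your first step is exactly right and matches the paper: Observation~\ref{obs:degreeoM} gives maximum degree $2$ in $G[\oM]$, so each component $C$ is a path or a cycle. But from there you head into an unnecessarily hard and, as written, incomplete argument. You try to manufacture a \emph{monitored} vertex $t$ adjacent to four consecutive vertices of $C$ by ``pushing forward'' adjacencies in the triangulation, and you yourself flag that this push may stall after three vertices and that the cycle case is unresolved. That is a genuine gap: nothing in your text actually produces the contradicting vertex, and the planar/link analysis you sketch is not needed at all.

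The missing idea is that the contradicting vertex can be taken \emph{inside} $C$, and propagation does all the work. Suppose $|V(C)|\ge 4$ and add any vertex $v$ of $C$ to $S$. Then $v$ and its (at most two) $\oM$-neighbors on the path/cycle become monitored. Now each newly monitored vertex of $C$ has at most one remaining non-monitored neighbor, namely its other neighbor along $C$ (all of its neighbors outside $C$ are already in $M$, by definition of a component of $G[\oM]$), so it propagates to it; iterating, the whole of $C$ gets monitored. Hence $|M_G(S\cup\{v\})|\ge |M_G(S)|+|V(C)|\ge |M_G(S)|+4$, contradicting the standing assumption that no vertex extends $M$ by more than three. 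No facial triangles, no Lemma~\ref{lem:chainsaw2}, and no case split between paths and cycles are required. Your parenthetical ``dominates-then-propagates'' was pointing at this, but you never let the propagation run along $C$ itself, which is the whole proof.
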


\begin{proof}

  Let $C$ be a component of $G[\overline{M}]$.
    By Observation~\ref{obs:degreeoM}, each vertex of $\overline{M}$ has degree at most two in $\overline{M}$, so $C$ is a path or a cycle. Then adding any vertex of $C$ to $S$ would monitor all of $C$. Since we work under the assumption of Lemma~\ref{lem:urien-wqt}, $C$ is of order at most three.
\end{proof}

 Thus each component of $G[\overline{M}]$ is isomorphic to either $K_3$, $P_3$, $P_2$, or $K_1$. Lemmas~\ref{lem:K3}, \ref{lem:P3} and \ref{lem:P2} deal successively with the first three cases, whereas Lemma~\ref{lem:P1} goes through the case where $\overline{M}$ is an independent set in the induced triangulation considered.

\begin{lemma} \label{lem:K3}
  Let $G$ and $M$ satisfy the assumption of Lemma~\ref{lem:urien-wqt}. If an induced triangulation $G'$ contains a component of $G[\overline{M}]$ isomorphic to $K_3$, then $G'$ is isomorphic to the configuration depicted in Figure~\ref{fig:config_octa}.
\end{lemma}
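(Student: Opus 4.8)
The plan is to analyze the local structure around a triangular non-monitored component $C = \{u_1, u_2, u_3\}$ (a $K_3$ in $G[\oM]$), and show that the forced structure is exactly the octahedron-based configuration of Figure~\ref{fig:config_octa}. First I would invoke Observation~\ref{obs:degreeoM}: each $u_i$ has at most two neighbors in $\oM$, and since the other two vertices of $C$ are already such neighbors, every vertex of $N(u_i) \setminus C$ is monitored. Next, since $C$ is a triangle in the plane graph $G'$, it bounds two regions (inside and outside the cycle $(u_1u_2u_3)$); the interior region, if it contained a vertex, would have to be triangulated, but any interior vertex adjacent to the $u_i$'s would be monitored and would have to propagate, so I expect the argument to force the interior of the triangle to be empty (a facial triangle) or to force exactly the octahedron's apex vertex there. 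Then I would look at the three vertices $x_1, x_2, x_3$ lying just outside the triangle on the edges $u_1u_2$, $u_2u_3$, $u_1u_3$ respectively (each $x_i$ exists to form the neighboring triangular face), all monitored.

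The key steps, in order: (1) establish that all neighbors of the $u_i$ outside $C$ are monitored, hence each $u_i$ has degree-at-most structure controlled; (2) show that adding $u_1$ (say) to $S$ would monitor $u_2, u_3$ by adjacency and then potentially propagate further, and use the "at most three new vertices" hypothesis to bound how the neighborhood closes up — in particular the three outer neighbors $x_1, x_2, x_3$ must be few in number and highly interconnected; (3) apply Lemma~\ref{lem:chainsaw2} (in its extended form for monitored vertices with $\ge 2$ non-monitored neighbors) to each $x_i$, which has at least two non-monitored neighbors among the $u_j$, to extract facial triangles and force adjacencies among $\{x_1, x_2, x_3\}$ and the $u_j$; (4) argue that the $x_i$ are pairwise distinct (if two coincided we'd get a vertex adjacent to all three $u_j$, i.e. a dominating vertex of the triangle, which combined with the triangle being non-monitored would make $C$ monitorable by one propagation from outside — contradiction, or it would directly give a small graph handled elsewhere); (5) conclude that $u_1,u_2,u_3,x_1,x_2,x_3$ induce an octahedron with $\{u_1,u_2,u_3\}$ as one triangular face and $\{x_1,x_2,x_3\}$ as the opposite face, and that no further vertices can attach because the $u_i$ have their neighborhoods saturated and the triangulation must close up; (6) check that this octahedron is exactly configuration of Figure~\ref{fig:config_octa} and that $G'$ cannot be larger.

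The main obstacle I anticipate is step (3)–(5): carefully ruling out all the ways the outer neighbors $x_i$ could be arranged — for instance the possibility that some $x_i$ has additional non-monitored neighbors outside $C$ (forbidden since components of $G[\oM]$ have order at most three by Lemma~\ref{lem:conn_comp}, but one still has to track that $x_i$'s other neighbors don't create a new contradicting vertex), or the possibility that the $x_i$ share further common neighbors forcing a different (larger) triangulation. The counting argument "adding a vertex monitors at most three more" has to be applied with the right choice of vertex to add, and one must be careful that propagation from the newly-added vertex doesn't cascade; I expect the cleanest route is to add $u_2$, observe $u_1, u_3$ become monitored immediately (two new vertices), and then argue that at most one more vertex can be newly monitored by propagation, which pins down the degrees of $u_1, u_2, u_3$ to exactly $4$ and forces the octahedron. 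Verifying that this degree constraint, together with planarity and the facial-triangle conditions from Lemma~\ref{lem:chainsaw2}, leaves only the single configuration of Figure~\ref{fig:config_octa} is the delicate case analysis that constitutes the bulk of the proof.
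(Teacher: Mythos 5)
Your opening steps match the paper's: you first show that every monitored neighbor of $C$ has all of its non-monitored neighbors inside $C$ (the paper does this by exhibiting a contradicting vertex in each bad case, using Lemma~\ref{lem:chainsaw2} when the neighbor meets $C$ in a single vertex), and you then use Lemma~\ref{lem:chainsaw2} around the edges of $C$ to extract facial triangles and force a triangle of monitored vertices $x_1,x_2,x_3$ adjacent to $C$. Up to that point the proposal is sound.

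The gap is in your steps (5)--(6): the configuration you aim for is not the one in Figure~\ref{fig:config_octa}, and in fact it cannot occur. You conclude that $u_1,u_2,u_3,x_1,x_2,x_3$ form an octahedron with $C$ as a face, each $u_i$ of degree exactly $4$, and ``no further vertices can attach.'' But the hypothesis of Lemma~\ref{lem:urien-wqt} is precisely that $S$ monitors every facial octahedron, so a facial octahedron containing the three non-monitored vertices of $C$ is impossible; this is the contradiction the paper uses at the corresponding point of its proof, to deduce that the triangle on $C$ is \emph{not} facial. The correct conclusion is that $C$ is a separating triangle: on \emph{each} side of it there is a triangle of monitored vertices forming a half-octahedron with $C$ (so each $u_i$ has degree $6$, not $4$), and each of those two outer triangles bounds a further, arbitrary induced triangulation $G_1$ or $G_2$ --- so $G'$ is generally much larger than six vertices, contrary to your claim that the structure closes up. Your ``pin the degrees to $4$'' plan would therefore derive a configuration excluded by the hypotheses rather than the actual one, and the doubling of the structure on both sides of $C$ is the essential content you are missing. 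A smaller issue: your argument in step (4) that a vertex adjacent to all three $u_j$ is a contradicting vertex does not work as stated, since adding it to $S$ monitors exactly three new vertices, which the hypothesis allows; ruling out such a vertex requires the facial-triangle bookkeeping the paper carries out, not the counting argument alone.
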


\begin{figure}
  \centering
    \includegraphics[scale=0.7]{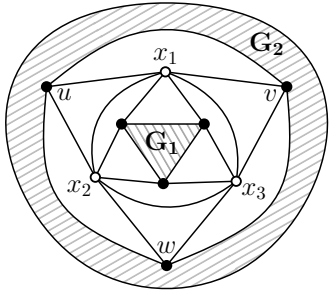}
    \caption{The configuration for $G'$ containing a non-monitored component isomorphic to $K_3$. $G_1$ and $G_2$ are triangulations. All other triangles of the drawing are facial.}
    \label{fig:config_octa}
\end{figure}

\begin{proof}
  Let $C$ be a component of $G[\overline{M}]$ isomorphic to $K_3$ with $V(C)=\{x_1,x_2,x_3\}$.
    Let $u$ be a vertex of $M$, adjacent to at least one of the vertices of $C$.

    We first consider the case when $u$ has neighbors in $\overline{M} \setminus C$. If $u$ is adjacent to two vertices in $C$, then by Observation~\ref{obs:degreeMnew}, $u$ has exactly one neighbor in $\overline{M}\setminus C$, say $v$. Then $M(S\cup\{u\})\supseteq M(S)\cup\{x_1,x_2,x_3,v\}$, and $u$ is a contradicting vertex.
    So $u$ has only one neighbor in $C$, say $x_1$.
    Within the neighborhood of $x_1$, the path from $x_2$ to $x_3$ going through $u$ must contain at least three interior vertices since $u$ is not adjacent to $x_2$ or $x_3$, so $x_1$ is of degree at least five.
    Applying Lemma~\ref{lem:chainsaw2} on $x_1$, we get that a neighbor $w$ of $x_1$ is adjacent to $x_2$ or $x_3$ but not both.
    Since $w$ is adjacent to two vertices in $C$, it has no other neighbors in $\overline{M}$ or the above case would apply.
    Hence, adding $u$ to $S$, all neighbors of $u$ in $\overline{M}$ get monitored, then $w$ propagates to $x_2$ or $x_3$ which can in turn propagate to the last vertex of $C$.
          So $u$ is a contradicting vertex.

          We assume now that any vertex of $M$ adjacent to $C$ has only vertices of $C$ as neighbors in $\overline{M}$.
          Note that such a vertex must be adjacent to at least two vertices in $C$.
    Let $u$ be a common neighbor of $x_1$ and $x_2$ such that $[u x_1 x_2]$ is facial ($u$ exists since the edge $x_1 x_2$ is contained in exactly two facial triangles).
    By Lemma~\ref{lem:chainsaw2}, there is a neighbor $v$ of $u$ that is adjacent to only one of $\{x_1,x_2\}$ (say $x_1$) and $[uvx_1]$ is facial.
    The vertex $v$ must have a second non-monitored neighbor, that must be in $C$, so $v$ is adjacent to $x_3$.
    Observe that the triangle $[vx_1x_3]$ must be facial. Otherwise, there is a vertex $t\neq v$ such that $[tx_1x_3]$ is facial and $t$ is separated from $x_2$ by $(vx_1x_3)$.
                Then by Lemma~\ref{lem:chainsaw2}, $t$ has a neighbor $t'$ with only one neighbor among $\{x_1,x_3\}$ also separated from $x_2$ by $(vx_1x_3)$, and thus with only one non-monitored neighbor, a contradiction.
    Now, $v$ and $x_3$ have a common neighbor $w$ outside the triangle $[v x_1 x_3]$, such that $[vwx_3]$ is facial.
    By definition of $v$, we have $w \neq x_2$.
                We also have $w\neq u$ or $v$ would be of degree three contradicting Observation~\ref{lem:propagation}.
    The cycle $(uvx_3x_2)$ separates $w$ from $x_1$, so the second non-monitored neighbor of $w$ (different from $x_3$) must be $x_2$.
    Unless an additional edge $uw$ form a facial triangle $[uwx_2]$, there is another neighbor of $x_2$ that is separated from both $x_1$ and $x_3$ by the cycle $(uvwx_2)$, a contradiction. So $u$ is adjacent to $w$ in a facial triangle $[uwx_2]$.

                In a similar way that we proved that $[vx_1x_3]$ is facial, we infer that $[wx_2x_3]$ is facial.
                By construction, $[ux_1x_2]$, $[uvx_1]$, $[vwx_3]$ are facial, and we proved $[vx_1x_3]$, $[uwx_2]$ and $[wx_2x_3]$ also are.
                If the triangle $[x_1x_2x_3]$ is facial, then the graph induced by the vertices $u,v,w,x_1,x_2,x_3$ is a facial octahedron, contradicting the assumption of Lemma~\ref{lem:urien-wqt}. Thus $[x_1 x_2 x_3]$ is not facial, and applying the same line of reasoning as above inside $[x_1x_2x_3]$ shows that $G'$ is isomorphic to the configuration depicted in Figure~\ref{fig:config_octa}.
\end{proof}

\begin{lemma} \label{lem:P3} Let $G$ and $M$  satisfy the assumption of Lemma~\ref{lem:urien-wqt}. If an induced triangulation $G'$ contains a component of $G[\overline M]$ isomorphic to $P_3$, then $G'$ is isomorphic to one of the splitting structures depicted in Figure~\ref{fig:P3}.
\end{lemma}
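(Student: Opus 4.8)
\textbf{Proof plan for Lemma~\ref{lem:P3}.}

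The plan is to mirror the structure of the proof of Lemma~\ref{lem:K3}, but with the non-monitored component now a path $C = (x_1,x_2,x_3)$ rather than a triangle, so that the two ``ends'' $x_1$ and $x_3$ are not adjacent and may lie in different faces. First I would set up notation: let $C$ be a $P_3$-component of $G[\oM]$ with edges $x_1x_2$ and $x_2x_3$ (and $x_1x_3 \notin E(G)$), and consider an arbitrary vertex $u \in M$ adjacent to $C$. As in Lemma~\ref{lem:K3}, the first reduction is to rule out that any vertex of $M$ adjacent to $C$ has a non-monitored neighbor \emph{outside} $C$: if $u$ had such a neighbor, then by Observation~\ref{obs:degreeMnew} it has at most three non-monitored neighbors total, and I would argue (splitting on how many vertices of $C$ it touches, using Lemma~\ref{lem:chainsaw2} on $x_2$ or on an end vertex to get a propagating neighbor as in the $K_3$ case) that adding $u$ to $S$ monitors all of $C$ plus the outside non-monitored vertex, so $u$ is a contradicting vertex. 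Hence from now on every vertex of $M$ adjacent to $C$ has all its non-monitored neighbors inside $C$, i.e.\ it is adjacent to $x_2$, or to $\{x_1,x_2\}$, or to $\{x_2,x_3\}$, or possibly to $x_1$ alone or $x_3$ alone — and I would note that a vertex adjacent only to $x_1$ (resp.\ only $x_3$) is allowed since $x_1x_3\notin E$, which is exactly what makes this case richer than $K_3$.

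Next I would reconstruct the local structure face by face around $C$. The edge $x_1x_2$ lies in exactly two facial triangles; let $a$ and $b$ be their apexes, so $[a x_1 x_2]$ and $[b x_1 x_2]$ are facial, and similarly let $c,d$ be the apexes of the two facial triangles on $x_2x_3$. Using Observation~\ref{obs:degreeMnew} on these apex vertices (each must have exactly two or three non-monitored neighbors) and Lemma~\ref{lem:chainsaw2} to force facial triangles on the edges $x_1$--(apex) and $x_3$--(apex), I would trace out, exactly as in Lemma~\ref{lem:K3}, the ``wheel'' of faces around the path $C$. The key structural point is that the cyclic sequence of faces around $x_2$ starts and ends outside $C$ (since $x_1$ and $x_3$ are both non-monitored neighbors of $x_2$ but are not adjacent to each other), so the neighborhood of $C$ closes up into two arcs rather than one cycle; each arc, together with $C$, will be an induced triangulation, and these are the associated triangulations $G_1, G_2$ of the splitting structure. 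The combinatorial bookkeeping of which apexes coincide (e.g.\ whether the apex on the far side of $x_1x_2$ equals the apex on the far side of $x_2x_3$, whether an apex adjacent to only $x_1$ exists, etc.) is what produces the finite list of configurations in Figure~\ref{fig:P3}; I would enumerate these subcases systematically, using at each branch the faciality forced by Lemma~\ref{lem:chainsaw2} and the degree constraint of Observation~\ref{obs:degreeMnew}, together with the already-established fact that no vertex adjacent to $C$ sees $\oM$ outside $C$ and (when needed) the hypothesis that $S$ monitors all b-vertices and facial octahedra to exclude a few would-be configurations.

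The main obstacle I anticipate is precisely this case analysis: unlike the $K_3$ situation, where the symmetric triangle forced essentially one outcome (the octahedron-like picture of Figure~\ref{fig:config_octa}), here the asymmetry between $x_1,x_3$ and the middle vertex $x_2$, and the freedom for vertices adjacent to a single end of the path, branch into several non-isomorphic local structures, and the delicate part is to show the list is \emph{complete} — that no further vertex or edge can be inserted without either violating planarity, raising some apex's non-monitored degree above what Observation~\ref{obs:degreeMnew} permits, creating an unmonitored b-vertex or facial octahedron (contradicting the hypothesis on $S$), or producing a contradicting vertex for Algorithm~\ref{alg:pds2}. I would organize the argument so that each ``escape route'' out of the intended configurations is closed by one of these four mechanisms, following the template already used for $[vx_1x_3]$ and $[uwx_2]$ being forced facial in Lemma~\ref{lem:K3}, and conclude that $G'$ must be one of the splitting structures of Figure~\ref{fig:P3}.
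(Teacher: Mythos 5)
Your high-level strategy (rule out non-monitored neighbors outside $C$, then reconstruct the faces around $C$ using Lemma~\ref{lem:chainsaw2} and the degree constraints) is the right template, and the first reduction you describe does match the paper's opening step. But your classification of the neighbors of $C$ contains a concrete error that derails the rest of the plan. You claim that, once all non-monitored neighbors are forced into $C$, a vertex of $M$ may be adjacent to ``$x_2$'' alone, or to ``$x_1$ alone or $x_3$ alone.'' This is impossible: by Observation~\ref{obs:degreeMnew}, every monitored vertex has zero, two, or three non-monitored neighbors, so a vertex whose only non-monitored neighbors lie in $C$ and which touches $C$ at all must be adjacent to at least \emph{two} vertices of $C$. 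At the same time you omit the one genuinely new type that the non-adjacency $x_1x_3\notin E$ permits, namely vertices adjacent to $\{x_1,x_3\}$ but not $x_2$. These are precisely the vertices ($y$ and $y'$ in the paper's proof) that close up the structure and whose coincidence or adjacency distinguishes the two configurations of Figure~\ref{fig:P3}; without them you cannot arrive at either target configuration.

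The second missing ingredient is the use of the hypothesis that $S$ monitors all b-vertices, which you mention only as a generic ``mechanism'' but which is needed at two specific points: first to exclude a vertex adjacent to all three of $x_1,x_2,x_3$ (otherwise Lemma~\ref{lem:chainsaw2} forces the surrounding faces and $x_2$ becomes a b-vertex of the ``bad'' type), and second to show that $x_1$ or $x_3$ must have a neighbor beyond the four apexes of the faces on $x_1x_2$ and $x_2x_3$ (otherwise $x_2$ is a b-vertex of the ``ugly'' type). Together with the fact that $x_2$ has exactly the four apex neighbors, these two steps are what pin the structure down to the two splitting structures; a face-by-face enumeration without them would leave several ``escape routes'' open. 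So the proposal, as written, has a genuine gap both in the neighbor classification and in the completeness argument.
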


\begin{figure}
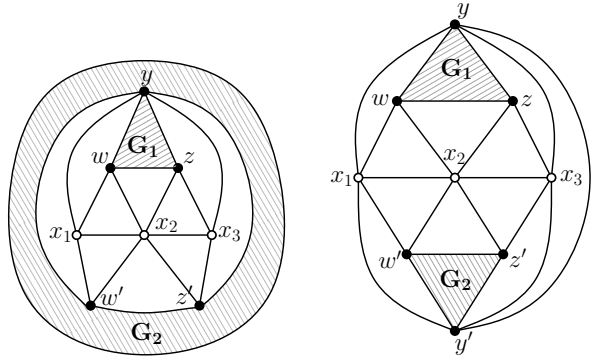

  \centering
    \includegraphics[scale=0.45,page=2]{case_3.pdf}~~~~
    \includegraphics[scale=0.45,page=2]{case_4.pdf}
    \caption{The two possible configurations for $G'$ containing a non-monitored component isomorphic to $P_3$. $G_1$ and $G_2$ are triangulations. All other triangles of the drawing are facial.}
    \label{fig:P3}
\end{figure}

\begin{proof}
  Let $C$ be a component of $G[\overline{M}]$ isomorphic to $P_3$ with $V(C)=\{x_1,x_2,x_3\}$.

        \medskip
        Let $u$ be a vertex adjacent to $C$.	
        We first prove that all neighbors of $u$ in $\overline{M}$ are vertices of $C$. By Observation \ref{obs:degreeMnew}, $u$ has at most two neighbors in $\overline M\setminus V(C)$.
    If $u$ has exactly one neighbor $u_1$ in $\overline M\setminus V(C)$, then $x_2$ is a contradicting vertex, since $u$ propagates to $u_1$ once $x_2$ is added to $S$.
    Assume then that $u$ has two neighbors in $\overline{M} \setminus V(C)$ and thus only one neighbor in $C$.
    If $u$ is adjacent to $x_1$ or $x_3$, then $u$ is a contradicting vertex. Suppose that $u$ is adjacent to $x_2$ only, which must then be of degree at least five. We apply Lemma~\ref{lem:chainsaw2} on $x_2$ and get a neighbor $v$ of $x_2$ adjacent to $x_1$ or $x_3$ but not both. Taking $u$ in $S$, $v$ then propagates (and then $x_2$ propagates to $x_3$) so $u$ is a contradicting vertex.
    Thus neighbors of $C$ may not be adjacent to vertices in $\overline{M}$ that are not in $C$.

    \smallskip
    We now prove that there is no vertex of $M$ adjacent to all vertices of $C$.
    Suppose by way of contradiction that $u$ is a vertex in $M$ adjacent to $x_1$, $x_2$ and $x_3$.
    By Lemma~\ref{lem:chainsaw2}, $u$ has a neighbor $z \in M$ with exactly one neighbor in $\{x_1, x_3\}$ (say $x_1$) and $[uzx_1]$ is facial. Note that by the above statement, $z$ is also adjacent to $x_2$.

    Again, we can apply Lemma~\ref{lem:chainsaw2} to find a neighbor $z'$ of $z$ adjacent to $x_1$ or $x_2$ but not both. Vertex $z'$ must have a second non-monitored neighbor, namely $x_3$. So $z'$ cannot be adjacent to $x_1$ which is separated from $x_3$ by $(ux_2z)$, so $z'$ is adjacent to $x_2$ and $x_3$ and $[x_2zz']$ is facial. Now $u$ is necessarily adjacent to $z'$ forming a facial triangle $[ux_3z']$ (otherwise some vertex would have a single neighbor in $C$).

    Observe that the triangle $[zx_1x_2]$ must be facial. Otherwise, there is a vertex $t\neq z$ such that $[tx_1x_2]$ is facial and $t$ is separated from $x_3$ by $(zx_1x_2)$.
                Then by Lemma~\ref{lem:chainsaw2}, $t$ has a neighbor $t'$ with only one neighbor among $\{x_1,x_2\}$ also separated from $x_3$ by $(zx_1x_2)$, and thus with only one non-monitored neighbor, a contradiction.
                With a similar argument, we get that $[z'x_2x_3]$, $[ux_1x_2]$ and  $[ux_2x_3]$ are facial. But then $x_2$ is a b-vertex (as in the bad configuration of Figure~\ref{fig:bad-guys}), a contradiction.

    \smallskip
    Let $w,w',z,z' \in M$ such that $[x_1x_2w],[x_1x_2w'],[x_2x_3z],[x_2x_3z']$ are faces. By the above statement, all these vertices are distinct.
    Suppose that there is a neighbor $u$ of $x_2$ different from the above vertices. Vertex $u$ has a second neighbor in $C$, say $x_1$. The cycle $(ux_1x_2)$ separates $w$ or $w'$ from $x_3$, say $w$.
    By Lemma~\ref{lem:chainsaw2}, $w$ has a neighbor with exactly one neighbor in $\{x_1,x_2\}$, and that cannot be adjacent to $x_3$, a contradiction.
    Thus $x_2$ has no other neighbor. Renaming vertices if necessary, we suppose $[x_2wz]$ and $[x_2w'z']$ are facial triangles.

    Note that $x_1$ or $x_3$ must have another neighbor. Otherwise,  $w$ is adjacent to $w'$ and$z$ is adjacent to $z'$, which implies that $x_2$ is a b-vertex (as in the ugly configuration of Figure~\ref{fig:bad-guys}), a contradiction.
     Let $y \in M$ be a neighbor of $x_1$ such that $[x_1 w y]$ is facial. The second neighbor of $y$ in $C$ is necessarily $x_3$.
    Similarly, $z$ has a neighbor $z_1$ such that $[x_3 z_1 z]$ is facial and adjacent to $x_1$ and $x_3$. Note that $z_1\neq w$ or $w$ would be adjacent to three vertices in $C$. Then $y=z_1$ or the cycle $(x_3ywz)$ would separate $z_1$ from $x_1$. We prove with similar arguments that there is a vertex $y'$ such that $[x_1 w' y']$ and $[x_3 z' y']$ are facial.

    If $y=y'$, then $G'$ is isomorphic to the first splitting structure of Figure~\ref{fig:P3}. Otherwise, suppose first that $x_1$ has another neighbor $t$. It also has to be adjacent to $x_3$. Then applying Lemma~\ref{lem:chainsaw2} to $t$, we find a vertex adjacent to only one vertex in $C$, a contradiction. So $y$ and $y'$ are adjacent, and $[x_1yy']$ and $[x_3yy']$ are facial. Thus $G'$ is isomorphic to the second configuration of Figure~\ref{fig:P3}.
\end{proof}

\begin{lemma} \label{lem:P2}
  Let $G$ and $M$ satisfy the assumption of Lemma~\ref{lem:urien-wqt}. If an induced triangulation $G'$ contains a non-monitored component isomorphic to $P_2$, then $G'$ is isomorphic to one of the configurations depicted in Figure~\ref{fig:P2}.
\end{lemma}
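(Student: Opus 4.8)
The plan is to follow the pattern of the proofs of Lemmas~\ref{lem:K3} and~\ref{lem:P3}. Write $C=\{x_1,x_2\}$, with $x_1x_2\in E(G)$, for the $P_2$ component of $G[\oM]$ inside $G'$; since $C$ is a connected component of $G[\oM]$, the only non-monitored neighbour of $x_1$ is $x_2$, and conversely. As in those lemmas, one may assume that $G[\oM]$ has no component of order three (else $G'$ already falls under Lemma~\ref{lem:K3} or~\ref{lem:P3}), and we shall use repeatedly that $x_1$ and $x_2$ are not b-vertices, since $S$ monitors all b-vertices while $x_1,x_2\notin M$.

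The first step is to show that every vertex $u\in M$ adjacent to $C$ has all its non-monitored neighbours in $C$. This is more delicate than in the $K_3$ and $P_3$ cases: there, adding a single vertex of $C$ to $S$ monitored the whole three-vertex component, whereas here it monitors only $x_1$ and $x_2$, so one generally needs two chained propagations (not one) to produce the gain of four that makes $u$ a contradicting vertex. Concretely, suppose $u$ has a non-monitored neighbour $v\notin C$; by Observation~\ref{obs:degreeMnew}, $u$ has two or three non-monitored neighbours and, by Observation~\ref{obs:degreeoM}, $v$ has at most two. If $u$ is adjacent to exactly one vertex $x_i$ of $C$ and has a further non-monitored neighbour outside $C$, adding $u$ monitors its two non-monitored neighbours outside $C$, then $x_i$, and then --- since $x_i$'s unique non-monitored neighbour is the other vertex of $C$ --- that vertex too: four new vertices, a contradiction. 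If instead $u$'s non-monitored neighbours are exactly $x_i$ and $v$, adding $u$ gains only three, and the missing fourth vertex is obtained from a propagation triggered around $v$; this propagation exists when $v$'s component of $G[\oM]$ has order two, and when $v$ is isolated in $G[\oM]$ one argues by adding $v$ instead (whose monitoring makes $u$ propagate to $x_i$ and hence to the last vertex of $C$) and by invoking Observation~\ref{obs:degreeMnew} on a neighbour of $v$, the only escape being a configuration that one recognises as a b-vertex or a facial octahedron, both forbidden. The case $u\sim x_1,x_2$ is handled along the same lines, now starting from the unique third non-monitored neighbour of $u$ and using Lemma~\ref{lem:chainsaw2} at $x_1$ or $x_2$ when their degree allows it, and a direct bounded-degree check otherwise.

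Once the non-monitored neighbours of $C$ are confined in this way, I would pin down the neighbourhoods of $x_1$ and $x_2$. Let $[x_1x_2a]$ and $[x_1x_2b]$ be the two facial triangles on the edge $x_1x_2$, so $a\neq b$ and $a,b\in M$. If both $x_1$ and $x_2$ had degree at most four, then $C$ together with its neighbours induces a small triangulation, and a direct inspection shows it is a facial octahedron, or makes $a$ or $b$ a b-vertex, or (for a tower graph) contains a 3-vertex on the boundary that would propagate into $C$ --- all impossible. Hence some $x_i$, say $x_1$, has degree at least five, and Lemma~\ref{lem:chainsaw2} applies to it. Iterating the ``such-and-such triangle is facial'' arguments of Lemma~\ref{lem:K3} --- if a triangle on an edge $x_1c$ with $c\in M$ were non-facial, Lemma~\ref{lem:chainsaw2} would cut off a vertex with a single non-monitored neighbour, which is impossible --- one pins down the cyclic fan of monitored vertices around $x_1$, then around $x_2$, the two fans sharing exactly $a$ and $b$. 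The decisive step is then to show that, unless the two fans close up into a small triangulation on $C$ and its neighbours (again a forbidden octahedron or b-vertex configuration), their far ends meet at a monitored vertex that would itself be a b-vertex were no further non-monitored vertex present; that vertex is the third core vertex $u_3$, it is non-adjacent to $x_1$ and $x_2$ (else $G[\oM]$ would have a component of order three), and the two regions of $G'$ cut off by the closed walk through $x_1$, $x_2$, $u_3$ and the surrounding fan vertices are the associated triangulations $G_1$ and $G_2$; reading this off gives that $G'$ is one of the two configurations of Figure~\ref{fig:P2}. The main obstacle throughout is exactly this case analysis: $x_1$ and $x_2$ carry more ``free'' monitored neighbours than the three non-monitored vertices of the $K_3$ and $P_3$ cases did, and because the $P_2$ case is tight, many of the one-line contradictions available in Lemmas~\ref{lem:K3} and~\ref{lem:P3} now require a second propagation, so the split --- on which candidate vertices coincide, and on the position of $u_3$ relative to the two fans --- is long; Observations~\ref{obs:degreeMnew} and~\ref{obs:degreeoM}, Lemma~\ref{lem:chainsaw2}, the reduction to no component of order three, and the ``no b-vertex / no facial octahedron'' hypotheses are the tools that keep it finite.
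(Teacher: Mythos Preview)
Your first step---``show that every vertex $u\in M$ adjacent to $C$ has all its non-monitored neighbours in $C$''---is false, and this derails the rest of the argument. Look at the target configurations in Figure~\ref{fig:P2}: there is a third non-monitored vertex $x_3$ at distance two from $C=\{x_1,x_2\}$, and every monitored neighbour of $x_1$ or $x_2$ other than $w,w'$ is adjacent to $x_3$. So such a vertex $u$ has a non-monitored neighbour outside $C$, and no contradiction arises: adding $u$ (or adding $x_3$) to $S$ monitors exactly three new vertices, which is allowed. Your case analysis claims that when $v$ is isolated in $G[\oM]$ ``the only escape'' is a b-vertex or facial-octahedron configuration; the configurations of Figure~\ref{fig:P2} are precisely the escapes you are missing. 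The pattern of Lemmas~\ref{lem:K3} and~\ref{lem:P3} works there because $|C|=3$ matches the maximum number of non-monitored neighbours allowed by Observation~\ref{obs:degreeMnew}; here $|C|=2$, so there is room for one extra non-monitored vertex at distance two, and the proof must begin by locating it rather than excluding it.

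The paper's proof is organised around this point. Its Claim~1 shows there is \emph{exactly one} vertex $x_3\in\oM$ at distance two from $C$ (at least one, by Lemma~\ref{lem:chainsaw2} applied to $w$; at most one, else a contradicting vertex). Claim~2 then proves that $N(x_3)=(N(x_1)\cup N(x_2))\setminus\{w,w'\}$, in particular that $w,w'$ are the only common neighbours of $x_1,x_2$ and are \emph{not} adjacent to $x_3$. With $x_3$ pinned down, the fans of $x_1$ and $x_2$ between $w$ and $w'$ are short paths $(u_1,\dots,u_k)$ and $(v_1,\dots,v_\ell)$ with $k,\ell\le 2$ (else Observation~\ref{lem:propagation} fails at the middle vertex), and the two cases $k+\ell=3$ and $k+\ell=4$ give the two pictures. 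Your later paragraphs gesture toward a vertex $u_3$, but you reach it through a false intermediate claim and describe it inconsistently (``a monitored vertex that would itself be a b-vertex\ldots that vertex is the third core vertex $u_3$''); the correct route is to identify $x_3\in\oM$ at the outset.
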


\begin{figure}
  \centering
    \includegraphics[scale=0.55]{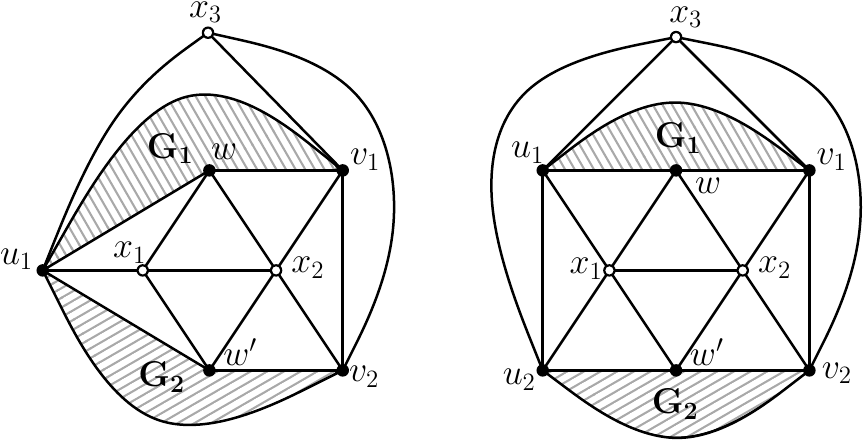}
    \caption{The two possible configurations of an induced triangulation $G'$ if $G[\overline M]$ has a component isomorphic to $P_2$. $G_1$ and $G_2$ are triangulations. All other triangles of the graph are facial.}
    \label{fig:P2}
\end{figure}

\begin{proof}

  Let $C = \{x_1,x_2\}$ with $x_1x_2 \in E(G)$, and let $w$ and $w'$ be the vertices such that $[x_1x_2w]$ and $[x_1x_2w']$ are facial.

    \medskip
    {\bf Claim 1.} There is exactly one vertex of $\overline{M}$ at distance 2 of $C$.

    \medskip
    {\it Proof. } Suppose there is no vertex of $\overline{M}$ at distance 2 from $C$. By Lemma~\ref{lem:chainsaw2}, $w$ has a neighbor $t\in M$ adjacent to only one vertex among $\{x_1, x_2\}$. Then $t$ has only one neighbor in $\overline{M}$, which contradicts Observation~\ref{obs:degreeMnew}. Thus there is a vertex of $\overline{M}$ at distance 2 from $C$.
    Suppose that there is $u' \in \overline{M} \setminus V(C)$ neighbor of a vertex $u\in N(C)$ and $v' \in \overline{M} \setminus V(C)$ neighbor of another vertex $v\in N(C)$. Then $u$ is a contradicting vertex (whether it is distinct from $v$ or not).
    {\footnotesize ($\square$) }

    \medskip
    Let $x_3$ be the only vertex of $\overline{M}$ at distance 2 of $C$.

    \medskip
    {\bf Claim 2.} The vertices adjacent to $x_3$ are exactly the vertices of $(N(x_1)\cup N(x_2)) \setminus\{w,w'\}$.

    \medskip
    {\it Proof. } Suppose there is a vertex $w'' \in M, w'' \neq \{w,w'\}$ adjacent to $x_1$ and $x_2$.
                The cycle$(x_1x_2w'')$ separates $x_3$ from either $w$ or $w'$, say $w$.
    By Lemma~\ref{lem:chainsaw2}, there exists a vertex $v$ adjacent to $w$ and to only one vertex among $\{x_1,x_2\}$.
    By Observation~\ref{obs:degreeMnew}, $v$ has a second non-monitored neighbor, that cannot be $x_3$, which contradicts Claim~1. Thus $w$ and $w'$ are the only common neighbors of $x_1$ and $x_2$.
   Therefore, all vertices adjacent to only one of $x_1$ and $x_2$ (i.e., in $N(x_1)\cup N(x_2)\setminus\{w,w'\}$) are adjacent to $x_3$ (and there is at least one such vertex).

    Suppose there exists some vertex $v$ adjacent to $x_3$ but not in $N(x_1)\cup N(x_2)$. Then $v$ is in $\overline{M}$ or it has another neighbor $x_4\in \overline{M} \setminus\{x_1,x_2,x_3\}$, and $v$ is a contradicting vertex. Thus no vertex $v\in V(G)\setminus (N(x_1)\cup N(x_2))$ is adjacent to $x_3$.

  We now prove that $w$ and $w'$ are not adjacent to $x_3$.
    Suppose $w$ is adjacent to $x_3$.
    By Lemma~\ref{lem:chainsaw2}, $w$ has a neighbor $u_1$ adjacent to only one of $\{x_1,x_2\}$ (say $x_1$) such that $[u_1x_1w]$ is facial. (Thus $u_1$ is also adjacent to $x_3$ and $[wu_1x_3]$ is facial, since it separates $x_3$ from $x_1$ and $x_2$.)
     Again by Lemma~\ref{lem:chainsaw2}, $u_1$ has a neighbor $v_1$ in $M$ adjacent to only one of $\{x_1,x_3\}$.
     Suppose first $v_1$ is adjacent to $x_3$ (and not to $x_1$). Then $v_1$ is also adjacent to $x_2$.
     Following Observation~\ref{lem:propagation}, $w$ has other neighbors in $M$ different from $u_1$. So there is a vertex $t$ such that $[x_2tw]$ is facial, and since $t$ is separated from $x_1$ by $(x_2v_1x_3w)$, $t$ is adjacent to $x_3$. Applying Lemma~\ref{lem:chainsaw2} on $t$, we get a contradiction.
     So $v_1$ is adjacent to $x_1$ but not to $x_3$, and thus $v_1 = w'$ (and $w'$ is not adjacent to $x_3$). But $x_3$ has degree at least three, so there is a vertex $v_2$ adjacent to $x_2$ and $x_3$.
     Again, $[u_1x_3w]$, $[v_2x_2w]$ and $[v_2x_3w]$ must be facial. But then there is no vertex that may have propagated to $w$.
     Thus $w$ and $w'$ are not adjacent to $x_3$.
       {\footnotesize ($\square$) }

    \bigskip
    Let us now consider the neighbors of $x_1$ and $x_2$ in $M \setminus \{w,w'\}$. Let $(u_1, \ldots, u_k)$ and $(v_1, \ldots, v_\ell)$ be the paths from $w$ to $w'$ among respectively $N(x_1) \cap M$ and $N(x_2)\cap M$. Since $x_3$ has degree at least 3, then by Claim 2, $k+\ell \ge 3$.
  First observe that $k$ and $\ell$ both are at most 2. Otherwise, say $k\ge 3$, then by Claim 2, each $u_i$ is adjacent to $x_3$, and the triangles $[u_i u_{i+1} x_3]$ are facial, in particular $[u_1 u_2 x_3]$ and $[u_2 u_3 x_3]$. But then $u_2$ contradicts Observation~\ref{lem:propagation}.

  We thus have two cases:\begin{itemize}

    \item $k+\ell=3$, say $u_1$ is the only neighbor of $x_1$ and $v_1, v_2$ are the only two neighbors of $x_2$ in $M \setminus \{w,w'\}$.
      By Claim 2, $u_1$, $v_1$ and $v_2$ are neighbors of $x_3$. Moreover, since none of $\{w,w'\}$ is adjacent to $x_3$, $u_1$ is adjacent to $v_1$ and $v_2$. Also by Claim 2, triangles $[u_1v_1x_3]$, $[v_1v_2x_3]$ and $[u_1v_2x_3]$ are facial, and $G$ is isomorphic to the first graph depicted in Figure~\ref{fig:P2}.

    \item $x_1$ and $x_2$ both have exactly two neighbors in $M \setminus \{w,w'\}$.
      By Claim 2, $u_1$, $u_2$, $v_1$ and $v_2$ are neighbors of $x_3$. Again, $u_1$ is adjacent to $v_1$ and $u_2$ is adjacent to $v_2$ since neither $w$ nor $w'$ is adjacent to $x_3$.
        Also by Claim 2, triangles $[u_1v_1x_3]$, $[v_1v_2x_3]$, $[u_1u_2x_3]$ and $[u_2v_2x_3]$ are facial and $G$ is isomorphic to the second graph depicted in Figure~\ref{fig:P2}.
  \end{itemize}
This concludes the proof. \end{proof}

\begin{lemma} \label{lem:P1}
  Let $G$ and $M$ satisfy the assumptions of Lemma~\ref{lem:urien-wqt}. If an induced triangulation $G'$ is such that $\overline{M} \cap V(G')$ is an independent set, then $G'$ is isomorphic to one of the splitting structures depicted in Figure~\ref{fig:K1_5}.
\end{lemma}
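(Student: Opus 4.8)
The proof follows the template of Lemmas~\ref{lem:K3}, \ref{lem:P3} and \ref{lem:P2}: fix a non-monitored vertex, use the absence of contradicting vertices (Algorithm~\ref{alg:pds2} has stopped) together with Lemma~\ref{lem:chainsaw2} to force the facial structure around it, then extend the description outward until all of $G'$ is determined. So I would fix $x_1 \in \oM \cap V(G')$. Since $\oM \cap V(G')$ is an independent set, every neighbour of $x_1$ lies in $M$, and by Observation~\ref{obs:degreeMnew} each such neighbour has exactly two or three non-monitored neighbours, hence at least one besides $x_1$. The first step is the usual one: rule out that a neighbour $u$ of $x_1$ has a non-monitored neighbour ``off the picture''. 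If $u$ has exactly two non-monitored neighbours $\{x_1,y\}$, then adding $x_1$ to $S$ makes $u$ propagate to $y$; so the set consisting of $x_1$, of all such vertices $y$, and of whatever their monitoring cascades into, has size at most three.

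Using this, the key step is to prove that there are exactly two further non-monitored vertices $x_2,x_3$ at distance two from $x_1$, playing the role of Claim~1 in the proof of Lemma~\ref{lem:P2}. If three pairwise-distinct non-monitored vertices other than $x_1$ were each joined to a neighbour of $x_1$ by a monitored vertex having no other non-monitored neighbour, then adding $x_1$ would monitor four new vertices, a contradiction; the remaining possibilities (when an intermediate monitored vertex has three non-monitored neighbours, or when a neighbour of $x_1$ sees several of the $x_i$) are dispatched by applying Lemma~\ref{lem:chainsaw2} to produce, from such a vertex, a neighbour adjacent to exactly one of the $x_i$, which then needs a second non-monitored neighbour and yields again a contradicting vertex. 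Note that $\{x_1,x_2,x_3\}$ is automatically independent, by the hypothesis on $\oM \cap V(G')$. One then shows that every neighbour of $x_1$ is adjacent to $x_1$ and to at least one of $x_2,x_3$, since otherwise Observation~\ref{obs:degreeMnew} would supply a non-monitored vertex outside $\{x_1,x_2,x_3\}$ near $x_1$.

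With $C=\{x_1,x_2,x_3\}$ identified, the neighbourhood cycle of $x_1$ splits into an arc whose vertices are adjacent to $x_2$ and an arc whose vertices are adjacent to $x_3$, the two vertices on the common boundary of the arcs being adjacent to both. Repeatedly invoking Lemma~\ref{lem:chainsaw2} around $x_1$, $x_2$ and $x_3$, precisely as in the facial-triangle arguments of Lemmas~\ref{lem:P2} and \ref{lem:P3}, one checks that all triangles of the core are facial, that inserting an extra vertex anywhere would either leave some vertex with a single non-monitored neighbour (contradicting Observation~\ref{obs:degreeMnew}) or create a facial octahedron or a b-vertex excluded by hypothesis, and that the two regions cut off by $C$ are triangulations $G_1$ and $G_2$ whose exterior vertices are all monitored. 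A final branching according to which auxiliary common neighbours provided by Lemma~\ref{lem:chainsaw2} coincide (mirroring the ``$y=y'$ or not'' distinction in the proof of Lemma~\ref{lem:P3}) yields exactly the two structures of Figure~\ref{fig:K1_5}, namely configurations (f) and (g) of Figure~\ref{fig:five_cases}.

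The main obstacle is the bookkeeping in the middle step: establishing that exactly three non-monitored core vertices occur and that the neighbourhood of $x_1$ decomposes cleanly into two arcs, without an explosion of sub-cases. As in Lemma~\ref{lem:P2}, the delicate point is to apply the ``at most three newly monitored vertices'' hypothesis to the right vertex each time (sometimes $x_1$ itself, sometimes a monitored vertex with two non-monitored neighbours) and to keep track of which common neighbours supplied by Lemma~\ref{lem:chainsaw2} are forced to coincide. Once the core is pinned down, identifying the associated triangulations and verifying that their exterior vertices are monitored is routine.
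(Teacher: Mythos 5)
Your overall plan (pin down a three-vertex non-monitored core, then use Lemma~\ref{lem:chainsaw2} and the absence of contradicting vertices to force the facial structure) matches the paper's endgame, but it skips the step that is genuinely hard in the independent-set case and cannot be done by the local arguments you borrow from Lemmas~\ref{lem:P2} and~\ref{lem:P3}. In those lemmas, adding one vertex of the non-monitored component to $S$ already monitors two or three vertices for free, so the ``at most three newly monitored vertices'' hypothesis bites immediately. Here, adding a non-monitored vertex $x_1$ to $S$ monitors only $x_1$ itself unless some neighbour of $x_1$ has \emph{exactly two} non-monitored neighbours: a neighbour with three non-monitored neighbours does not propagate once $x_1$ is added. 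Consequently, a configuration in which \emph{every} monitored vertex adjacent to $\oM'$ has exactly three non-monitored neighbours is locally consistent with the stopping condition, and your dispatching of ``the remaining possibilities'' via Lemma~\ref{lem:chainsaw2} does not close it: the neighbour $w$ produced by that lemma merely needs \emph{some} second non-monitored neighbour, which can be yet another vertex of $\oM'$, so no contradicting vertex arises and the web of degree-3 attachments can a priori extend indefinitely.

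The paper's proof therefore begins with a global claim (its Claim~1): some vertex of $M'$ has only two neighbours in $\oM'$. This is established by a counting argument, not a local one: one builds an auxiliary planar graph $H$ on $\oM'$, joining two non-monitored vertices whenever they share a monitored neighbour, shows (after separately handling the case of two monitored vertices with three common non-monitored neighbours, via a minimal $K_{2,3}$) that $H$ would have minimum degree at least $6$, and contradicts Euler's formula. A companion Claim~2 then propagates the ``exactly two non-monitored neighbours'' property to all monitored vertices sharing a non-monitored neighbour with such a vertex, and only after that does the local analysis you describe (splitting the neighbourhood of $x_1$ into arcs, branching on whether $x$ has degree $3$ or all of $x,y,z$ have degree at least $4$) go through. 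Without an argument of this kind, your middle step --- that exactly two further non-monitored vertices appear and that the intermediate monitored vertices each have exactly two non-monitored neighbours --- is unsupported, so the proposal has a genuine gap.
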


\begin{figure}
  \centering
    \includegraphics[scale=0.7]{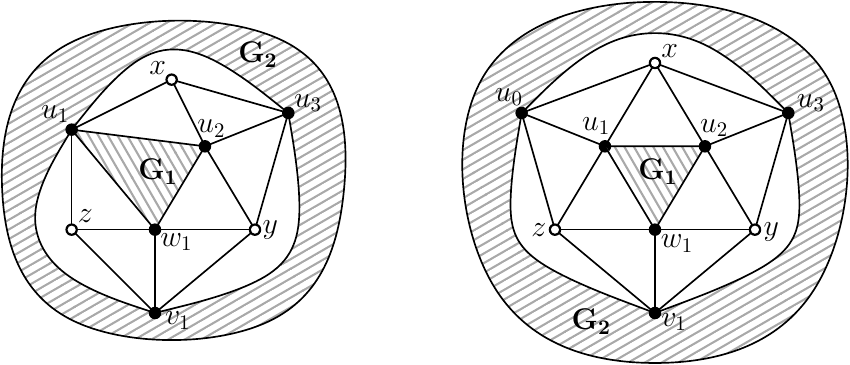}
    \caption{The two possible configurations of an induced triangulation $G'$ if $\overline M \cap V(G')$ is an independent set. $G_1$ and $G_2$ are triangulations. All other triangles of the drawings are facial.}
    \label{fig:K1_5}
\end{figure}

\begin{proof}
  Let $G$ satisfy the assumptions of the lemma. In this proof, we denote $M'$ the set $M \cap V(G')$, and $\overline{M}'$ the set $\overline{M} \cap V(G')$.

  \medskip
  {\bf Claim 1.}
  There exists a vertex $u \in M'$ with only two neighbors in $\overline{M}'$.

  \smallskip
    {\it Proof.}
        Suppose by way of contradiction that every vertex in $M'$ with non-monitored neighbors has exactly three neighbors in $\overline{M}'$. Note first that no two vertices in $M'$ have exactly two common neighbors in $\overline{M}'$, or they would be contradicting vertices. Hence, they share either one or three such neighbors.

  Suppose first that all vertices in $M'$ with a common neighbor in $\overline{M}'$ have exactly one such common neighbor.
  We define an auxiliary graph $H$ as follows: the vertices of $H$ are the vertices in $\overline{M}'$, and two vertices in $H$ are adjacent if they have a common neighbor in $G'$. Observe that from a planar drawing of $G'$, we can easily build a planar drawing of $H$: we keep the position of the vertices, and for each edge $(uv)$ in $H$, $u$ and $v$ have a common neighbor $x$ in $G'$ and we can have the edge $(uv)$ follow closely the edges $(ux)$ and $(xv)$ (that would not create crossings since $N_{G'}(x)\cap \overline{M}' =3$).
  By our assumption that no two vertices in $M'$ have more than one common neighbor in $\overline{M}'$, the degree of a vertex in $H$ is  precisely twice its degree in $G'$. Since every vertex in $\overline{M}'$ has degree at least 3 and every vertex in $M'$ has three neighbors in $\overline{M}'$, that implies that $H$ has minimum degree at least 6. But this contradicts Euler's formula for planar graphs.

  So there are at least two vertices $u$ and $v$ in $M'$ with three common neighbors in $\overline{M}'$, say $x_1$, $x_2$ and $x_3$, forming a subgraph isomorphic to a $K_{2,3}$. 
  Consider such five vertices, such that the subgraph $G''$ induced by the vertices within the outer face of the $K_{2,3}$ does not contain the same structure. Denote $x_1$ and $x_3$ the exterior vertices (i.e., $x_2$ is inside the cycle $(x_1ux_3v)$). Since $x_1$, $x_2$ and $x_3$ are pairwise non adjacent, there is another neighbor $w$ of $x_2$ in $M'$, which has at least two other neighbors in $\overline{M}'$. By minimality of the selected $K_{2,3}$, now all vertices in $G''$ that belong to $M'$ and share a neighbor in $\overline{M}'$ do share exactly one. Building the graph $H$ on $G''$ the same way as above, we get a planar graph $H$ where every vertex has degree at least six except for $x_1$, $x_2$ and $x_3$ that have respectively degree at least 2, 4 and 2. Therefore, we get that the sum of the degrees of the vertices in $H$ is at least $6|V(H)|-10$, again a contradiction with Euler's formula. This concludes the proof.
    {\footnotesize ($\square$) }

  \medskip
  {\bf Claim 2.}
  If a vertex $u$ of $M'$ has degree 2 in $\overline{M}'$, then all the vertices of $M'$ sharing a neighbor in $\overline{M}'$ with $u$ also have degree 2 in $\overline{M}'$.

        \smallskip
        {\it Proof.}
        Let $u_1$ be a vertex of $M'$ with two neighbors $v_1, v_2$ in $\overline{M}'$. Suppose that there exists a vertex $u_2$ in $M'$ adjacent to $v_1$ or $v_2$ (say $v_1$) and with degree 3 in $\overline{M}'$. If $u_2$ is not adjacent to $v_2$, then $u_2$ is a contradicting vertex.
        So assume $u_2$ is also adjacent to $v_2$ and let $v_3$ be the third neighbor of $u_2$ in $\overline{M}'$.
        Applying Lemma~\ref{lem:chainsaw2} to vertex $u_1$, let $t$ be a vertex adjacent to only one of $v_1$ and $v_2$, say $v_1$, and such that $[u_1 v_1 t]$ is facial. There is another vertex adjacent to $t$ in $\overline{M}'$. If this vertex is not $v_3$, then $t$ is a contradicting vertex (as $u_1$ propagates to $v_2$ then $u_2$ to $v_3$). So $t$ has only two neighbors in $\overline{M}'$, $v_1$ and $v_3$.

        Now, every other vertex in the graph is separated from $v_1$, $v_2$ or $v_3$ by one of the three separating cycles $(u_1v_1u_2v_2)$, $(tv_1u_2v_3)$ and $(tu_1v_2u_2v_3)$. The monitored vertex $u_2$ necessarily has more neighbors (by Observation~\ref{lem:propagation}). Suppose there is a neighbor of $u_2$ in the cycle $(tu_1v_2u_2v_3)$. Then there is a neighbor $w$ to $u_2$ and $v_2$ forming a face $[u_2v_2w]$. If $w$ is not adjacent to $v_3$, then $w$ has some extra neighbors in $\overline{M}'$, and is a contradicting vertex ($u_1$ propagates to $v_1$ then $u_2$ to $v_3$). If $w$ is also adjacent to $v_3$, by Lemma~\ref{lem:chainsaw2} it has a neighbor adjacent to only one of $v_2$ and $v_3$, also separated from $v_1$ by the cycle $(tu_1v_2u_2v_3)$, and the same argument applies. The same arguments apply also if $u_2$ has neighbors in the other separating cycles.	
        Thus there is no vertex adjacent to $v_1$ or $v_2$ with degree 3 in $\overline{M}'$.
  This concludes the proof.
        {\footnotesize ($\square$) }

    \medskip

                Let $u_1$ be a vertex of $M'$ with exactly two neighbors in $\overline{M}'$, denoted $x$ and $z$.
                By Lemma~\ref{lem:chainsaw2}, there is a neighbor of $u_1$ adjacent to only one of $x$ and $z$,
                say $u_2$ is adjacent to $x$ but not $z$ (and $[xu_1u_2]$ is facial).
                By Claim~2, $u_2$ has only one other neighbor in $\overline{M}'$, denote it $y$.
         Note that we now have the property
    {\bf (P)}: any neighbor $v \in M'$ of $x$, $y$ or $z$ has at least two neighbors in $\{x,y,z\}$ and is not adjacent to any vertex of $\overline{M}' \setminus \{x,y,z\}$. Otherwise $v$ would be a contradicting vertex.
                Consider the two paths from $u_2$ to $z$ that partition $N(u_1)$.
                Let $w_1$ be the last vertex before $z$ in the path that does not go through $x$ (i.e., $[u_1w_1z]$ is facial and $w_1$ is not adjacent to $x$). Since $u_2$ is not adjacent to $z$, then $w_1 \neq u_2$. By the above property (P), $w_1$ is adjacent to $y$. Moreover, $y$ may not have a neighbor separated from $x$ and $z$ by $(u_1u_2yw_1)$, so $[u_2w_1y]$ is a facial triangle.

          Suppose first that $x$ is of degree three, and let $u_3$ denote its third neighbor, adjacent to both $u_1$ and $u_2$. It has one other neighbor among $y$ and $z$, say $y$. Observe that $[u_2u_3y]$ is necessarily a facial triangle, and that by Claim~2, $u_3$ is not adjacent to $z$. Since $z$ is of degree at least $3$, it has a neighbor $v_1 \neq u_3$ such that $[u_1v_1z]$ is facial. By property (P), $v_1$ is adjacent to $y$. Now $z$ has no other neighbor within the cycle $(v_1yw_1z)$, or it would be a common neighbor to $y$ and $z$, but applying Lemma~\ref{lem:chainsaw2} would lead to a contradiction. So $w_1$ is adjacent to $v_1$, and  $[v_1w_1y]$ and $[v_1w_1z]$ are facial triangles. In addition, $y$ cannot have a neighbor separated from $x$ and $z$ by $(u_1u_3yv_1)$ so $[u_3 v_1y]$ is a facial triangle. Thus we are in the first configuration of Figure~\ref{fig:K1_5}.

  Assume now that each of $x$, $y$ and $z$ have degree at least 4.
  Let $u_3$ form a facial triangle with $x$ and $u_2$. If $u_3$ is adjacent to $z$, then the fourth neighbor of $x$ is also adjacent to $z$. By Lemma~\ref{lem:chainsaw2}, it has a neighbor adjacent to only one of $x$ and $z$, which is separated from $y$ by $(u_1xu_3z)$, a contradiction. So $u_3$ is adjacent to $y$ forming a facial triangle $[u_2u_3x]$. By the same argument, we infer the existence of $u_0$ and $v_1$, common neighbors of $x$ and $z$ and of $y$ and $z$ respectively, and that the corresponding triangles are facial. If $x$ were of degree $5$, then we would get similarly a contradiction applying Lemma~\ref{lem:chainsaw2} on $u_3$ or $u_0$. By the same reasoning on $y$ and $z$, we obtain the second configuration of Figure~\ref{fig:K1_5}, which concludes the proof of Lemma~\ref{lem:P1}.
\end{proof}

The results from the four Lemmas~\ref{lem:K3}, \ref{lem:P3}, \ref{lem:P2} and \ref{lem:P1} conclude the section: after Algorithm~\ref{alg:pds2}, each induced triangulation of $G$ is isomorphic to one of the graphs depicted in Figure~\ref{fig:five_cases}.

\nocite{*}
\bibliographystyle{abbrvnat}
\bibliography{biblio_power_dom}
\label{sec:biblio}

\end{document}